\documentclass[11pt,a4paper]{article}                        

\usepackage{amsmath, amssymb, bm, enumitem, amsthm}
\usepackage{chngcntr}
\usepackage{enumitem, amsthm, xcolor, subcaption}
\usepackage{caption}
\usepackage[a4paper, top=30mm, bottom=30mm, left=25mm, right=25mm]{geometry}
\usepackage{grffile}
\usepackage{graphicx}
\usepackage{multirow}   
\usepackage{siunitx}
\usepackage{rotating}    
\usepackage{placeins}
\usepackage{titlesec}
\usepackage{appendix}
\usepackage{booktabs}        
\usepackage{threeparttable}  
\usepackage{xcolor,colortbl}   
\usepackage{array}            
\usepackage{hyperref}   
\usepackage{etoolbox}
\usepackage{aliascnt}
\usepackage{cleveref}
\usepackage{chngcntr}
\usepackage{soul}     
\usepackage[backend=bibtex, style=authoryear,
  sorting=nyt,          
  date=year,            
  giveninits=true,      
  maxcitenames=2,       
  maxbibnames=999,      
  uniquename=init,      
  uniquelist=false,     
  dashed=false,         
  url=false, doi=false, isbn=false, eprint=false 
]{biblatex}

\DeclareFieldFormat{url}{Available from \url{#1}}
\DefineBibliographyStrings{english}{urlseen={accessed on}}

\DeclareFieldFormat[article,incollection,inproceedings,book,inbook,thesis,report,online,unpublished]{title}{#1}
\DeclareNameAlias{author}{family-given}

\AtBeginBibliography{%
  \DeclareFieldFormat[article]{journaltitle}{\mkbibemph{#1}}%
  \DeclareFieldFormat[book]{title}{\mkbibemph{#1}}%
  \DeclareFieldFormat[book]{subtitle}{#1}%
  \DeclareFieldFormat[incollection,inbook,inproceedings,collection,proceedings]{booktitle}{\mkbibemph{#1}}%
}
\renewbibmacro{in:}{}
\DefineBibliographyStrings{english}{
  in = {},       
  pages = {},    
  page = {},
}
\DeclareFieldFormat{pages}{#1}
\DeclareFieldFormat{page}{#1}
\DeclareNameAlias{editor}{author}

\DeclareCiteCommand{\citelabelyearonly}
{\usebibmacro{prenote}}
{\usebibmacro{cite:labelyear+extrayear}}
{}
{\usebibmacro{postnote}}

\newlist{assumptionlist}{enumerate}{1}
\setlist[assumptionlist]{label=\textbf{A\arabic*.},ref=A\arabic*}
\crefname{assumptionlisti}{assumption}{assumptions}   
\Crefname{assumptionlisti}{Assumption}{Assumptions}   
\newlist{assumprime}{enumerate}{1}
\setlist[assumprime]{label=\textbf{$\text{A2}^\prime.$},ref=$\text{A2}^\prime$}
\crefalias{assumprimei}{assumptionlisti}
\newlist{assumpcons}{enumerate}{1}
\setlist[assumpcons]{label=\textbf{C.},ref=C}
\crefalias{assumpconsi}{assumptionlisti}

\theoremstyle{plain}

\renewcommand{\thetheorem}{\arabic{theorem}}

\newaliascnt{proposition}{theorem}
\newtheorem{proposition}[proposition]{Proposition}
\aliascntresetthe{proposition}

\newaliascnt{corollary}{theorem}

\aliascntresetthe{corollary}

\newaliascnt{lemma}{theorem}

\aliascntresetthe{lemma}

\theoremstyle{remark}

\crefname{theorem}{Theorem}{Theorems}
\Crefname{theorem}{Theorem}{Theorems}

\crefname{proposition}{Proposition}{Propositions}
\Crefname{proposition}{Proposition}{Propositions}

\crefname{corollary}{Corollary}{Corollaries}
\Crefname{corollary}{Corollary}{Corollaries}

\crefname{lemma}{Lemma}{Lemmas}
\Crefname{lemma}{Lemma}{Lemmas}

\crefname{remark}{remark}{remarks}
\Crefname{remark}{Remark}{Remarks}

\newaliascnt{appsec}{section}
\crefname{appsec}{Appendix}{Appendices}
\Crefname{appsec}{Appendix}{Appendices}
\makeatletter
\pretocmd{\appendix}{%
  \renewcommand{\thesection}{\Alph{section}}%
  \renewcommand{\thesubsection}{\Alph{section}.\arabic{subsection}}%
  \crefalias{section}{appsec}%
  \crefalias{subsection}{appsec}%
  \titleformat{\section}
    {\normalfont\Large\bfseries}
    {Appendix~\thesection}{1em}{}%
}{}{%
  \PackageWarning{appendix}{patching \appendix failed (unexpected)}%
}
\makeatother

\DeclareCaptionLabelFormat{suppl}{#1~#2}
\newcommand{\beginsupplement}{%
  \setcounter{table}{0}%
  \setcounter{figure}{0}%
  \renewcommand{\thetable}{S\arabic{table}}%
  \renewcommand{\thefigure}{S\arabic{figure}}%
  \captionsetup[table]{name=Supplementary Table, labelformat=suppl}%
  \captionsetup[figure]{name=Supplementary Figure, labelformat=suppl}%
  \crefname{table}{Table}{Tables}%
  \Crefname{table}{Table}{Tables}%
  \crefname{figure}{Figure}{Figures}%
  \Crefname{figure}{Figure}{Figures}%
}
\newcommand{\HeadingFont}{\normalfont\bfseries}

\titleformat{\section}      {\HeadingFont\Large}{\thesection}{1em}{}
\titleformat{\subsection}   {\HeadingFont\large}{\thesubsection}{1em}{}
\titleformat{\subsubsection}{\HeadingFont\normalsize}{\thesubsubsection}{1em}{}

\begin{document}

\begin{center}
{\LARGE
A Direct Variance Estimation (DiVE)\\
for Meta-Analysis of Median Differences
}
\end{center}
\vspace{10mm}

\begin{center}
{\large
Tadahisa Okuda$^{1,2}$, Masataka Taguri$^{1}$, Kenichi Hayashi$^{3}$\\
\vspace{3mm}
$^{1}$Department of Health Data Science, Tokyo Medical University\\
$^{2}$Human Health Sciences, Kyoto University Graduate School of Medicine\\
$^{3}$Department of Mathematics, Keio University
}
\end{center}
\vspace{5mm}

{
\noindent\textbf{Abstract}\\
Meta-analyses of two-group studies that report median differences typically rely on methods that require, in addition to the median difference and sample size, summary measures of dispersion such as quartiles or ranges.
Studies that do not report such statistics are often excluded from the meta-analysis.
Existing two-stage approaches first estimate the asymptotic variance of the median difference within each study under parametric assumptions, and then combine these study-specific estimates to obtain the pooled median difference and its variance.
We propose Direct Variance Estimation (DiVE), a method that directly estimates the variance of the pooled difference using only study-level median differences and their sample sizes.
A comprehensive simulation study across a wide range of distributional scenarios shows that DiVE performs comparably to or better than conventional two-stage methods, with clear advantages when the number of studies is small.
A re-analysis of published meta-analyses demonstrates that DiVE enables the inclusion of studies lacking dispersion statistics, leading to a more comprehensive and potentially less biased synthesis of evidence.
}

{
\leftskip=15mm \rightskip=15mm
\vskip 2mm
\noindent\small\textbf{KEYWORDS}\\
Meta-analysis, Median difference, Distribution-free method, Incomplete reporting, Skewed data, Variance estimation
}

\vspace{1em}
\section{Introduction}\label{sec:intro}
Evidence-based medicine (EBM) relies on meta-analysis to synthesize results from multiple independent studies and to provide more precise and generalizable estimates of treatment effects \parencite{Sutton2008-rz,Page2021-wh}.
When individual trials report apparently conflicting results, meta-analysis offers an overarching quantitative judgment and often underpins the formulation of clinical guidelines \parencite{Higgins2019-mn,Higgins2024-eb}.
The credibility of any meta-analysis hinges on (i) the comprehensive inclusion of all relevant studies and (ii) the accurate extraction and appropriate synthesis of study-level data.

Many meta-analyses in medicine pool differences in continuous outcomes between two groups.
When outcomes are approximately symmetric, investigators routinely report the sample mean and its standard deviation (SD).
However, outcomes such as clinical laboratory values, assessment scales, healthcare costs, or length of stay are frequently skewed.
The sample mean is then sensitive to extreme values and may not adequately represent the central tendency.
For such skewed data, the median is widely recommended because of its robustness to outliers, and an increasing number of primary studies now report the median difference as the treatment effect \parencites[e.g.,][]{Frokjaer2006-hh,Nagasako2009-hb,Calatayud2010-cv,Haycock2010-dk,White2011-tz,Shore2016-la,Desender2017-bp,Dmytriw2021-zz,Katzenschlager2021-ol}.

For mean differences, the conventional meta-analytic framework is inverse-variance weighting (IVW), where the pooled estimate is a weighted average of study-specific differences, with weights proportional to the inverses of their within-study variances \parencite{Sanchez-Meca1998-ss,Marin-Martinez2010-sd}.
In contrast, variances (or standard errors; SEs) for median differences are rarely reported, so the IVW machinery cannot be applied directly.
Two broad strategies have been proposed to address the absence of variance estimates.
First, transformation-based methods convert reported quantiles into surrogate means and SDs under an approximate normality assumption (possibly on a transformed scale), after which a conventional IVW meta-analysis is performed \parencite{Hozo2005-ik,Bland2014-qb,Wan2014-yu,Kwon2015-se,Luo2018-ig}.
These methods can work well when outcomes are roughly symmetric, but they rely on strong distributional assumptions and require several summary statistics beyond the median (e.g., quartiles or ranges).
Second, the quantile-estimation (QE) method relaxes the normality assumption by fitting several candidate parametric distributions (e.g., Weibull, log-normal) to the reported quantiles and then using the fitted density at the median to estimate the asymptotic variance of each study's sample median before applying IVW \parencite{McGrath2020-gt}.
QE thus weakens the parametric restriction of transformation-based approaches, but it still relies on explicit distributional modeling and requires the availability and accurate reporting of multiple quantiles per study.
Additional details on QE appear in~\Cref{sec:meth_qe}.

Although these methods are widely used, they face two critical limitations.
First, they require a correct specification of the outcome distribution; misspecification can bias variance estimates and lead to poor interval coverage.
Second, they require summary statistics beyond the median and sample size, which often leads to the exclusion of otherwise eligible studies.
This pattern of incomplete reporting has been widely documented \parencite{Wiebe2006-if,Chowdhry2016-pq}, underscoring the need for approaches that rely only on medians and sample sizes.
Specifically, \textcite{Weir2018-vv} showed that in a systematic review of early supported discharge (ESD) interventions, interquartile ranges (IQRs) were reported in only one-quarter of studies that reported medians.
Consistent with this evidence, our empirical application in \Cref{sec:exp} revisits the ESD dataset synthesized in the Cochrane review by \textcite{Langhorne2017-tw}, as summarized by \textcite{Weir2018-vv}.
In this dataset, many trials report only medians and sample sizes, omitting IQRs.
This pattern of incomplete reporting makes transformation-based analyses or QE analyses infeasible without excluding a substantial subset of otherwise eligible trials, thereby reducing precision and potentially introducing bias in the synthesis \parencite{Saldanha2020-va}.

To overcome these limitations, we introduce Direct Variance Estimation (DiVE), a distribution-free framework for meta-analysis when only study-level median differences and sample sizes are available.
The key idea is to separate the estimation of the pooled effect from the estimation of its variance.
We define the pooled point estimator as a weighted average of study-level summaries using any pre-specified weights; DiVE provides a closed-form estimator of the variance of the pooled estimator, based only on the chosen weights and the empirical dispersion of study-level effects around the pooled estimate.
It does not require estimating within-study variances or a between-study variance component, nor does it require specifying a parametric outcome family.
Under mild regularity conditions (formalized in~\Cref{sec:meth_dive}), the variance estimator is asymptotically unbiased for the variance of the pooled estimator.
Because the weights are treated as fixed, any defensible weighting scheme can be used; in this study, we consider using weights proportional to the study sample sizes.
By avoiding distributional assumptions and within-study variance estimation, DiVE allows inclusion of studies with incomplete reporting, increases robustness to model misspecification, and yields simple, closed-form inference.

The remainder of this paper is structured as follows.
\Cref{sec:meth} presents the statistical methods, including the standard meta-analytic framework, the existing QE method, and our proposed DiVE estimator.
\Cref{sec:sim} describes the design of the Monte Carlo simulation study and presents the results comparing DiVE with existing approaches.
\Cref{sec:exp} provides an application to a real-world dataset.
Finally, \Cref{sec:disc} concludes with a discussion of the findings and their broader implications.

\FloatBarrier
\vspace{1em}
\section{Methods}\label{sec:meth}
\subsection{Meta-analytic Framework}\label{sec:meth_base}
We consider $N$ studies indexed by $i=1,\ldots,N$.
For study $i$, let $Y_i$ denote the study-level effect on the analysis scale (e.g., a between-group difference).
Throughout this paper, we assume that each study-level effect has a finite second moment, so that $\operatorname{Var}(Y_i) < \infty$ for $i=1,\ldots,N$.
There are two traditional statistical models for meta-analysis of continuous outcomes: the fixed-effect (FE) model and the random-effects (RE) model.
The standard meta-analytic framework relies on the following assumptions that underpin these two models \parencite{Hedges1985-os,Borenstein2009-hf,Nakagawa2023-fm}:
\begin{assumptionlist}
  \item\label{mainass:A1}
  $Y_1,\ldots,Y_N$ are mutually independent.
  \item\label{mainass:A2}
  $E[Y_i]=\mu$ for $i=1,\ldots,N$.
  \item\label{mainass:A3}
  $\operatorname{Var}(Y_1),\ldots,\operatorname{Var}(Y_N)$ are known or can be consistently estimated under the assumed model.
\end{assumptionlist}
Here, $\mu$ denotes the overall effect on the analysis scale.
We next describe the FE and RE models in detail \parencite{Hedges1998-oq,Borenstein2010-im}.

The FE model assumes a common true effect $\mu$ across studies:
\begin{equation}
  Y_i = \mu + \varepsilon_i, \quad {\rm where} \quad E[\varepsilon_i]=0, \quad \operatorname{Var}(\varepsilon_i)=\sigma_i^2,
  \label{eq:fe_model}
\end{equation}
with the within-study errors $\varepsilon_1,\ldots,\varepsilon_N$ assumed mutually independent.
The IVW estimator and its variance are
\begin{equation}
  \hat{\mu}_{\text{FE}} = \frac{\ \sum_{i=1}^{N} w_i Y_i\ }{\sum_{i=1}^{N} w_i}, \quad \widehat{\operatorname{Var}}(\hat\mu_{\text{FE}}) = \frac{1}{\ \sum_{i=1}^{N} w_i\ }, \quad w_i = \frac{1}{\ \sigma_i^2\ }.
  \label{eq:fe_est}
\end{equation}
Under \Cref{mainass:A1,mainass:A2,mainass:A3}, if one additionally assumes normal errors ($\varepsilon_i\sim\mathcal{N}(0,\sigma_i^2)$) with known $\sigma_i^2$, then $\hat{\mu}_{\text{FE}}$ is the uniformly minimum variance unbiased estimator (UMVUE) of $\mu$ \parencite[§6]{Hedges1985-os}.

In contrast, the RE model allows the true effect to vary across studies.
\begin{equation}
  Y_i=\mu+\delta_i+\varepsilon_i, \quad {\rm where} \quad E[\delta_i]=0, \quad \operatorname{Var}(\delta_i)=\tau^2,
  \label{eq:re_model}
\end{equation}
with $\varepsilon_1,\ldots,\varepsilon_N,\delta_1,\ldots,\delta_N$ mutually independent.
If $\tau^2$ were known, the IVW estimator and its variance are
\begin{equation}
  \hat{\mu}_{\text{RE}} = \frac{\ \sum_{i=1}^{N} w_i^{\ast} Y_i\ }{\sum_{i=1}^{N} w_i^{\ast}}, \quad \widehat{\operatorname{Var}}(\hat{\mu}_{\text{RE}}) = \frac{1}{\ \sum_{i=1}^{N} w_i^{\ast}\ }, \quad w_i^{\ast} = \frac{1}{\ \sigma_i^2 + \tau^2\ }.
  \label{eq:re_est}
\end{equation}
In practice, since $\tau^2$ is unknown, it is often estimated by the method-of-moments estimator of \textcite{DerSimonian1986-gc}:
\begin{equation}
  \hat{\tau}^2 = \max \left\{ 0,\ \frac{Q_{w} - (N - 1)}{\ \sum_{i=1}^{N} w_i - \frac{\sum_{i=1}^{N} {w_i}^2}{\sum_{i=1}^{N} w_i}\ } \right\},
  \quad  Q_{w} = \sum_{i=1}^{N} w_i \left( Y_i - \frac{\ \sum_{\ell=1}^{N} w_{\ell} Y_{\ell}\ }{\sum_{\ell=1}^{N} w_{\ell}} \right)^2.
  \label{eq:tau}
\end{equation}
Under \Cref{mainass:A1,mainass:A2,mainass:A3}, and assuming normality ($\varepsilon_i\sim\mathcal{N}(0,\sigma_i^2)$, $\delta_i\sim\mathcal{N}(0,\tau^2)$) with known $\sigma_i^2$ (for $i=1,\ldots,N$) and $\tau^2$, $\hat{\mu}_{\text{RE}}$ is the UMVUE of $\mu$ \parencite{Viechtbauer2005-kl}.

The remainder of this section focuses on pooling sample median differences across independent studies.

\subsection{Existing Approach: The Quantile-Estimation (QE) Method}\label{sec:meth_qe}
As a primary existing approach for meta-analysis of median differences, we describe the QE method proposed by \textcite{McGrath2020-gt}.
Let $M_{ij}$ denote the sample median for each group $j\ (j = 1, 2)$ in study $i\ (i = 1,\ldots,N)$ and let $Y_i = M_{i1} - M_{i2}$ be the sample median difference.
The total sample size for study $i$ is $n_i = n_{i1} + n_{i2}$, with group-specific sample sizes $n_{i1}$ and $n_{i2}$.
QE requires several statistics---typically the first quartile $q_{ij}^{(1)}$, the third quartile $q_{ij}^{(3)}$, and the sample size---that are often reported with the median and can be used to estimate $\sigma_i^2$ (for $i=1,\ldots,N$).
The original QE algorithm also accommodates settings where the minimum and maximum values are reported in place of, or in addition to, the quartiles.
The core of the method involves fitting a parametric distribution $P_{ij}$ (e.g., normal, Weibull, log-normal, or gamma) to these empirical quantiles by estimating parameters $\theta_{ij}$ to minimize the sum of squared differences between theoretical and empirical quantiles.
The estimator $\hat{\theta}_{ij}$ for the parameter vector $\theta_{ij}$ is thus obtained by minimizing the loss function:
\begin{equation}
  S_{P_{ij}} (\theta_{ij}) = \left\{ F^{-1}(0.25|\theta_{ij}) - q_{ij}^{(1)} \right\}^2 + \left\{ F^{-1}(0.50|\theta_{ij}) - M_{ij} \right\}^2 + \left\{ F^{-1}(0.75|\theta_{ij}) - q_{ij}^{(3)} \right\}^2,
  \label{eq:qe_loss}
\end{equation}
where $F^{-1}(p|\theta_{ij})$ denotes the quantile function (the inverse of the cumulative distribution function) of the distribution $P_{ij}$, evaluated at probability $p$ for the given parameters $\theta_{ij}$.
In practice, this fitting is carried out for each candidate family of distribution, and the family achieving the smallest value of \Cref{eq:qe_loss} is selected separately for each group within each study.
The fitted density at the reported median under the selected family is then used to obtain the within-study variance.
Specifically, the within-study variance of the sample median difference is estimated by
\begin{equation}
  \sum_{j=1}^{2} \frac{1}{\ 4n_{ij} \hat{f}_{ij}(M_{ij})^2\ },
  \label{eq:qe_var}
\end{equation}
where $\hat{f}_{ij}(\cdot)$ denotes the density of the selected fitted distribution $P_{ij}$ under the estimated parameters $\hat{\theta}_{ij}$.
IVW is then applied using the within-study variances from \Cref{eq:qe_var} (and, if modeling heterogeneity, an estimate of $\tau^2$), to obtain a pooled difference and its variance.
We next introduce our distribution-free alternative.

\subsection{The Proposed Direct Variance Estimation (DiVE) Method}\label{sec:meth_dive}
While QE enables variance estimation from sample quantiles, its applicability is limited in practice.
First, as implemented, QE restricts fitting to two-parameter families (e.g., normal, log-normal, Weibull, gamma).
When only quartiles are available, more flexible parametric families (e.g., three-parameter models) can achieve perfect fits to the observed quantiles, making model selection infeasible.
Second, accuracy depends on the correct specification of the parametric family; misspecification can bias the estimated density at the median and consequently bias $\widehat{\operatorname{Var}}(Y_i)$.
Third, at least three quantiles per group (e.g., median with quartiles or with min-max values) are needed; studies reporting only medians cannot be analyzed and are excluded.

To overcome the distributional dependence and multi-quantile requirements of existing approaches, we propose DiVE, a distribution-free meta-analytic method.
DiVE pools the sample median differences using weights proportional to study sample sizes.
It then estimates the variance of the pooled estimator from their weighted dispersion, without requiring \Cref{mainass:A3}.
The method does not require the estimates of $\sigma_i^2$ (for $i=1,\ldots,N$); it requires only the reported pair $(Y_i,n_i)$ from each study and thus includes median-only studies.
Although we use sample size as a transparent proxy for precision, DiVE is not tied to a specific weighting scheme.
When more informative measures of precision are available (e.g., reliable within-study variances), they can be incorporated within the same framework.

Here we use weights proportional to study sample sizes, as the variance of a sample median decreases with sample size.
We define the pooled estimator using normalized sample-size weights $\tilde w_i^{\rm s}$ (for $i=1,\ldots,N$) as
\begin{equation}
  \hat{\mu}^{\rm s} = \sum_{i=1}^{N} \tilde w_i^{\rm s}\, Y_i,
  \qquad \tilde w_i^{\rm s} \;=\; \frac{n_i}{\sum_{\ell=1}^{N} n_\ell}.
  \label{eq:prep_estmu}
\end{equation}
Under independence across studies (\Cref{mainass:A1}), the variance of the pooled estimator satisfies
\begin{equation}
  \operatorname{Var}(\hat{\mu}^{\rm s}) = \sum_{i=1}^{N} (\tilde w_i^{\rm s})^2 \operatorname{Var}(Y_i).
  \label{eq:var_exact}
\end{equation}
To obtain a variance estimator, we consider the quantity $T = \sum_{i=1}^{N} k_i\,(Y_i-\hat{\mu}^{\rm s})^2$ and determine the constants $k_i$ (for $i=1,\ldots,N$) so that $E[T]$ matches the right-hand side of \Cref{eq:var_exact}; see \Cref{app:dive_derivation} for the algebra.
We then propose the direct estimator of $\operatorname{Var}(\hat{\mu}^{\rm s})$,
\begin{equation}
  \widehat{\operatorname{Var}}(\hat{\mu}^{\rm s}) \;=\; \sum_{i=1}^{N} \frac{h_i}{\,1+\sum_{\ell=1}^{N} h_\ell\,}\,(Y_i - \hat{\mu}^{\rm s})^2,
  \qquad h_i=\frac{(\tilde w_i^{\rm s})^2}{\,1-2\tilde w_i^{\rm s}\,},
  \label{eq:estvar_dive}
\end{equation}
which is well-defined when $\max_i \tilde w_i^{\rm s} < \tfrac{1}{\,2\,}$.

Under several assumptions, the proposed variance estimator has the following unbiasedness property.
\begin{proposition}\label{prop:var_exact_unbiased}
  Under \Cref{mainass:A1,mainass:A2} and the condition $\max_i \tilde w_i^{\rm s} < \tfrac{1}{2}$, we have
  \begin{equation*}
    E\!\left[\widehat{\operatorname{Var}}(\hat\mu^{\rm s})\right]
    \;=\;
    \operatorname{Var}(\hat\mu^{\rm s}).
  \end{equation*}
\end{proposition}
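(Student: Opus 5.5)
Write $w_i=\tilde w_i^{\rm s}$ and $v_i=\operatorname{Var}(Y_i)$, which is finite by the standing second-moment assumption. The plan is to compute $E[(Y_i-\hat\mu^{\rm s})^2]$ exactly under \Cref{mainass:A1,mainass:A2}. Then $E[T]$ for a generic $T=\sum_i k_i(Y_i-\hat\mu^{\rm s})^2$ is a linear combination of the $v_i$. Finally, we check that the choice $k_i=h_i/(1+\sum_\ell h_\ell)$ makes this combination equal to $\sum_i w_i^2 v_i$, which is $\operatorname{Var}(\hat\mu^{\rm s})$ by \Cref{eq:var_exact}.

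\textbf{Step 1 (centering).} By \Cref{mainass:A2} and $\sum_\ell w_\ell=1$, we have $E[\hat\mu^{\rm s}]=\mu$. Hence $E[Y_i-\hat\mu^{\rm s}]=0$, and the expected square is a variance. Write
\[
Y_i-\hat\mu^{\rm s}=(1-w_i)(Y_i-\mu)-\sum_{\ell\neq i}w_\ell(Y_\ell-\mu).
\]
By independence (\Cref{mainass:A1}),
\[
E\bigl[(Y_i-\hat\mu^{\rm s})^2\bigr]=(1-w_i)^2v_i+\sum_{\ell\neq i}w_\ell^2v_\ell=(1-2w_i)v_i+S,
\qquad S:=\sum_{\ell=1}^N w_\ell^2v_\ell .
\]
This identity is the heart of the argument. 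Note that $S$ is exactly the target $\operatorname{Var}(\hat\mu^{\rm s})$.

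\textbf{Step 2 (matching).} Substituting into $T$ gives
\[
E[T]=\sum_i k_i(1-2w_i)v_i+S\sum_i k_i .
\]
Take $k_i=c\,h_i$ with $h_i=w_i^2/(1-2w_i)$. The condition $\max_i w_i<\tfrac12$ guarantees $1-2w_i>0$, so each $h_i$ is well defined and nonnegative. Then $k_i(1-2w_i)=c\,w_i^2$, and therefore
\[
E[T]=cS+cS\sum_i h_i=cS\Bigl(1+\sum_\ell h_\ell\Bigr).
\]
Choosing $c=1/(1+\sum_\ell h_\ell)$ yields $E[T]=S=\operatorname{Var}(\hat\mu^{\rm s})$. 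With this $c$, $T$ is precisely the estimator in \Cref{eq:estvar_dive}, which proves the proposition.

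\textbf{Main obstacle.} There is no real difficulty, only bookkeeping. The one point to handle carefully is the rewriting in Step 1. The cross term $-2w_i(1-w_i)v_i$ combines with $(1-w_i)^2v_i$ and the remaining $w_i^2v_i$ term in such a way that the $i$-th term can be absorbed into $S$. This is what makes the coefficient of $v_i$ proportional to $(1-2w_i)$, and it is why dividing by $1-2w_i$ in $h_i$ makes everything collapse to a multiple of $S$. The argument never requires knowing the individual $v_i$, which is why \Cref{mainass:A3} is not needed.
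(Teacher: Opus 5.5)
Your proof is correct and follows essentially the same route as the paper: you use independence and $E[Y_i]=\mu$ to get $E[(Y_i-\hat\mu^{\rm s})^2]=(1-2w_i)v_i+\sum_\ell w_\ell^2 v_\ell$, then match coefficients against $\sum_i w_i^2 v_i$. The only cosmetic difference is that the paper solves the per-study system $k_i(1-2w_i)+Kw_i^2=w_i^2$ to get $k_i=(1-K)h_i$ and $K=H/(1+H)$, whereas you posit $k_i=c\,h_i$ and verify it; that suffices here because the estimator is given.
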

\noindent\textit{Proof.} See \Cref{app:dive_derivation}.\medskip

\noindent
For sample median differences, however, \Cref{mainass:A2} need not hold exactly.
To establish asymptotic unbiasedness of the DiVE variance estimator in this setting, write $n_{\min, i}:=\min(n_{i1},n_{i2})$ and $n_{\min}:=\min_i n_{\min, i}$, and replace \Cref{mainass:A2} by:
\begin{assumprime}
  \item\label{mainass:A2prime}
  For $i=1,\ldots,N$, $E[Y_i]\to\mu$ as $n_{\min, i}\to\infty$.
\end{assumprime}
We additionally assume the following regularity condition on the weight limits:
\begin{assumptionlist}[resume]
  \item\label{mainass:A4}
  For $i=1,\ldots,N$, the weights $\tilde w_i^{\rm s}=\tilde w_i^{\rm s}(n_{\min})$ converge to $\tilde w_i^{\rm s\ast}$ as $n_{\min}\to\infty$, where $\tilde w_i^{\rm s\ast}\ge 0$, $\sum_{i=1}^N \tilde w_i^{\rm s\ast}=1$, and $\max_i \tilde w_i^{\rm s\ast}<\tfrac{1}{2}$.
\end{assumptionlist}
For sample median differences, under standard regularity conditions, $E[Y_i] = \mu + O(n_{\min,i}^{-1})$ as $n_{\min, i}\to\infty$ \parencites{Bahadur1966-fy}[§2]{Lehmann1998-or}.
All asymptotics are taken with $N$ fixed and $n_{\min}\to\infty$.
Moreover, if each study-level median difference is consistent for the common effect (i.e., $Y_i\xrightarrow{p}\mu$ as $n_{\min}\to\infty$), then the pooled estimator is also consistent, $\hat\mu^{\rm s}\xrightarrow{p}\mu$, because it is a convex combination of $Y_1,\ldots,Y_N$.
Under these conditions, the proposed variance estimator has the following asymptotic unbiasedness property.

\begin{proposition}\label{prop:var_unbiased}
  Under \Cref{mainass:A1,mainass:A2prime,mainass:A4}, we have
  \begin{equation*}
    E\!\left[\widehat{\operatorname{Var}}(\hat\mu^{\rm s})\right]\;\longrightarrow\;\operatorname{Var}(\hat\mu^{\rm s}) \quad\text{as } n_{\min}\to\infty.
  \end{equation*}
\end{proposition}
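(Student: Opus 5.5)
The plan is to redo the exact moment calculation behind \Cref{prop:var_exact_unbiased} without assuming a common mean. This isolates an explicit bias term, and I would then show that this term vanishes under \Cref{mainass:A2prime,mainass:A4}.

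Write $\mu_i:=E[Y_i]$, $\sigma_i^2:=\operatorname{Var}(Y_i)$, $w_i:=\tilde w_i^{\rm s}$ and $\bar\mu_w:=\sum_\ell w_\ell\mu_\ell=E[\hat\mu^{\rm s}]$.

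\textbf{Step 1: well-definedness.} By \Cref{mainass:A4}, $\max_i w_i\to\max_i \tilde w_i^{\rm s\ast}<\tfrac12$. Hence for all sufficiently large $n_{\min}$ we have $\max_i w_i\le \tfrac12-\eta$ for some fixed $\eta>0$, and the estimator \Cref{eq:estvar_dive} is well defined. Moreover, $h_i=w_i^2/(1-2w_i)\to h_i^\ast:=(\tilde w_i^{\rm s\ast})^2/(1-2\tilde w_i^{\rm s\ast})$. So the coefficients $c_i:=h_i/(1+\sum_\ell h_\ell)$ converge and stay bounded, say by $1$.

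\textbf{Step 2: exact expectation with unequal means.} For each $i$, decompose
\[
E\bigl[(Y_i-\hat\mu^{\rm s})^2\bigr]=\operatorname{Var}(Y_i-\hat\mu^{\rm s})+(\mu_i-\bar\mu_w)^2 .
\]
By \Cref{mainass:A1},
\[
\operatorname{Var}(Y_i-\hat\mu^{\rm s})=(1-2w_i)\sigma_i^2+\sum_\ell w_\ell^2\sigma_\ell^2 .
\]
This variance part does not involve the means at all. It is therefore exactly the quantity handled in the proof of \Cref{prop:var_exact_unbiased}, where $\sum_i c_i\operatorname{Var}(Y_i-\hat\mu^{\rm s})=\sum_i w_i^2\sigma_i^2$. (Check: $c_i(1-2w_i)=w_i^2/(1+H)$ with $H=\sum_\ell h_\ell$, and adding $\sum_i c_i\cdot\sum_\ell w_\ell^2\sigma_\ell^2=\tfrac{H}{1+H}\sum_\ell w_\ell^2\sigma_\ell^2$ gives the total.) Consequently,
\[
E\bigl[\widehat{\operatorname{Var}}(\hat\mu^{\rm s})\bigr]-\operatorname{Var}(\hat\mu^{\rm s})=\sum_{i=1}^N c_i(\mu_i-\bar\mu_w)^2 .
\]

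\textbf{Step 3: the bias term vanishes.} Since $n_{\min,i}\ge n_{\min}$, \Cref{mainass:A2prime} gives $\mu_i\to\mu$ for every $i$. Because the $w_\ell$ form a convex combination, $|\bar\mu_w-\mu|\le\max_\ell|\mu_\ell-\mu|\to0$. Then $(\mu_i-\bar\mu_w)^2\le 4\max_\ell(\mu_\ell-\mu)^2\to0$. With $N$ fixed and $c_i$ bounded by Step 1, the right-hand side tends to $0$, which proves the proposition.

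\textbf{Main obstacle.} The delicate point is how ``$\to$'' is interpreted, since $\operatorname{Var}(\hat\mu^{\rm s})$ itself tends to $0$. The argument above proves that the difference vanishes. For a relative statement I would add the standard rate $\mu_i-\mu=O(n_{\min,i}^{-1})$ quoted from \textcite{Bahadur1966-fy}. This makes the bias term $O(n_{\min}^{-2})$. The variance, by contrast, is of exact order $n^{-1}$ (positive densities at the medians, with $n_{i}/n_{\min,i}$ bounded). The ratio then tends to $1$, and I would state this as a remark.
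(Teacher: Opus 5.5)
Your proof is correct, and it takes a different and tighter route than the paper's.

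The paper expands $E[(Y_i-\hat\mu^{\rm s})^2]$ around the common target $\mu$ and carries the biases $b_i=E[Y_i]-\mu$ through the three terms. For instance, it writes $E[(Y_i-\mu)(\hat\mu^{\rm s}-\mu)]=\tilde w_i^{\rm s}\{\operatorname{Var}(Y_i)+b_i^2\}+\sum_{\ell\ne i}\tilde w_\ell^{\rm s}b_ib_\ell$. It then lets $b_i\to0$ and $\tilde w_i^{\rm s}\to\tilde w_i^{\rm s\ast}$ term by term. The limiting expression collapses to $\operatorname{Var}(\hat\mu^{\rm s\ast})$ with $\hat\mu^{\rm s\ast}=\sum_i\tilde w_i^{\rm s\ast}Y_i$, and the paper finally argues $\operatorname{Var}(\hat\mu^{\rm s})-\operatorname{Var}(\hat\mu^{\rm s\ast})\to0$.

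You instead center at $E[\hat\mu^{\rm s}]=\bar\mu_w$. The variance part is then matched exactly, at every sample size, by the algebra of \Cref{prop:var_exact_unbiased}. What remains is the exact identity $E[\widehat{\operatorname{Var}}(\hat\mu^{\rm s})]-\operatorname{Var}(\hat\mu^{\rm s})=\sum_i c_i(\mu_i-\bar\mu_w)^2$. This buys several things:
\begin{enumerate}
\item \Cref{prop:var_exact_unbiased} becomes the equal-means special case.
\item The finite-sample bias is visibly nonnegative, so in expectation the estimator can only overstate $\operatorname{Var}(\hat\mu^{\rm s})$.
\item From \Cref{mainass:A4} you only need $\max_i\tilde w_i^{\rm s}<\tfrac12$ eventually. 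The bound $\sum_i c_i=H/(1+H)<1$ controls the coefficients without any convergence of the weights.
\item You never need $\operatorname{Var}(Y_i)$ to stay bounded in $n$.
\end{enumerate}

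The paper's passage through the limit weights does tacitly need such control. It is required to get $(\tilde w_i^{\rm s}-\tilde w_i^{\rm s\ast})\operatorname{Var}(Y_i)\to0$ and $\operatorname{Var}(\hat\mu^{\rm s})-\operatorname{Var}(\hat\mu^{\rm s\ast})\to0$. Its intermediate limits such as ``$\to\operatorname{Var}(Y_i)$'' also have an $n$-dependent right-hand side. What the paper's route offers in return is an explicit picture of the limiting weights.

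Your point that an absolute limit is weak when $\operatorname{Var}(\hat\mu^{\rm s})\to0$ is well taken. However, the relative version you sketch relies on hypotheses outside the proposition: the $O(n_{\min,i}^{-1})$ bias rate, the exact order of $\operatorname{Var}(Y_i)$, and comparability of $n_{\min}$ with the sizes of the heavily weighted studies. It therefore belongs in a remark, as you propose, not in the proof.
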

\noindent\textit{Proof.} See \Cref{app:dive_proof}.\medskip

For interval estimation, as within-study sample sizes grow ($n_{\min}\to\infty$) each study-level median difference $Y_i$ is asymptotically normal; hence any fixed-weight linear combination $\sum_{i=1}^N \tilde w_i^{\rm s} Y_i$ is asymptotically normal with mean $\mu$ and variance $\operatorname{Var}(\hat\mu^{\rm s})$.
Accordingly, the Wald-type interval
\begin{equation}
  \hat\mu^{\rm s}\ \pm\ z_{1 - \alpha/2}\ \sqrt{\widehat{\operatorname{Var}}(\hat{\mu}^{\rm s})}
  \label{eq:prep_ci}
\end{equation}
is the natural choice; in small $N$ (e.g., $N<30$), replacing $z_{1-\alpha/2}$ with the $t$-quantile with $N\!-\!1$ degrees of freedom often improves finite-sample coverage \parencite{DAgostino1988-cf}.

When all studies have the same total sample size, as a special case, the weights satisfy $\tilde w_i^{\rm s} = 1/N$ and the pooled estimator equals the arithmetic mean of the study-level median differences,
\begin{equation}
  \hat{\mu}^{\rm s} = \frac{1}{N}\sum_{i=1}^N Y_i,
  \label{eq:mean_est}
\end{equation}
while \Cref{eq:estvar_dive} reduces to the familiar unbiased estimator of the variance of the sample mean,
\begin{equation}
  \widehat{\operatorname{Var}}(\hat{\mu}^{\rm s}) = \frac{1}{N(N\!-\!1)} \sum_{i=1}^{N} \left( Y_i - \hat{\mu}^{\rm s} \right)^2.
  \label{eq:unbias_estvar}
\end{equation}
This special case confirms the internal coherence of the proposed estimator.

We note that the proposed framework accommodates any pre-specified nonnegative study weights $\tilde w_i$ (for $i=1,\ldots,N$) provided they satisfy the no-dominance condition $\max_i \tilde w_i < \tfrac{1}{2}$.
In highly unbalanced situations where a single study receives a weight close to $\tfrac{1}{2}$, the factor $h_i=\tilde w_i^{2}/(1-2\tilde w_i)$ can become large, so the corresponding coefficient in the variance estimator can also become large, making the estimator increasingly sensitive to that study.

We emphasize that the validity of the estimator in \Cref{eq:estvar_dive} requires the study weights to be treated as fixed, that is, not depending on the observed summaries $Y_1,\ldots,Y_N$.
Under this requirement, alternative weight choices can also be considered.
For example, IVWs under an FE or RE model could be used if reliable study-specific variance estimates are available.
Because such variance estimates are often unavailable or difficult to justify in median-based meta-analysis settings, we adopt normalized sample-size weights in this paper.

\FloatBarrier
\vspace{1em}
\section{Simulation Study}\label{sec:sim}
We conducted a fully factorial Monte Carlo study to evaluate the finite-sample performance of DiVE in terms of bias, mean squared error (MSE), and confidence interval (CI) coverage across a range of distributional shapes, sample-size configurations, and heterogeneity levels, compared with the QE method.
All simulations were performed in R 4.5.1 with $R=1,000$ replicates for each simulation setting.

\subsection{Simulation design and data generation}\label{sec:sim_design}
We considered a fully factorial design crossing the number of studies $N\in\{10,30\}$, the per-group sample-size pattern (fixed versus varying), the design-average per-group sample size $n\in\{100,300\}$, the outcome distribution (normal, skew-normal, log-normal), and between-study heterogeneity $I^2\in\{0,25,50,75\}\%$ (\Cref{tab:sim_design}).
Here $I^2$ denotes the proportion (0--1 scale) of the total variance in the study-level effects attributable to between-study heterogeneity; for each target $I^2$, we set $\tau^2$ using the unequal-size correction of \textcite{Higgins2002-xr,Higgins2003-dz}, as detailed in the note to \Cref{tab:sim_design} and \Cref{app:metrics}.

\begin{table}[t]
  \centering
  \caption{Data generation parameters.}\label{tab:sim_design}
  \renewcommand{\arraystretch}{1.15}
  \begin{threeparttable}
      \begin{tabular}{@{}ll@{}}
        \toprule
        \textbf{Parameter}              & \textbf{Values} \\
        \midrule
        Number of studies ($N$)      & 10, 30 \\
        Per-group sample size ($n$)  & Fixed: 100, 300;\quad Varying: average per-group size of 100, 300 \\
        Outcome distribution         & normal, skew-normal, log-normal \\
        Heterogeneity ($I^{2}$)      & 0\% (None, $\tau^2=0$), 25\% (Low), 50\% (Medium), 75\% (High) \\
        \bottomrule
      \end{tabular}
        \begin{tablenotes}[flushleft]\footnotesize
          \item \textit{Note.} Across all outcome families, the data-generating parameters encode a nonzero difference in group medians; see Supplementary Table~S1 for the exact parameterizations.
          $I^2$ denotes the proportion (0--1 scale) of total variance due to between-study heterogeneity. Using the unequal-size correction~\parencite{Higgins2002-xr,Higgins2003-dz}, we set
          $\tau^2=\left\{I^2/(1 - I^2)\right\}\,s^2_{\text{typical}}$
          with
          $s^2_{\text{typical}} = (N\!-\!1)\sum_{i=1}^N w_i / \{(\sum_{i=1}^N w_i)^2-\sum_{i=1}^N w_i^2\}$,
          where $w_i = 1/\sigma_i^2$.
        \end{tablenotes}
  \end{threeparttable}
\end{table}

For fixed-size scenarios, we set $n_{i1}=n_{i2}=n$ for all studies $i=1,\ldots,N$.
For varying-size scenarios, we targeted a total of $U=Nn$ observations in group~1 at the design-average per-group sample size $n$.
Each study initially received a baseline of 50 participants in group~1 to avoid extreme size imbalance, so that $50N$ observations were allocated deterministically.
The remaining $U-50N$ observations were then distributed across the $N$ studies according to a Dirichlet--multinomial distribution $\mathcal{DM}\!\left(U-50N;\,\boldsymbol{\alpha}\right)$ with $\boldsymbol{\alpha}=(1,\ldots,1)^\top\in\mathbb{R}^N$, yielding sample sizes $n_{i1}\ge 50$ and $\sum_{i=1}^N n_{i1}=U$.
A 1{:}1 allocation was enforced by setting $n_{i2}=n_{i1}$ for $i = 1,\ldots,N$.

Outcome generation proceeded as follows.
Outcomes were generated in a two-step process to incorporate heterogeneity.
First, for each study $i$, we sampled a random effect $\delta_i \sim \mathcal{N}(0, \tau^2)$ (cf.\ \Cref{eq:re_model}).
Second, we drew observations from one of the distributional families described below, adding the realization $\delta_i$ to the group~1 observations to induce the random shift in the study-level effect.
Finally, summary statistics, including the sample median difference $Y_i$, were then computed from these generated samples, following the notation in \Cref{sec:meth}.
These $Y_i$ values served as the inputs to the meta-analytic estimators evaluated in \Cref{sec:sim_estimator}.

To reflect outcome shapes commonly encountered in practice, we generated individual-level outcomes for each study and group from one of three distributional families: normal, skew-normal, and log-normal.
Representative probability density functions for group~1 (solid) and group~2 (dashed) under each family are shown in \Cref{fig:out.d}.
In the normal outcome family, following \textcite{McGrath2020-gt}, we added a constant shift $c$ to group~1 so that a two-sample median test would have approximately 60\% power at the design-average per-group sample size.
The exact parameterizations for all families and design settings are reported in Supplementary \Cref{suptab:sim_design}.

\begin{figure}[t]
  \centering
  \begin{subfigure}[t]{0.32\linewidth}
    \centering
    \includegraphics[width=\linewidth]{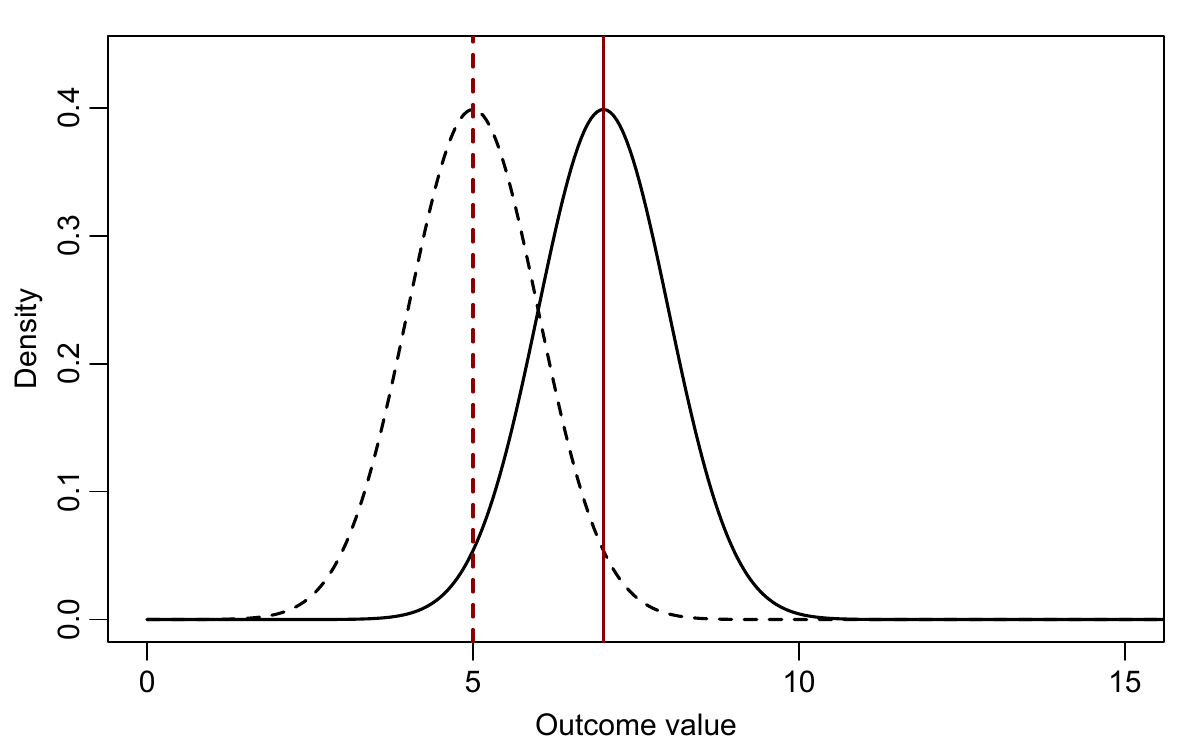}
    \caption{normal}
    \label{fig:out.d_N}
  \end{subfigure}
  \hfill
  \begin{subfigure}[t]{0.32\linewidth}
    \centering
    \includegraphics[width=\linewidth]{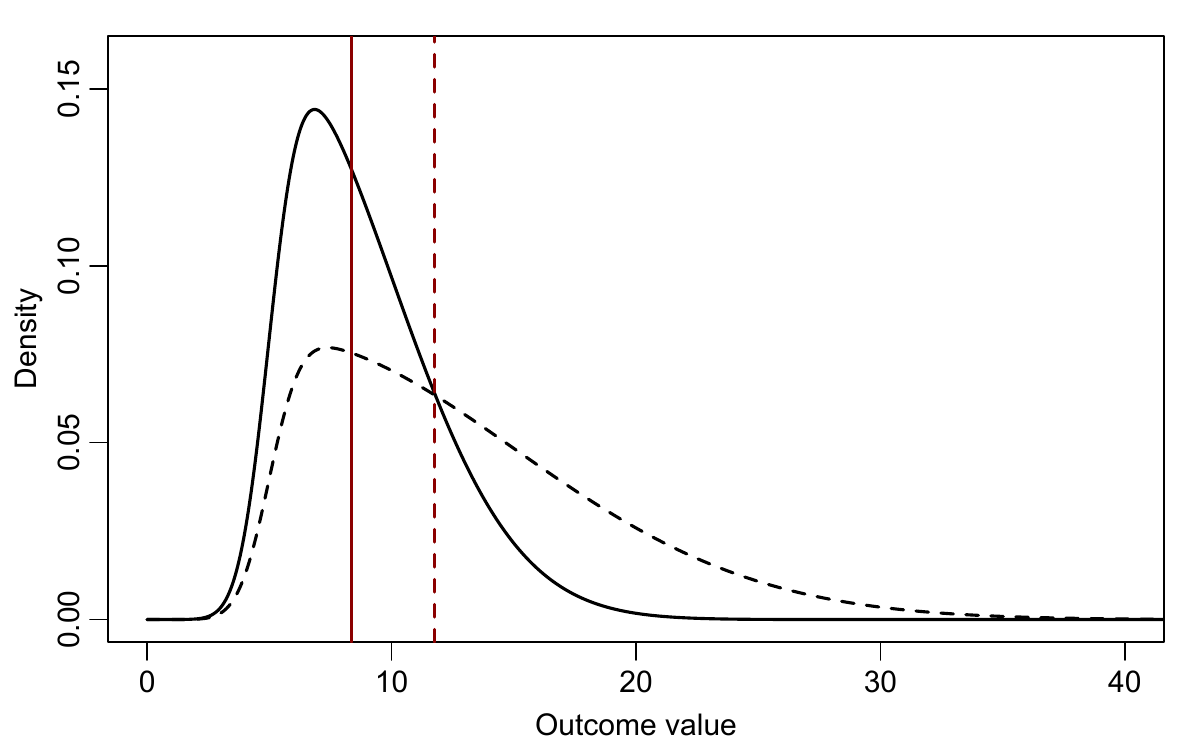}
    \caption{skew-normal}
    \label{fig:out.d_SN}
  \end{subfigure}
  \hfill
  \begin{subfigure}[t]{0.32\linewidth}
    \centering
    \includegraphics[width=\linewidth]{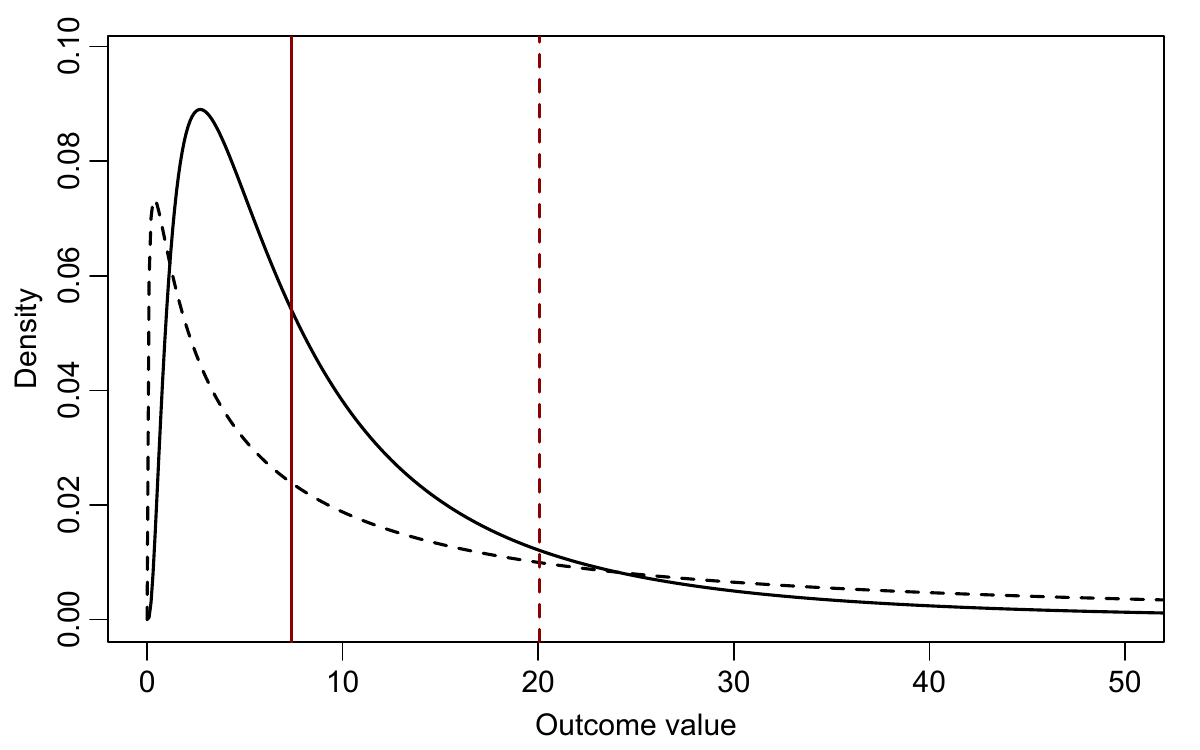}
    \caption{log-normal}
    \label{fig:out.d_LN}
  \end{subfigure}
  \caption{Probability density functions (PDFs) of the outcome distributions for group~1 (solid) and group~2 (dashed).\\[0.5em]
    {\footnotesize
      \textit{Note.} Vertical lines indicate the true median in each distribution.
    }
  }
  \label{fig:out.d}
\end{figure}

To strictly evaluate the variance estimators, we constructed analytic benchmarks based on the known data-generating process.
The theoretical variance $\sigma_i^2$ of the study-level median difference $Y_i=M_{i1}-M_{i2}$ was computed from the underlying population densities at the group-specific medians and the realized sample sizes, using the standard large-sample approximation $\sum_{j=1}^{2} \{4n_{ij} f_{ij}(m_{ij})^2\}^{-1}$, where $m_{ij}$ denotes the true population median of group~$j$.
Combining these true $\sigma_i^2$ with the prescribed $\tau^2$ (as defined above), we derived the true variance of the pooled estimator corresponding to each weighting scheme.
These analytic targets served as the benchmarks for assessing the bias of the variance estimators in \Cref{sec:sim_metric}, with formal expressions given in \Cref{app:metrics}.

\subsection{Estimators}\label{sec:sim_estimator}
Each simulated dataset was analyzed using three meta-analytic estimators: DiVE, the QE method under a RE model (QE--RE), and the QE method under an FE model (QE--FE), as follows.
DiVE consists of (i) the sample-size-weighted pooled estimator and (ii) its direct variance estimator as defined in \Cref{sec:meth_dive}.
For the QE method, we used the ``median + IQR'' configuration with the standard four-family candidate set (normal, log-normal, Weibull, gamma), as described in \Cref{sec:meth_qe}.
Within each study and group, we selected the family that minimized the quantile-matching loss and used the fitted density to compute the within-study variance via \Cref{eq:qe_var}.
For QE--RE, the between-study variance was estimated within each replicate by the method of moments of \textcite{DerSimonian1986-gc}, and IVW was applied using the resulting weights.
For QE--FE, we set $\tau^2=0$ and applied IVW using the within-study variances from \Cref{eq:qe_var}.

\subsection{Performance Metrics}\label{sec:sim_metric}
We evaluated performance in terms of point estimation, variance estimation, and interval estimation.
For point estimation, we report relative bias (\%Bias) and relative MSE (\%MSE) of the pooled estimator $\hat\mu$ with respect to the true pooled difference in each design setting.
For variance estimation, we report \%Bias and \%MSE of $\widehat{\operatorname{Var}}(\hat{\mu})$ with respect to a method-specific analytic benchmark derived from the data-generating model.
For interval estimation, we report the empirical 95\% coverage probability (CP) and the average width (AW) of the interval.
CIs are constructed using $t$-quantiles with $N\!-\!1$ degrees of freedom unless otherwise specified.
We also report $z$-based intervals for comparison.
Replicate-level relative errors for both $\hat\mu$ and $\widehat{\operatorname{Var}}(\hat{\mu})$ are summarized to visualize distributional features beyond means and variances.

Analytic benchmarks are derived from the data-generating model (\Cref{app:metrics}).

\subsection{Results}\label{sec:sim_result}
In the main text, we report one representative skewed outcome scenario (log-normal; $N=30$, varying allocation, average $n=100$).
Full results for all design settings are reported in Supplementary~\Crefrange{fig:res_box.p_norm_F}{fig:res_box.v_log_V} and \Crefrange{tab:res_F_normN10n100}{tab:res_V_logN30n300}.
\Cref{fig:res_all_logN30n100} visualizes replicate-wise relative errors for the pooled estimate (panel~a) and for its variance (panel~b), and \Cref{tab:res_logN30n100} summarizes performance across $I^2\in\{0,25,50,75\}\%$.
For point estimation, DiVE showed stable relative bias across heterogeneity levels, whereas QE--RE's bias moved toward zero as heterogeneity increased.
Relative MSEs were comparable, ranging from $0.75$--$2.73\%$ for DiVE and $1.19$--$2.16\%$ for QE-RE.
For variance estimation, both DiVE and QE--RE exhibited small bias.
Using $t$-based 95\% intervals ($df=N\!-\!1$), empirical coverage was $0.928$, $0.934$, $0.932$, and $0.942$ for DiVE, with average widths $(\mathrm{AW})$ of $3.910$, $4.678$, $5.800$, and $8.326$ as $I^2$ increased from $0$ to $75\%$.
QE--RE's $t$-based coverage was $0.773$, $0.870$, $0.926$, and $0.957$, with AW of $3.866$, $4.422$, $5.497$, and $7.794$.
In contrast, QE--FE exhibited substantial undercoverage when heterogeneity was present.

\begin{figure}[t]
  \centering
  \begin{subfigure}{\linewidth}
    \centering
    \includegraphics[width=\linewidth]{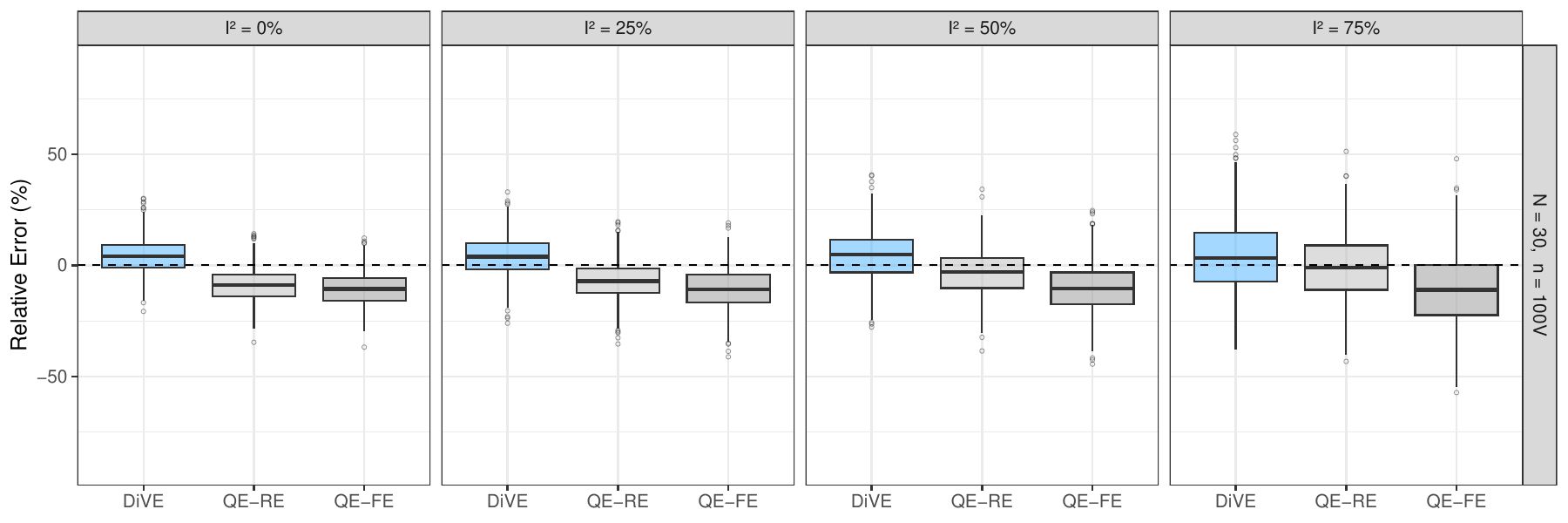}
    \caption{Point estimate.}\label{fig:res_box.p_logN30n100}
  \end{subfigure}
  \vspace{0.8em}
  \begin{subfigure}{\linewidth}
    \centering
    \includegraphics[width=\linewidth]{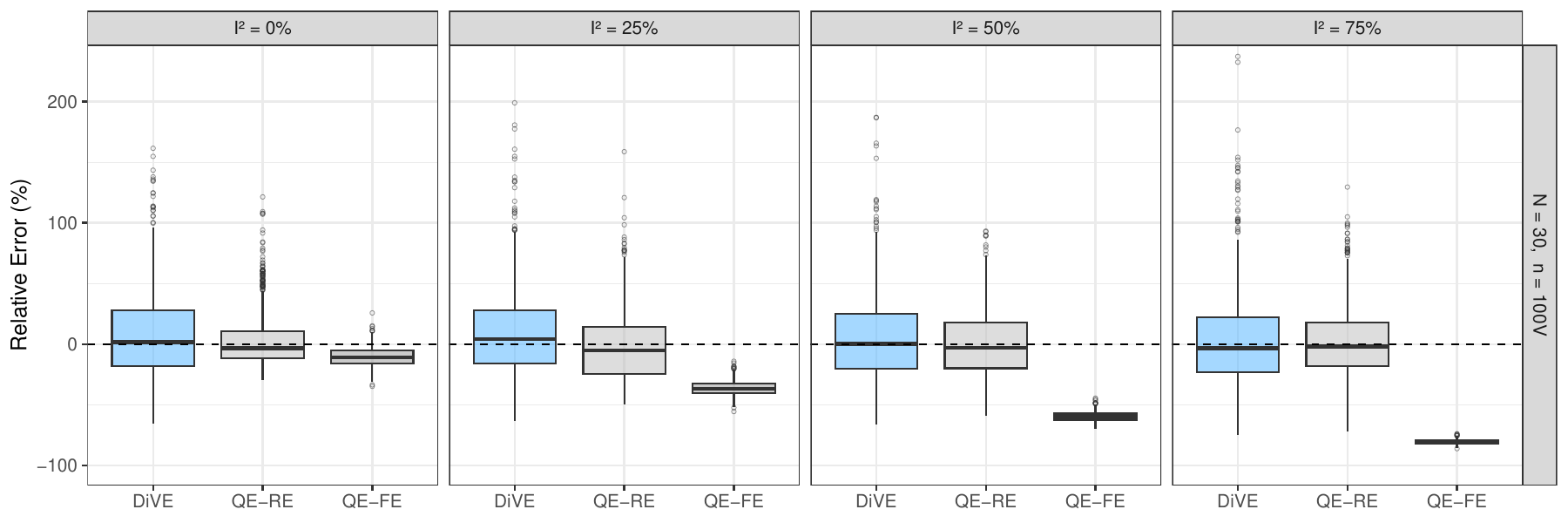}
    \caption{Variance estimate.}\label{fig:res_box.v_logN30n100}
  \end{subfigure}
  \caption[Point and variance errors (log-normal; $N{=}30$, $n{=}100$)]%
  {Distribution of relative errors under the challenging skewed outcome scenario (log-normal) with number of studies ($N=30$) and varying per-group sample sizes (average $n=100$).\\[0.5em]
    {\footnotesize
      \textit{Note.} Panels show (a) relative errors for the point estimator, and (b) relative errors for the variance estimator; methods compared are DiVE, the QE method under a RE model (QE--RE), and the QE method under an FE model (QE--FE).
      The four columns of panels correspond to $I^2\in\{0,25,50,75\}\%$. The central line in each box denotes the median error; boxes span the interquartile range; the dashed reference line at zero indicates no error.}
  }
  \label{fig:res_all_logN30n100}
\end{figure}

\begin{table}[t]
  \centering
  \caption{Performance of point and variance estimators under a log-normal scenario ($N=30$, varying sample sizes with average $n=100$).}
  \label{tab:res_logN30n100}
  \renewcommand{\arraystretch}{1.15}
  \begin{threeparttable}
    \begin{tabular}{@{}cc*{8}{S}@{}}
      \toprule
      &
      & \multicolumn{2}{c}{\textbf{Point}}
      & \multicolumn{2}{c}{\textbf{Variance}}
      & \multicolumn{2}{c}{\textbf{$z$-based}}
      & \multicolumn{2}{c}{\textbf{$t$-based}} \\
      \cmidrule(lr){3-4}\cmidrule(lr){5-6}\cmidrule(lr){7-8}\cmidrule(lr){9-10}
      \textbf{$I^2$}
      & \textbf{Method}
      & \textbf{\%Bias} & \textbf{\%MSE}
      & \textbf{\%Bias} & \textbf{\%MSE}
      & \textbf{CP}    & \textbf{AW}
      & \textbf{CP}    & \textbf{AW} \\
      \midrule
      $0\%$   & DiVE    &   4.129 & 0.751 &   7.464 & 13.440 & 0.918 & 3.747 & 0.928 & 3.910 \\
              & QE--RE  &  -9.075 & 1.364 &   3.334 &  5.165 & 0.744 & 3.705 & 0.773 & 3.866 \\
              & QE--FE  & -10.575 & 1.684 & -10.608 &  1.772 & 0.649 & 3.460 & 0.677 & 3.611 \\
      \midrule
      $25\%$  & DiVE    &   4.161 & 0.943 &  11.165 & 14.770 & 0.923 & 4.483 & 0.934 & 4.678 \\
              & QE--RE  &  -6.843 & 1.186 &  -0.399 &  8.083 & 0.863 & 4.238 & 0.870 & 4.422 \\
              & QE--FE  & -10.685 & 1.952 & -34.991 & 12.604 & 0.618 & 3.469 & 0.641 & 3.620 \\
      \midrule
      $50\%$  & DiVE    &   4.558 & 1.435 &   4.059 & 13.160 & 0.918 & 5.558 & 0.932 & 5.800 \\
              & QE--RE  &  -3.430 & 1.222 &  -1.068 &  7.756 & 0.911 & 5.268 & 0.926 & 5.497 \\
              & QE--FE  & -10.274 & 2.341 & -60.342 & 36.539 & 0.613 & 3.477 & 0.638 & 3.628 \\
      \midrule
      $75\%$  & DiVE    &   3.632 & 2.733 &   1.414 & 14.579 & 0.933 & 7.979 & 0.942 & 8.326 \\
              & QE--RE  &  -0.819 & 2.163 &   1.724 &  8.369 & 0.946 & 7.469 & 0.957 & 7.794 \\
              & QE--FE  & -11.146 & 4.056 & -81.264 & 66.066 & 0.484 & 3.481 & 0.499 & 3.632 \\
      \bottomrule
    \end{tabular}
    \begin{tablenotes}[flushleft]\footnotesize
        \item \textit{Note.} Results are based on 1,000 simulation replicates per heterogeneity level.
        \%Bias, relative bias; \%MSE, relative mean-squared error; CP, empirical coverage probability of 95\% confidence intervals; AW, average width.
        $z$-based intervals use the standard normal critical value, while $t$-based intervals use quantiles from a $t$-distribution with $N\!-\!1$ degrees of freedom.
        The true pooled difference is the benchmark for point-estimate metrics; variance metrics are evaluated against method-specific analytic targets defined in \Cref{app:metrics}.
    \end{tablenotes}
  \end{threeparttable}
\end{table}

Under this skewed and size-imbalanced setting, both procedures produced pooled estimates close to the truth across all heterogeneity levels (\Cref{fig:res_box.p_logN30n100}; \Cref{tab:res_logN30n100}).
DiVE's small, positive relative bias remained remarkably stable as $I^2$ increased, whereas QE--RE's relative bias attenuated toward zero with greater heterogeneity.
This pattern is consistent with their constructions: DiVE's sample-size weighting does not hinge on a fitted distribution and therefore changes little as between-study dispersion grows, whereas QE--RE's RE weighting interacts with the parametric fit to the reported medians and IQRs, yielding bias that is more sensitive to the extent of heterogeneity.

For variance estimation, DiVE showed a modest upward bias of similar magnitude across $I^2$, whereas QE--RE's bias remained near zero on average but changed sign over $I^2$ (\Cref{fig:res_box.v_logN30n100}; \Cref{tab:res_logN30n100}).
With $t$-based critical values, DiVE maintained coverage close to nominal across the $I^2$ grid, with only moderate changes in average width.
In contrast, QE--RE was under-covered in the no-heterogeneity case but improved steadily as $I^2$ increased, eventually matching or slightly exceeding nominal coverage with somewhat shorter intervals at high heterogeneity.
Taken together, these results suggest that when only medians and sample sizes are available, and skewness is substantial, DiVE provides robust inference without distributional modeling, with competitive precision; when heterogeneity is moderate to large, QE--RE may offer modest gains in average width without compromising coverage.
Once heterogeneity departs from zero, QE--FE materially underestimates variance and its intervals under-cover, reinforcing that a RE formulation, or a distribution-free alternative such as DiVE, is preferable in practice (\Cref{tab:res_logN30n100}).

We emphasize that, throughout the simulation study, the QE method was evaluated under the assumption that quartiles were available for all studies.
In practice, however, QE also requires that Q1 and Q3 be reported alongside the median, and studies lacking these statistics cannot be included in the analysis as we describe in the next section.

\FloatBarrier
\vspace{1em}
\section{Application to Real Data}\label{sec:exp}
As our empirical application, we analyzed a subset of randomized controlled trials (hereafter, ``studies'') from the Cochrane Review of ESD services for acute stroke \parencites[CD000443, 2017 update;][]{Langhorne2017-tw}.
In line with that review, we focused on the length of initial hospital stay (days) as the primary resource outcome and restricted attention to two-arm comparisons of ESD versus conventional care.
For the present analysis, we included studies that reported group-specific medians and sample sizes for this endpoint; eight studies met these criteria and were included in the analysis.

\subsection{Data extraction and preprocessing}\label{sec:exp_data}
From the published reports and Cochrane data tables, we extracted, for each eligible study, the group-specific sample sizes and any available measures of the median and dispersion of the length of initial hospital stay.
Studies were retained for DiVE if both group medians and group sizes were reported.
To apply QE--RE, we constructed a method-specific subset of studies reporting group-specific medians and IQRs in both arms.
This yielded eight studies for DiVE and two for QE--RE.

We oriented all effects so that negative values favor ESD (shorter initial hospital stay).
Specifically, study-level effects were defined as the ESD median minus the control median (for DiVE) and analogously for QE--RE.
Study identifiers and arm labels were aligned exactly with the Cochrane review; outcome definitions were not modified.

\begin{table}[t]
  \centering
  \small
  \begin{threeparttable}
    \caption[ESD trials for acute stroke: hospital-stay inputs]{Study-level inputs for the real-data application: randomized trials of early supported discharge (ESD) for acute stroke, with group medians and sample sizes for length of initial hospital stay and quartiles where available.}
    \label{tab:esd_inputs}
    \sisetup{
      table-number-alignment = center,
      input-open-uncertainty = ,
      input-close-uncertainty = ,
      detect-weight = true,
      detect-family = true
    }
    \begin{tabular}{@{}l
      S[table-format=3.0]
      S[table-format=3.1]
      S[table-format=3.1]
      S[table-format=3.1]
      S[table-format=3.0]
      S[table-format=3.1]
      S[table-format=3.1]
      S[table-format=3.1]
      S[table-format=1.3]
      S[table-format=1.3]
      @{}}
      \toprule
      & \multicolumn{4}{c}{\textbf{Group 1 (ESD)}} & \multicolumn{4}{c}{\textbf{Group 2 (conventional)}} & \\
      \cmidrule(lr){2-5}\cmidrule(lr){6-9}
      \textbf{Study ID}
      & \textbf{$n_{i1}$} & \textbf{Median} & \textbf{Q1} & \textbf{Q3}
      & \textbf{$n_{i2}$} & \textbf{Median} & \textbf{Q1} & \textbf{Q3}
      & {$\tilde w_i^{\rm s}$} & {$\tilde w_i^{\ast}$} \\
      \midrule
      {\textsuperscript{\dag}Adelaide~2000}
        &  42 & 15.0 &  8.0 & 22.0
        &  44 & 30.0 & 17.3 & 48.5
        & 0.098 & 0.466 \\
      Adelaide~2016
        &  31 & 16.0 & \multicolumn{1}{c}{\textit{NA}} & \multicolumn{1}{c}{\textit{NA}}
        &  32 & 20.0 & \multicolumn{1}{c}{\textit{NA}} & \multicolumn{1}{c}{\textit{NA}}
        & 0.072 & \multicolumn{1}{c}{\textit{NA}} \\
      Belfast~2004
        &  59 & 31.0 & \multicolumn{1}{c}{\textit{NA}} & \multicolumn{1}{c}{\textit{NA}}
        &  54 & 32.0 & \multicolumn{1}{c}{\textit{NA}} & \multicolumn{1}{c}{\textit{NA}}
        & 0.129 & \multicolumn{1}{c}{\textit{NA}} \\
      {\textsuperscript{\dag}Copenhagen~2009}
        &  31 & 18.0 & 16.0 & 21.0
        &  30 & 16.0 & 12.0 & 21.0
        & 0.070 & 0.534 \\
      London~1997
        & 167 &  6.0 & \multicolumn{1}{c}{\textit{NA}} & \multicolumn{1}{c}{\textit{NA}}
        & 164 & 12.0 & \multicolumn{1}{c}{\textit{NA}} & \multicolumn{1}{c}{\textit{NA}}
        & 0.378 & \multicolumn{1}{c}{\textit{NA}} \\
      Oslo~2000
        &  42 & 22.0 & \multicolumn{1}{c}{\textit{NA}} & \multicolumn{1}{c}{\textit{NA}}
        &  40 & 31.0 & \multicolumn{1}{c}{\textit{NA}} & \multicolumn{1}{c}{\textit{NA}}
        & 0.094 & \multicolumn{1}{c}{\textit{NA}} \\
      Stockholm~1998
        &  39 & 12.0 & \multicolumn{1}{c}{\textit{NA}} & \multicolumn{1}{c}{\textit{NA}}
        &  38 & 23.0 & \multicolumn{1}{c}{\textit{NA}} & \multicolumn{1}{c}{\textit{NA}}
        & 0.088 & \multicolumn{1}{c}{\textit{NA}} \\
      Trondheim~2004
        &  31 & 12.0 & \multicolumn{1}{c}{\textit{NA}} & \multicolumn{1}{c}{\textit{NA}}
        &  31 & 10.0 & \multicolumn{1}{c}{\textit{NA}} & \multicolumn{1}{c}{\textit{NA}}
        & 0.071 & \multicolumn{1}{c}{\textit{NA}} \\
      \bottomrule
    \end{tabular}
    \begin{tablenotes}[flushleft]\footnotesize
      \item \textit{Source.} Data were extracted from the 2017 update of the Cochrane Review of early supported discharge services for people with acute stroke \parencites[CD000443;][]{Langhorne2017-tw}.
      \item[\dag] Rows marked with \dag{} report medians and IQRs in both groups and were eligible for QE--RE; all rows were eligible for DiVE because only medians and group sizes are required.
      \item \textit{Notes.} Values are on the native ``days'' scale as in the original sources and the Cochrane review; ``\textit{NA}'' denotes not reported. Quartiles are first (Q1) and third (Q3). The study-level effect in the main text is the ESD median minus the control median (negative favors ESD). DiVE uses the sample-size weights $\tilde w_i^{\rm s}$ defined in \Cref{eq:prep_estmu}; the maximum weight was 0.378 (London~1997). QE--RE weights $\tilde w_i^{\ast}$ are reported only for studies with medians and IQRs in both arms, where $\tilde w_i^{\ast} := w_i^{\ast}/{\sum_{\ell=1}^N w_\ell^{\ast}}$ and $w_i^{\ast}$ follows the IVW form in \Cref{eq:re_est} with $\tau^2$ replaced by its DerSimonian--Laird estimate; remaining entries are \textit{NA}.
    \end{tablenotes}
  \end{threeparttable}
\end{table}

\subsection{Analysis plan}\label{sec:exp_plan}
In keeping with the paper's general framework, DiVE treated the pooling weights as fixed and estimated the variance of the resulting weighted average of study-level medians.
In this application, we used sample-size weights, which require only the reported group sizes.
Inference for the pooled median difference used a large-sample approximation with 95\% CIs computed using $t$-quantiles with $N\!-\!1$ degrees of freedom (where $N$ is the number of studies included by the method; $N=8$ for DiVE), and, as a sensitivity analysis, also reported the corresponding $z$-based intervals.

We applied QE--RE only to the subset of studies that reported median and IQRs in both groups, which was a strict subset of the DiVE dataset because QE--RE requires quartiles.
Following the QE approach, within-study variances of the sample medians were obtained by fitting the standard four-family candidate set (normal, log-normal, Weibull, gamma) to the reported quantiles, then applying IVW where the between-study variance was estimated by the DerSimonian--Laird method of moments \parencite{DerSimonian1986-gc}.
QE--FE was not considered in this application.

Because QE--RE required additional summary statistics beyond medians and sample sizes, the set of analyzable studies differed by method.
All choices above were prespecified and aligned with the Cochrane review's focus on length of initial hospital stay as a measure of resource use for ESD versus conventional care.
For transparency, we reported $\tilde w_i^{\rm s}$ (for $i=1,\ldots,N$) used in this application (see \Cref{tab:esd_inputs}); the maximum weight was 0.378 (contributed by London~1997) with $N=8$ studies.

\subsection{Results}\label{sec:exp_result}
Complete study-level inputs, including group medians, sample sizes, and quartiles where available, are summarized in \Cref{tab:esd_inputs}, and the corresponding DiVE and QE--RE results are reported in \Cref{tab:example_esd}.
Among the eight studies reporting group medians and sample sizes, DiVE estimated that ESD reduced the length of initial hospital stay by $-5.69$ days in median; the 95\% CI based on $z$-quantiles was [$-8.18$,\,$-3.20$], and the corresponding $t$-based interval with $N\!-\!1=7$ degrees of freedom was [$-8.69$,\,$-2.69$].
By contrast, QE--RE could be applied to only two studies because it requires medians and IQRs in both arms.
On this two-study subset, the pooled estimate was $-5.92$ days in median, but uncertainty was substantial: the $z$-based 95\% interval remained wide, [$-22.54$,\,$10.70$], while the corresponding $t$-based interval with $N\!-\!1=1$ degree of freedom spanned [$-113.67$,\,$101.83$].
The large standard error (8.48 days) and the extremely wide $t$-interval reflect the combination of a very small number of studies ($N=2$) and the heavy finite-sample penalty from using $t$-critical values with one degree of freedom.

\begin{table}[t]
  \centering
  \caption{Real-data application: meta-analysis of initial hospital stay (days) using DiVE and QE--RE (ESD vs.\ conventional care).}
  \label{tab:example_esd}
  \renewcommand{\arraystretch}{1.15}
  \begin{threeparttable}
    \begin{tabular}{@{}l
                    S[table-format=1.0]
                    S[table-format=3.0]
                    S[table-format=-2.2]
                    c
                    c
                    c
                    c@{}}
      \toprule
      & & & & \multicolumn{2}{c}{\textbf{$z$-based}} & \multicolumn{2}{c}{\textbf{$t$-based}} \\
      \cmidrule(lr){5-6}\cmidrule(lr){7-8}
      \textbf{Method}
        & $N$
        & $n_{\text{total}}$
        & \textbf{\shortstack{Estimate\\(days)}}
        & \textbf{95\% CI}
        & \textbf{$p$-value}
        & \textbf{95\% CI}
        & \textbf{$p$-value} \\
      \midrule
      DiVE   & 8 & 875 & -5.69 & [$-8.18$,\,$-3.20$]   & $< 0.001$ & [$-8.69$,\,$-2.69$]    & 0.003 \\
      QE--RE & 2 & 147 & -5.92 & [$-22.54$,\,$10.70$]  & 0.485     & [$-113.67$,\,$101.83$] & 0.612 \\
      \bottomrule
    \end{tabular}
    \begin{tablenotes}[flushleft]\footnotesize
        \item \textit{Note.} Estimate is the pooled median difference (ESD minus conventional care); negative values favor ESD (shorter stay).
        $n_{\text{total}}$ denotes the total sample size across all studies included by each method.
        Reported $p$-values are two-sided for the null hypothesis of no pooled median difference.
        $z$-based intervals and $p$-values use the standard normal reference distribution; $t$-based intervals and $p$-values use $N\!-\!1$ degrees of freedom.
        The analyzable sets differ by method: DiVE includes studies reporting group medians and sample sizes in both arms ($N=8$), whereas QE--RE is limited to studies reporting medians and interquartile ranges in both arms ($N=2$).
        DiVE uses sample-size weights, and QE--RE uses IVW with a DerSimonian--Laird RE component.
        QE--FE was not included in this application.
    \end{tablenotes}
  \end{threeparttable}
\end{table}

Taken together, these findings show that when only medians and minimal design information are available, DiVE yields a clinically interpretable reduction of about 5--6 days with well-calibrated uncertainty.
In contrast, QE--RE is severely limited by reporting requirements in this dataset, resulting in substantial imprecision despite similar point estimates.
Direct comparisons of interval width across methods should therefore be read with caution because the analyzable study sets differ by construction; nevertheless, DiVE's usable precision using eight studies suggests that inference without distributional modeling is feasible without discarding a large fraction of the evidence.

For the clinical context, the Cochrane review reported a similar reduction of 5--6 days using pragmatic approximations (medians treated as means and IQRs mapped to standard errors via normal assumptions).
This approach is distinct from rigorously derived transformation-based procedures \parencites[e.g.,][]{Wan2014-yu,Luo2018-ig}, so we cite it only as a contextual reference rather than a benchmark.

\FloatBarrier
\vspace{1em}
\section{Discussion}\label{sec:disc}
This work introduces DiVE, a distribution-agnostic procedure that provides a closed-form, distribution-free estimator of the variance of a pooled estimator (e.g., pooled median differences) under pre-specified study weights.
DiVE requires only study-level central tendencies (typically medians) and group sizes; it does not reconstruct an outcome distribution or require estimation of within- or between-study variances.
Instead, it directly targets the variance of the chosen weighted average.
When the study-level effects are exactly unbiased for the common effect, the variance estimator is also unbiased for the variance of the chosen weighted average; for sample median differences, it is asymptotically unbiased under mild regularity conditions.
Our simulations demonstrate well-calibrated $t$-based CIs across a wide range of skewed distributions.
The empirical application further demonstrates that DiVE can synthesize evidence even when quartiles are incompletely reported. In such settings, parametric alternatives cannot be applied without discarding studies.

QE--RE  fits candidate parametric families to estimate within-study variances of the sample medians, and then applies IVW.
When the fitted family is well aligned with the data-generating distribution and the required quantiles are available, QE--RE can be efficient.
However, its applicability depends on the availability of at least three quantiles per group, and its variance estimates may be affected by model misspecification.

The DiVE variance estimator assumes that no single study dominates the weights ($\max_i \tilde w_i^{\rm s} < \tfrac{1}{2}$).
When total sample sizes are highly imbalanced, this condition may be nearly violated, leading to unstable variance estimates and inference being overly influenced by a single study.
However, such dominance would be problematic for any meta-analysis, not only DiVE, because it undermines the fundamental idea of synthesizing evidence across studies.

When applying DiVE, we recommend reporting the following items for reproducibility and transparency as reported in \Cref{tab:esd_inputs,tab:example_esd}:
(i) the study-level medians and group-specific sample sizes,
(ii) the study-specific weights and their maximum,
and (iii) the choice of critical values (normal vs.\ $t$ with $df=N\!-\!1$).

As future work, while this study focuses on median differences for continuous outcomes, median survival time is commonly reported in oncology trials.
Because the proposed method requires only study-level sample sizes and medians, it is, in principle, applicable in such settings.
Further investigation is needed to assess its performance in the presence of censoring.

\vspace{1em}
\section*{Acknowledgments}
The authors are grateful to Professor Shuichi Kawano of Kyushu University for his helpful comments and suggestions on the manuscript.

\section*{Use of generative AI tools}
During manuscript preparation, ChatGPT was used solely to assist with translation between Japanese and English and with language editing and proofreading.
All AI-assisted text was reviewed and revised by the authors, who take full responsibility for the final content of the manuscript.
No generative AI tool was used to generate or alter study data, perform statistical analyses, or interpret the study findings.

\section*{Funding}
Tadahisa Okuda received no specific funding for this work. Masataka Taguri was supported by Japan Society for the Promotion of Science (JSPS) KAKENHI Grant Number JP24K14862. Kenichi Hayashi was supported by JSPS KAKENHI Grant Number JP23K11013.

\section*{Conflict of Interest}
The authors declare no conflicts of interest.

\section*{Data Availability Statement}
The simulation code, analysis code, curated summary-level dataset, and study materials supporting this study will be made publicly available upon acceptance for publication through the project repository at \url{https://github.com/OKD-Lab/diveMeta}.
The real-data application uses study-level summary values transcribed from published reports and the Cochrane review.

\FloatBarrier
\newpage
\printbibliography[title={References}]

\newpage
\appendix
\counterwithin{theorem}{section} 
\renewcommand{\thetheorem}{\thesection.\arabic{theorem}} 

\section{Direct estimator for $\operatorname{Var}(\hat{\mu}^{\rm s})$: derivation and proof}\label{app:dive}
\subsection{Notation and assumptions}\label{app:dive_notation}
Recall $\hat{\mu}^{\rm s} = \sum_{i=1}^N \tilde w_i^{\rm s} Y_i$ with $\tilde w_i^{\rm s} = n_i / \sum_{\ell=1}^N n_\ell$.
For \Cref{app:dive_derivation}, we work under \Cref{mainass:A1,mainass:A2}.
For \Cref{app:dive_proof}, we work under \Cref{mainass:A1,mainass:A2prime,mainass:A4}.
In particular, \Cref{mainass:A4} implies $\max_i \tilde w_i^{\rm s} < \tfrac{1}{2}$ for sufficiently large $n_{\min}$, so that $h_i=(\tilde w_i^{\rm s})^2/(1-2\tilde w_i^{\rm s})$ are well-defined.
Recall that $n_{\min,i} = \min(n_{i1},n_{i2})$ and $n_{\min} = \min_i n_{\min,i}$.
For sample median differences, under standard regularity conditions, $E[Y_i]-\mu = O(n_{\min,i}^{-1})$ as $n_{\min,i}\to\infty$ \parencites{Bahadur1966-fy}[§2]{Lehmann1998-or}.

\subsection{Proof of Proposition~\ref{prop:var_exact_unbiased}}\label{app:dive_derivation}
\begin{proof}
To prove \Cref{prop:var_exact_unbiased}, we consider the quantity
\begin{equation*}
  T \;=\; \sum_{i=1}^N k_i \left( Y_i - \hat{\mu}^{\rm s} \right)^2,
\end{equation*}
with constants $k_i$ (for $i=1,\ldots,N$) to be determined.
We then choose these constants so that $T$ is unbiased for $\operatorname{Var}(\hat{\mu}^{\rm s})$.

Under \Cref{mainass:A1,mainass:A2}, we have
\begin{equation}
  E\!\left[ \left( Y_i-\hat{\mu}^{\rm s} \right)^2 \right]
  = E\!\left[ (Y_i-\mu)^2 \right]
    - 2E\!\left[ (Y_i-\mu)(\hat{\mu}^{\rm s}-\mu) \right]
    + E\!\left[ (\hat{\mu}^{\rm s}-\mu)^2 \right].
  \label{eq:dive_expand}
\end{equation}
We evaluate the three terms on the right-hand side of \Cref{eq:dive_expand} in turn.

First, since \Cref{mainass:A2} gives $E[Y_i]=\mu$, we see that
\begin{equation*}
  E\!\left[ (Y_i-\mu)^2 \right] = \operatorname{Var}(Y_i).
\end{equation*}

Second, we obtain
\begin{align*}
  E\!\left[ (Y_i-\mu)(\hat{\mu}^{\rm s}-\mu) \right]
  &= E\!\left[ (Y_i-\mu)\sum_{\ell=1}^{N} \tilde w_\ell^{\rm s}(Y_\ell-\mu) \right] \\
  &= \tilde w_i^{\rm s}E\!\left[ (Y_i-\mu)^2 \right]
    + \sum_{\ell\ne i} \tilde w_\ell^{\rm s}E\!\left[ (Y_i-\mu)(Y_\ell-\mu) \right] \\
  &= \tilde w_i^{\rm s}\operatorname{Var}(Y_i),
\end{align*}
where the last equality follows from \Cref{mainass:A1}.

Finally, we have
\begin{equation*}
  E\!\left[ (\hat{\mu}^{\rm s}-\mu)^2 \right] = \operatorname{Var}(\hat{\mu}^{\rm s}) = \sum_{i=1}^{N} (\tilde w_i^{\rm s})^2 \operatorname{Var}(Y_i).
\end{equation*}

Hence, using the above results, we obtain
\begin{align*}
  E[T]
  &= \sum_{i=1}^N k_i\,E\!\left[ \left( Y_i-\hat{\mu}^{\rm s} \right)^2 \right] \\
  &= \sum_{i=1}^N \left\{ k_i(1-2\tilde w_i^{\rm s}) + K(\tilde w_i^{\rm s})^2 \right\}\operatorname{Var}(Y_i),
\end{align*}
where $K=\sum_{\ell=1}^N k_\ell$.
To make $E[T]$ equal the variance of the pooled estimator in \Cref{eq:var_exact}, we therefore choose $k_i$ so that
\begin{equation*}
  k_i(1-2\tilde w_i^{\rm s}) + K(\tilde w_i^{\rm s})^2
  = (\tilde w_i^{\rm s})^2,
  \qquad i=1,\ldots,N.
\end{equation*}
Equivalently,
\begin{equation*}
  k_i = (1-K)\frac{(\tilde w_i^{\rm s})^2}{1-2\tilde w_i^{\rm s}}.
\end{equation*}
Define
\begin{equation*}
  h_i = \frac{(\tilde w_i^{\rm s})^2}{1-2\tilde w_i^{\rm s}},
  \qquad
  H = \sum_{\ell=1}^N h_\ell,
\end{equation*}
which are well-defined when $\max_i \tilde w_i^{\rm s}<\tfrac12$.
Then
\begin{equation*}
  k_i = (1-K)h_i.
\end{equation*}
Summing the above equation over $i$ gives $K = (1-K)H$, and hence $K = H/(1+H)$.
Therefore, $k_i = h_i/(1+H)$.
Thus, we obtain
\begin{equation*}
  \widehat{\operatorname{Var}}(\hat{\mu}^{\rm s})
  = T
  = \sum_{i=1}^N \frac{h_i}{1+\sum_{\ell=1}^N h_\ell}\left( Y_i-\hat{\mu}^{\rm s} \right)^2,
\end{equation*}
which is exactly \Cref{eq:estvar_dive}.
By construction,
\begin{equation*}
  E\!\left[\widehat{\operatorname{Var}}(\hat{\mu}^{\rm s})\right]
  = \operatorname{Var}(\hat{\mu}^{\rm s}),
\end{equation*}
proving \Cref{prop:var_exact_unbiased}.
\end{proof}

\subsection{Proof of \Cref{prop:var_unbiased}}\label{app:dive_proof}
\begin{proof}
Under \Cref{mainass:A1,mainass:A2prime,mainass:A4}, let $b_i := E[Y_i] - \mu$, and define $\hat{\mu}^{\rm s\ast} = \sum_{i=1}^N \tilde w_i^{\rm s\ast}Y_i$.
Also define
\begin{equation*}
  h_i^\ast = \frac{(\tilde w_i^{\rm s\ast})^2}{1-2\tilde w_i^{\rm s\ast}},
\end{equation*}
which is well-defined and finite.

By expanding the square and taking expectations,
\begin{align*}
  E[\widehat{\operatorname{Var}}(\hat{\mu}^{\rm s})]
  &= \sum_{i=1}^{N} \frac{h_i}{\,1 + \sum_{\ell=1}^{N} h_\ell\,}\,
    E\!\left[ \left( Y_i-\hat{\mu}^{\rm s} \right)^2 \right] \\
  &= \sum_{i=1}^{N} \frac{h_i}{\,1 + \sum_{\ell=1}^{N} h_\ell\,}\,
    \left\{ E\!\left[ (Y_i - \mu)^2 \right]
    - 2E\!\left[ (Y_i - \mu)(\hat{\mu}^{\rm s} - \mu) \right]
    + E\!\left[ (\hat{\mu}^{\rm s} - \mu)^2 \right] \right\}.
\end{align*}

Independence and the vanishing bias $b_i \to 0$ give
\begin{equation*}
  E\!\left[ (Y_i-\mu)^2 \right]
  = \operatorname{Var}(Y_i) + \left(E[Y_i-\mu]\right)^2
  = \operatorname{Var}(Y_i) + b_i^2
  \;\longrightarrow\;\operatorname{Var}(Y_i),
\end{equation*}
\begin{align*}
  E\!\left[ (Y_i-\mu)(\hat{\mu}^{\rm s}-\mu) \right]
  &= E\!\left[ (Y_i-\mu)\sum_{\ell=1}^{N} \tilde w_\ell^{\rm s}(Y_\ell-\mu) \right] \\
  &= \tilde w_i^{\rm s}E\!\left[ (Y_i-\mu)^2 \right]
    + \sum_{\ell \ne i} \tilde w_\ell^{\rm s}E\!\left[ (Y_i-\mu)(Y_\ell-\mu) \right] \\
  &= \tilde w_i^{\rm s}\left\{ \operatorname{Var}(Y_i) + b_i^2 \right\}
    + \sum_{\ell \ne i} \tilde w_\ell^{\rm s} b_i b_\ell
  \;\longrightarrow\;\tilde w_i^{\rm s\ast}\operatorname{Var}(Y_i),
\end{align*}
and
\begin{equation*}
  E\!\left[ (\hat{\mu}^{\rm s}-\mu)^2 \right]
  = \operatorname{Var}(\hat{\mu}^{\rm s})
    + \left( \sum_{i=1}^N \tilde w_i^{\rm s} b_i \right)^2
  \;\longrightarrow\;\operatorname{Var}(\hat{\mu}^{\rm s\ast}).
\end{equation*}

Substituting these limits and collecting terms,
\begin{equation*}
  E[\widehat{\operatorname{Var}}(\hat{\mu}^{\rm s})]
  \;\longrightarrow\;
  \sum_{i=1}^{N} \frac{h_i^\ast}{\,1 + \sum_{\ell=1}^{N} h_\ell^\ast\,}
  \left\{ (1 - 2\tilde w_i^{\rm s\ast})\operatorname{Var}(Y_i)
  + \operatorname{Var}(\hat{\mu}^{\rm s\ast}) \right\}.
\end{equation*}
Since $h_i^\ast(1 - 2\tilde w_i^{\rm s\ast}) = (\tilde w_i^{\rm s\ast})^2$ by definition,
\begin{align*}
  E[\widehat{\operatorname{Var}}(\hat{\mu}^{\rm s})]
  \;\longrightarrow&\;
  \frac{1}{\,1 + \sum_{\ell=1}^{N} h_\ell^\ast\,}
  \left\{
    \sum_{i=1}^{N} (\tilde w_i^{\rm s\ast})^2\operatorname{Var}(Y_i)
    + \sum_{i=1}^{N} h_i^\ast \operatorname{Var}(\hat{\mu}^{\rm s\ast})
  \right\} \\
  =&\;
  \frac{1}{\,1 + \sum_{\ell=1}^{N} h_\ell^\ast\,}
  \left( 1 + \sum_{i=1}^{N} h_i^\ast \right)
  \operatorname{Var}(\hat{\mu}^{\rm s\ast}) \\
  =&\;
  \operatorname{Var}(\hat{\mu}^{\rm s\ast}).
\end{align*}

Finally, since $\tilde w_i^{\rm s}\to \tilde w_i^{\rm s\ast}$ and $N$ is fixed, we have
\begin{equation*}
  \operatorname{Var}(\hat{\mu}^{\rm s})-\operatorname{Var}(\hat{\mu}^{\rm s\ast}) \to 0
\end{equation*}
under the standing finite-second-moment condition.
Hence
\begin{equation*}
  E[\widehat{\operatorname{Var}}(\hat{\mu}^{\rm s})]\;\longrightarrow\;\operatorname{Var}(\hat{\mu}^{\rm s})
\end{equation*}
as $n_{\min}\to\infty$.
\end{proof}

\FloatBarrier
\vspace{3em}
\section{Performance Metrics}\label{app:metrics}
We evaluate point estimators against the common truth $\mu$, while variance estimators are judged against method-specific analytic targets defined below.
Because $\mu\neq 0$ by design in all settings, all relative metrics are well-defined; no zero-denominator handling is required in the reported results.
This appendix formalizes the metrics used in the simulation study in \Cref{sec:sim}.
Let $R$ denote the number of replicates per design setting.

Within a replicate and study $i$, let $\sigma_i^2$ be the theoretical variance of the study-level median difference, computed from the population densities at the group-specific medians and the realized sample sizes (standard large-sample formula; see \textcite{McGrath2020-gt}).
Given a target heterogeneity $I^2$, $\tau^2$ is set by the unequal-size correction of Higgins--Thompson \parencite[Eq.~9]{Higgins2002-xr,Higgins2003-dz}.

Let $w_i^{\rm s}$ denote the raw DiVE study weights, and $\tilde w_i^{\rm s}=w_i^{\rm s}/\sum_{\ell=1}^N w_{\ell}^{\rm s}$ their normalized form (they sum to one).
Let $w_i = 1/\sigma_i^2$ denote the FE raw IVWs and $\tilde w_i = w_i / \sum_{\ell=1}^N w_\ell$ their normalized form.
The method-specific analytic variance targets used for evaluating variance estimators are
\begin{align*}
  \operatorname{Var}(\hat\mu^{\rm s}) &= \sum_{i=1}^N (\tilde w_i^{\rm s})^2\,(\sigma_i^2 + \tau^2), \\
  \operatorname{Var}(\hat\mu_{\text{RE}}) &= \left( \sum_{i=1}^N \frac{1}{\sigma_i^2 + \tau^2} \right)^{\!-1}, \\
  \operatorname{Var}(\hat\mu_{\text{FE}}) &= \sum_{i=1}^N \tilde w_i^{2}\,(\sigma_i^2 + \tau^2).
\end{align*}
Let $V_{\mathrm{target}}$ denote the analytic variance target corresponding to the method under evaluation.
These are the benchmarks against which $\widehat{\operatorname{Var}}(\hat\mu)$ is judged for each method.

\subsection*{Point Estimator Performance}\label{app:met_point}
For the pooled point estimator $\hat\mu$,
\begin{align*}
  \%\mathrm{Bias}(\hat\mu) &= 100 \cdot \frac{1}{R} \sum_{r=1}^R \frac{\hat\mu_{r} - \mu}{\mu}, \\
  \%\mathrm{MSE}(\hat\mu) &= 100 \cdot \frac{1}{R} \sum_{r=1}^R \left( \frac{\hat\mu_{r} - \mu}{\mu} \right)^{\!2} .
\end{align*}
Here, \%Bias denotes the relative bias and \%MSE the relative mean-squared error.

\subsection*{Variance Estimator Performance}\label{app:met_var}
For the variance estimator $\widehat{\operatorname{Var}}(\hat\mu)$,
\begin{align*}
  \%\mathrm{Bias}\left( \widehat{\operatorname{Var}}(\hat\mu) \right) &= 100 \cdot \frac{1}{R} \sum_{r=1}^R \frac{\widehat{\operatorname{Var}}(\hat\mu)_{r} - V_{\mathrm{target}}}{V_{\mathrm{target}}}, \\
  \%\mathrm{MSE}\left( \widehat{\operatorname{Var}}(\hat\mu) \right) &= 100 \cdot \frac{1}{R} \sum_{r=1}^R \left( \frac{\widehat{\operatorname{Var}}(\hat\mu)_{r} - V_{\mathrm{target}}}{V_{\mathrm{target}}} \right)^{\!2}.
\end{align*}
Both metrics are expressed as percentages relative to the analytic target $V_{\mathrm{target}}$.

\subsection*{Confidence Interval (CI) Performance}\label{app:met_ci}
For 95\% confidence intervals (CIs) constructed for $\hat\mu$,
\begin{align*}
  \mathrm{CP} &= \frac{1}{R} \sum_{r=1}^{R} \mathbf{1}\!\left\{ \mathrm{LL}_{r} \le \mu \le \mathrm{UL}_{r} \right\},\\
  \mathrm{AW} &= \frac{1}{R} \sum_{r=1}^{R} \left( \mathrm{UL}_{r} - \mathrm{LL}_{r} \right).
\end{align*}
Here, CP denotes the coverage probability and AW the average width.
Unless otherwise specified, CIs use $t$-quantiles with $N\!-\!1$ degrees of freedom:
\begin{equation*}
  \mathrm{LL}_{r} = \hat\mu_{r} - t_{0.975,\,N\!-\!1}\sqrt{\widehat{\operatorname{Var}}(\hat\mu)_{r}},\quad
  \mathrm{UL}_{r} = \hat\mu_{r} + t_{0.975,\,N\!-\!1}\sqrt{\widehat{\operatorname{Var}}(\hat\mu)_{r}},
\end{equation*}
and $z$-based counterparts replace $t_{0.975,\,N\!-\!1}$ by $z_{0.975}$.

\subsection*{Replication-Level Metrics for Graphical Display}\label{app:met_error}
For replicate-wise summaries (e.g., boxplots), we report relative errors:
\begin{align*}
  \%\mathrm{Error}\left( \hat\mu \right)_r &= 100 \times \frac{\hat\mu_{r} - \mu}{\mu},\\
  \%\mathrm{Error}\left( \widehat{\operatorname{Var}}(\hat\mu) \right)_r &= 100 \times \frac{\widehat{\operatorname{Var}}(\hat\mu)_{r} - V_{\mathrm{target}}}{V_{\mathrm{target}}}.
\end{align*}

\subsubsection*{Notation.}
$\hat\mu_r$ and $\widehat{\operatorname{Var}}(\hat\mu)_r$ are the replicate-$r$ estimates; $\mathrm{LL}_r,\mathrm{UL}_r$ are CI limits; $\mathbf{1}\{\cdot\}$ is the indicator function.

\newpage
\section*{Supplementary Material}
\beginsupplement

\begin{table}[h!]
  \centering
  \caption{Parameters for each data-generating distribution (two-group design).}
  \label{suptab:sim_design}
  \renewcommand{\arraystretch}{1.15}
  \begin{threeparttable}
  \begin{tabular}{@{}lccccc@{}}
    \toprule
    \ \textbf{Distribution} & \textbf{Group} & \textbf{Parameter 1} & \textbf{Parameter 2} & \textbf{Parameter 3} & \textbf{True Median}\ \\
    \midrule
    \ normal   
                          & 1 & $\text{Mean} = 5$     & $\text{SD} = 1$     & --                   & 5.00 + $c$\ \\
                          & 2 & $\text{Mean} = 5$     & $\text{SD} = 1$     & --                   & 5.00\ \\
    \ skew-normal
                          & 1 & $\text{Location} = 5$ & $\text{Scale} = 5$  & $\text{Shape} = 5$   & 8.37\ \\
                          & 2 & $\text{Location} = 5$ & $\text{Scale} = 10$ & $\text{Shape} = 10$  & 11.74\ \\
    \ log-normal
                          & 1 & $\text{Meanlog} = 2$  & $\text{SDlog} = 1$  & --                   & 7.39\ \\
                          & 2 & $\text{Meanlog} = 3$  & $\text{SDlog} = 2$  & --                   & 20.09\ \\
    \bottomrule
  \end{tabular}
    \begin{tablenotes}[flushleft]\footnotesize
        \item \textit{Note.} For the normal distribution scenarios, the true median difference $\mu$ was introduced by shifting the mean of Group 1.
        $c$ denotes the treatment effect shift added to Group 1.
        It is set to a pre-specified non-zero constant in the scenarios, following the mean-shift approach of \textcite{McGrath2020-gt}.
        Group 2 receives no shift, so the true median difference equals $c$.
        Skew-normal parameters follow the direct parameterisation (location, scale, shape).
        Log-normal parameters are on the natural-log scale.
    \end{tablenotes}
  \end{threeparttable}
\end{table}

\begin{figure}[t]
  \centering
  \includegraphics[width=\linewidth]{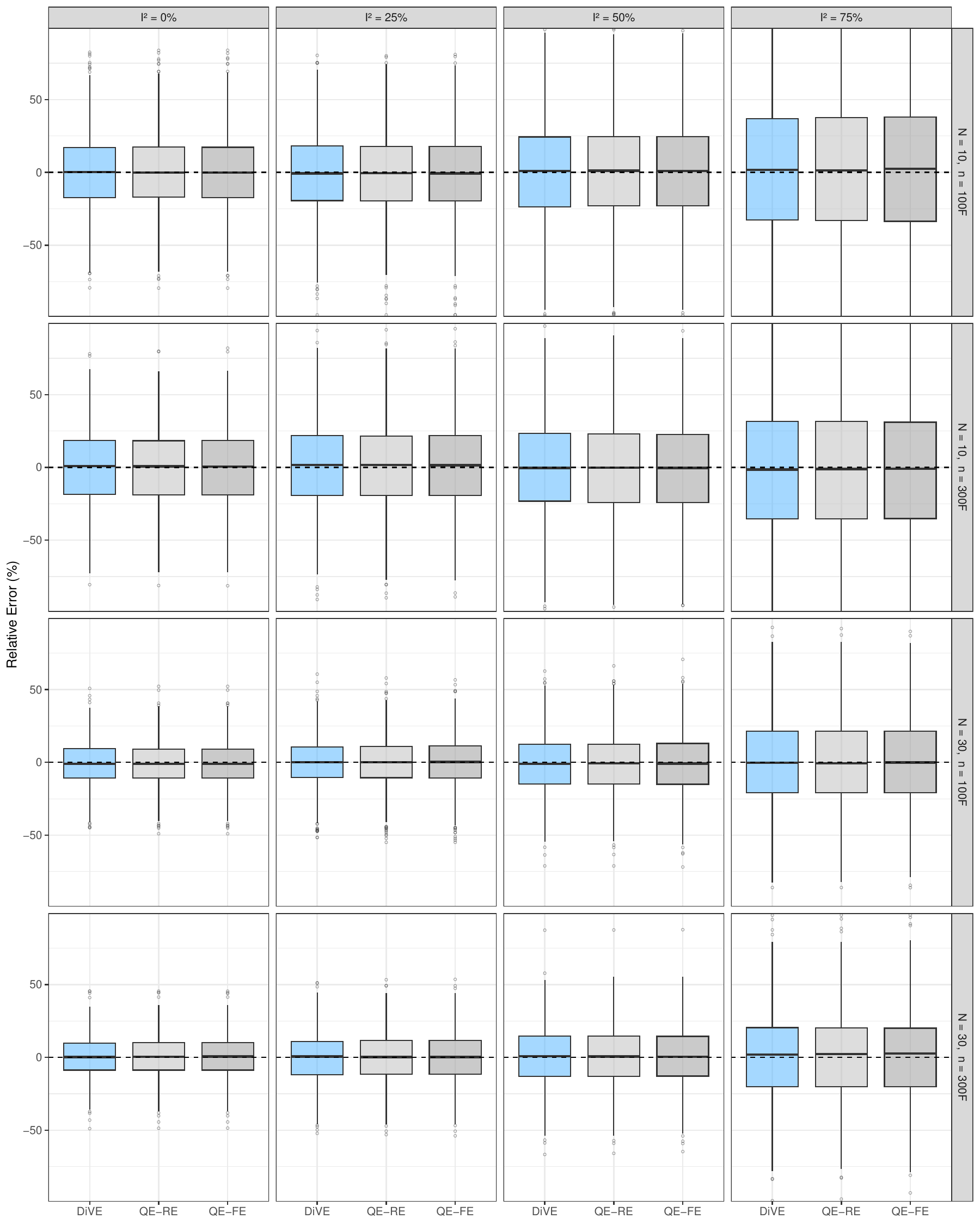}
  \caption{Distribution of relative errors for point estimate (normal scenario with fixed sample sizes).\\[0.5em]
    {\footnotesize
      \textit{Note.} Panels show relative errors for the point estimator; methods compared are DiVE, the QE method under a RE model (QE--RE), and the QE method under an FE model (QE--FE). The four columns of panels correspond to $I^2\in\{0,25,50,75\}\%$. The central line in each box denotes the median error; boxes span the interquartile range; the dashed reference line at zero indicates no error.
    }
  }
  \label{fig:res_box.p_norm_F}
\end{figure}

\begin{figure}[t]
  \centering
  \includegraphics[width=\linewidth]{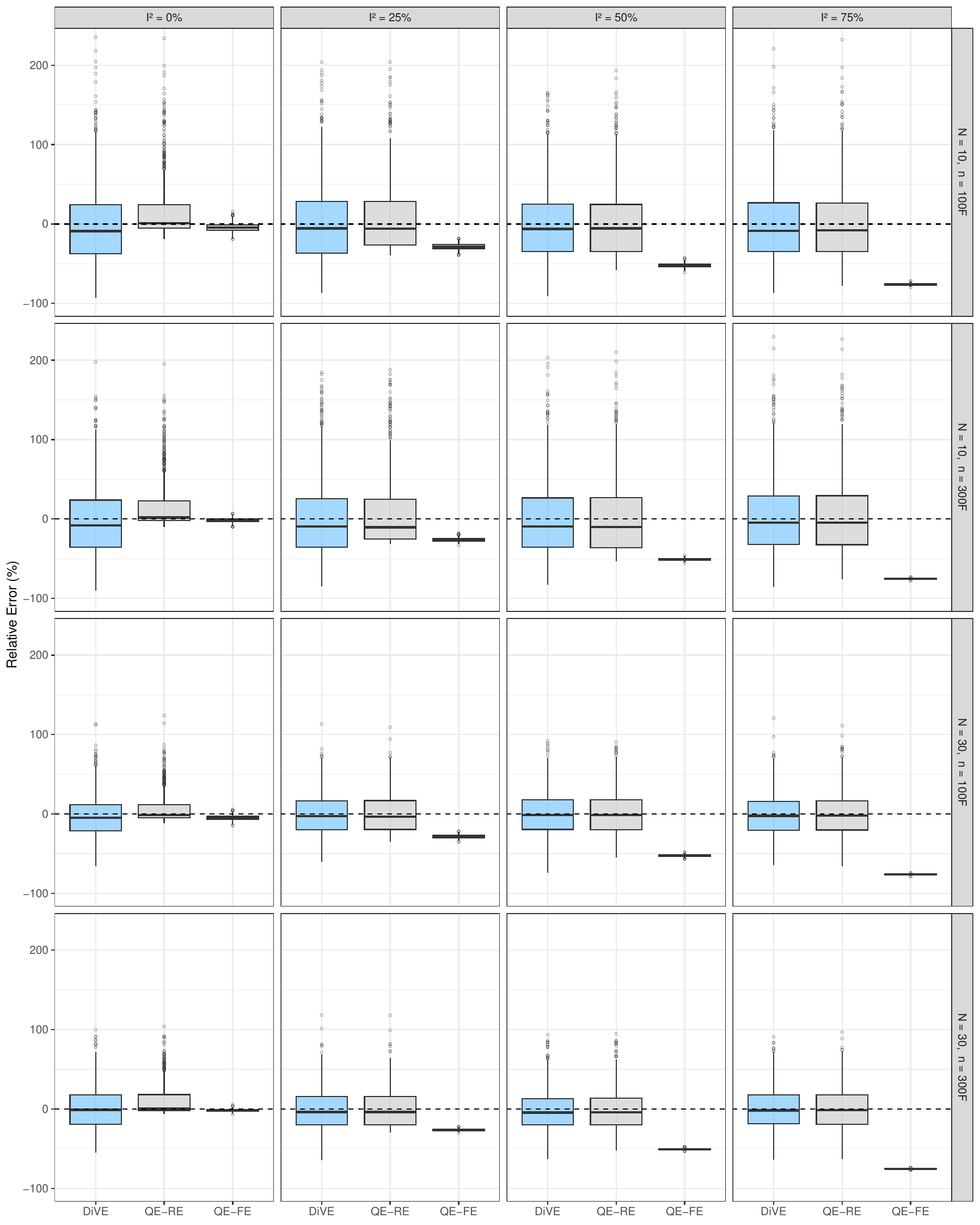}
  \caption{Distribution of relative errors for variance estimate (normal scenario with fixed sample sizes).\\[0.5em]
    {\footnotesize
      \textit{Note.} Panels show relative errors for the variance estimator; methods compared are DiVE, the QE method under a RE model (QE--RE), and the QE method under an FE model (QE--FE). The four columns of panels correspond to $I^2\in\{0,25,50,75\}\%$. The central line in each box denotes the median error; boxes span the interquartile range; the dashed reference line at zero indicates no error.
    }
  }
  \label{fig:res_box.v_norm_F}
\end{figure}

\FloatBarrier
\begin{table}[t]
  \centering
  \caption{Performance of point and variance estimators under a normal scenario ($N=10$, fixed sample sizes $n=100$).}
  \label{tab:res_F_normN10n100}
  \renewcommand{\arraystretch}{1.15}
  \begin{threeparttable}
    \begin{tabular}{@{}cc*{8}{S}@{}}
      \toprule
      &
      & \multicolumn{2}{c}{\textbf{Point}}
      & \multicolumn{2}{c}{\textbf{Variance}}
      & \multicolumn{2}{c}{\textbf{$z$-based}}
      & \multicolumn{2}{c}{\textbf{$t$-based}} \\
      \cmidrule(lr){3-4}\cmidrule(lr){5-6}\cmidrule(lr){7-8}\cmidrule(lr){9-10}
      \textbf{$I^2$}
      & \textbf{Method}
      & \textbf{\%Bias} & \textbf{\%MSE}
      & \textbf{\%Bias} & \textbf{\%MSE}
      & \textbf{CP}    & \textbf{AW}
      & \textbf{CP}    & \textbf{AW} \\
      \midrule
      $0\%$   & DiVE   &  0.360 &  7.123 &  -1.551 & 24.035 & 0.901 & 0.212 & 0.936 & 0.244 \\
              & QE--RE  &  0.345 &  7.192 &  15.610 & 15.198 & 0.945 & 0.234 & 0.967 & 0.270 \\
              & QE--FE  &  0.356 &  7.241 &  -4.442 &  0.462 & 0.927 & 0.215 & 0.957 & 0.248 \\
      \midrule
      $25\%$  & DiVE   & -0.339 &  8.461 &   0.643 & 24.522 & 0.916 & 0.247 & 0.952 & 0.285 \\
              & QE--RE  & -0.302 &  8.558 &   7.366 & 18.988 & 0.937 & 0.258 & 0.968 & 0.298 \\
              & QE--FE  & -0.321 &  8.679 & -28.388 &  8.192 & 0.898 & 0.215 & 0.936 & 0.248 \\
      \midrule
      $50\%$  & DiVE   & -0.030 & 12.892 &   0.792 & 21.838 & 0.915 & 0.304 & 0.949 & 0.350 \\
              & QE--RE  & -0.024 & 12.900 &   1.907 & 20.883 & 0.921 & 0.306 & 0.959 & 0.354 \\
              & QE--FE  &  0.003 & 13.117 & -52.298 & 27.417 & 0.836 & 0.215 & 0.890 & 0.248 \\
      \midrule
      $75\%$  & DiVE   &  2.044 & 24.940 &  -0.161 & 21.816 & 0.932 & 0.427 & 0.959 & 0.493 \\
              & QE--RE  &  2.022 & 25.088 &   0.446 & 22.625 & 0.932 & 0.428 & 0.956 & 0.495 \\
              & QE--FE  &  1.987 & 25.903 & -76.142 & 57.992 & 0.648 & 0.215 & 0.736 & 0.248 \\
      \bottomrule
    \end{tabular}
    \begin{tablenotes}[flushleft]\footnotesize
        \item \textit{Note.} Results are based on 1,000 simulation replicates per heterogeneity level.
        \%Bias, relative bias; \%MSE, relative mean-squared error; CP, empirical coverage probability of 95\% confidence intervals; AW, average width.
        $z$-based intervals use the standard normal critical value, while $t$-based intervals use quantiles from a $t$-distribution with $N\!-\!1$ degrees of freedom.
        The true pooled difference is the benchmark for point-estimate metrics; variance metrics are evaluated against method-specific analytic targets defined in Appendix~B.
    \end{tablenotes}
  \end{threeparttable}
\end{table}

\begin{table}[t]
  \centering
  \caption{Performance of point and variance estimators under a normal scenario ($N=10$, fixed sample sizes $n=300$).}
  \label{tab:res_F_normN10n300}
  \renewcommand{\arraystretch}{1.15}
  \begin{threeparttable}
    \begin{tabular}{@{}cc*{8}{S}@{}}
      \toprule
      &
      & \multicolumn{2}{c}{\textbf{Point}}
      & \multicolumn{2}{c}{\textbf{Variance}}
      & \multicolumn{2}{c}{\textbf{$z$-based}}
      & \multicolumn{2}{c}{\textbf{$t$-based}} \\
      \cmidrule(lr){3-4}\cmidrule(lr){5-6}\cmidrule(lr){7-8}\cmidrule(lr){9-10}
      \textbf{$I^2$}
      & \textbf{Method}
      & \textbf{\%Bias} & \textbf{\%MSE}
      & \textbf{\%Bias} & \textbf{\%MSE}
      & \textbf{CP}    & \textbf{AW}
      & \textbf{CP}    & \textbf{AW} \\
      \midrule
      $0\%$   & DiVE   &  0.020 &  6.568 &  -2.001 & 20.043 & 0.915 & 0.122 & 0.948 & 0.141 \\
              & QE--RE  &  0.030 &  6.563 &  16.065 & 11.611 & 0.965 & 0.136 & 0.983 & 0.157 \\
              & QE--FE  &  0.037 &  6.573 &  -1.575 &  0.111 & 0.956 & 0.126 & 0.980 & 0.145 \\
      \midrule
      $25\%$  & DiVE   &  1.165 &  9.290 &  -0.636 & 23.737 & 0.907 & 0.142 & 0.942 & 0.164 \\
              & QE--RE  &  1.197 &  9.300 &   6.438 & 18.011 & 0.939 & 0.149 & 0.964 & 0.172 \\
              & QE--FE  &  1.219 &  9.317 & -26.168 &  6.898 & 0.902 & 0.126 & 0.940 & 0.145 \\
      \midrule
      $50\%$  & DiVE   &  0.088 & 12.158 &  -0.290 & 23.028 & 0.926 & 0.174 & 0.961 & 0.201 \\
              & QE--RE  &  0.025 & 12.215 &   1.286 & 21.491 & 0.931 & 0.176 & 0.965 & 0.203 \\
              & QE--FE  & -0.023 & 12.268 & -50.841 & 25.870 & 0.850 & 0.126 & 0.901 & 0.145 \\
      \midrule
      $75\%$  & DiVE   & -0.080 & 25.174 &   2.048 & 22.806 & 0.926 & 0.249 & 0.955 & 0.288 \\
              & QE--RE  & -0.072 & 25.152 &   2.301 & 23.175 & 0.924 & 0.250 & 0.956 & 0.288 \\
              & QE--FE  & -0.007 & 25.190 & -75.403 & 56.862 & 0.671 & 0.126 & 0.756 & 0.145 \\
      \bottomrule
    \end{tabular}
    \begin{tablenotes}[flushleft]\footnotesize
        \item \textit{Note.} Results are based on 1,000 simulation replicates per heterogeneity level.
        \%Bias, relative bias; \%MSE, relative mean-squared error; CP, empirical coverage probability of 95\% confidence intervals; AW, average width.
        $z$-based intervals use the standard normal critical value, while $t$-based intervals use quantiles from a $t$-distribution with $N\!-\!1$ degrees of freedom.
        The true pooled difference is the benchmark for point-estimate metrics; variance metrics are evaluated against method-specific analytic targets defined in Appendix~B.
    \end{tablenotes}
  \end{threeparttable}
\end{table}

\begin{table}[t]
  \centering
  \caption{Performance of point and variance estimators under a normal scenario ($N=30$, fixed sample sizes $n=100$).}
  \label{tab:res_F_normN30n100}
  \renewcommand{\arraystretch}{1.15}
  \begin{threeparttable}
    \begin{tabular}{@{}cc*{8}{S}@{}}
      \toprule
      &
      & \multicolumn{2}{c}{\textbf{Point}}
      & \multicolumn{2}{c}{\textbf{Variance}}
      & \multicolumn{2}{c}{\textbf{$z$-based}}
      & \multicolumn{2}{c}{\textbf{$t$-based}} \\
      \cmidrule(lr){3-4}\cmidrule(lr){5-6}\cmidrule(lr){7-8}\cmidrule(lr){9-10}
      \textbf{$I^2$}
      & \textbf{Method}
      & \textbf{\%Bias} & \textbf{\%MSE}
      & \textbf{\%Bias} & \textbf{\%MSE}
      & \textbf{CP}    & \textbf{AW}
      & \textbf{CP}    & \textbf{AW} \\
      \midrule
      $0\%$   & DiVE   & -0.873 &  2.233 &  -2.828 &  6.895 & 0.932 & 0.124 & 0.940 & 0.129 \\
              & QE--RE  & -0.892 &  2.266 &   6.276 &  3.632 & 0.950 & 0.130 & 0.959 & 0.136 \\
              & QE--FE  & -0.901 &  2.270 &  -4.762 &  0.322 & 0.938 & 0.124 & 0.950 & 0.129 \\
      \midrule
      $25\%$  & DiVE   &  0.051 &  2.941 &  -0.606 &  6.661 & 0.934 & 0.145 & 0.947 & 0.151 \\
              & QE--RE  &  0.105 &  2.964 &   0.778 &  5.826 & 0.946 & 0.146 & 0.955 & 0.152 \\
              & QE--FE  &  0.141 &  2.977 & -28.499 &  8.169 & 0.885 & 0.124 & 0.907 & 0.129 \\
      \midrule
      $50\%$  & DiVE   & -1.235 &  4.223 &   0.253 &  7.173 & 0.938 & 0.178 & 0.945 & 0.186 \\
              & QE--RE  & -1.206 &  4.273 &   0.516 &  7.231 & 0.939 & 0.178 & 0.946 & 0.186 \\
              & QE--FE  & -1.196 &  4.385 & -52.307 & 27.381 & 0.820 & 0.124 & 0.836 & 0.129 \\
      \midrule
      $75\%$  & DiVE   &  0.019 &  8.990 &  -0.041 &  6.998 & 0.933 & 0.251 & 0.948 & 0.262 \\
              & QE--RE  & -0.042 &  9.005 &   0.232 &  7.390 & 0.936 & 0.252 & 0.946 & 0.263 \\
              & QE--FE  & -0.203 &  9.208 & -76.149 & 57.992 & 0.631 & 0.124 & 0.653 & 0.129 \\
      \bottomrule
    \end{tabular}
    \begin{tablenotes}[flushleft]\footnotesize
        \item \textit{Note.} Results are based on 1,000 simulation replicates per heterogeneity level.
        \%Bias, relative bias; \%MSE, relative mean-squared error; CP, empirical coverage probability of 95\% confidence intervals; AW, average width.
        $z$-based intervals use the standard normal critical value, while $t$-based intervals use quantiles from a $t$-distribution with $N\!-\!1$ degrees of freedom.
        The true pooled difference is the benchmark for point-estimate metrics; variance metrics are evaluated against method-specific analytic targets defined in Appendix~B.
    \end{tablenotes}
  \end{threeparttable}
\end{table}

\begin{table}[t]
  \centering
  \caption{Performance of point and variance estimators under a normal scenario ($N=30$, fixed sample sizes $n=300$).}
  \label{tab:res_F_normN30n300}
  \renewcommand{\arraystretch}{1.15}
  \begin{threeparttable}
    \begin{tabular}{@{}cc*{8}{S}@{}}
      \toprule
      &
      & \multicolumn{2}{c}{\textbf{Point}}
      & \multicolumn{2}{c}{\textbf{Variance}}
      & \multicolumn{2}{c}{\textbf{$z$-based}}
      & \multicolumn{2}{c}{\textbf{$t$-based}} \\
      \cmidrule(lr){3-4}\cmidrule(lr){5-6}\cmidrule(lr){7-8}\cmidrule(lr){9-10}
      \textbf{$I^2$}
      & \textbf{Method}
      & \textbf{\%Bias} & \textbf{\%MSE}
      & \textbf{\%Bias} & \textbf{\%MSE}
      & \textbf{CP}    & \textbf{AW}
      & \textbf{CP}    & \textbf{AW} \\
      \midrule
      $0\%$   & DiVE   &  0.404 &  2.020 &   0.879 &  7.275 & 0.944 & 0.073 & 0.955 & 0.076 \\
              & QE--RE  &  0.382 &  2.021 &  10.412 &  4.467 & 0.963 & 0.077 & 0.970 & 0.080 \\
              & QE--FE  &  0.380 &  2.024 &  -1.780 &  0.062 & 0.956 & 0.073 & 0.964 & 0.076 \\
      \midrule
      $25\%$  & DiVE   & -0.006 &  2.975 &  -0.930 &  6.702 & 0.942 & 0.083 & 0.947 & 0.087 \\
              & QE--RE  &  0.051 &  2.980 &   0.688 &  5.641 & 0.947 & 0.084 & 0.952 & 0.088 \\
              & QE--FE  &  0.088 &  2.981 & -26.279 &  6.923 & 0.898 & 0.073 & 0.920 & 0.076 \\
      \midrule
      $50\%$  & DiVE   &  0.620 &  4.060 &  -1.814 &  6.952 & 0.946 & 0.102 & 0.958 & 0.106 \\
              & QE--RE  &  0.621 &  4.078 &  -1.766 &  6.878 & 0.945 & 0.102 & 0.955 & 0.106 \\
              & QE--FE  &  0.630 &  4.104 & -50.768 & 25.782 & 0.837 & 0.073 & 0.860 & 0.076 \\
      \midrule
      $75\%$  & DiVE   &  1.228 &  8.921 &   0.537 &  6.818 & 0.937 & 0.146 & 0.942 & 0.152 \\
              & QE--RE  &  1.212 &  8.917 &   0.595 &  6.885 & 0.937 & 0.146 & 0.941 & 0.152 \\
              & QE--FE  &  1.141 &  8.937 & -75.407 & 56.865 & 0.683 & 0.073 & 0.700 & 0.076 \\
      \bottomrule
    \end{tabular}
    \begin{tablenotes}[flushleft]\footnotesize
        \item \textit{Note.} Results are based on 1,000 simulation replicates per heterogeneity level.
        \%Bias, relative bias; \%MSE, relative mean-squared error; CP, empirical coverage probability of 95\% confidence intervals; AW, average width.
        $z$-based intervals use the standard normal critical value, while $t$-based intervals use quantiles from a $t$-distribution with $N\!-\!1$ degrees of freedom.
        The true pooled difference is the benchmark for point-estimate metrics; variance metrics are evaluated against method-specific analytic targets defined in Appendix~B.
    \end{tablenotes}
  \end{threeparttable}
\end{table}

\begin{figure}[t]
  \centering
  \includegraphics[width=\linewidth]{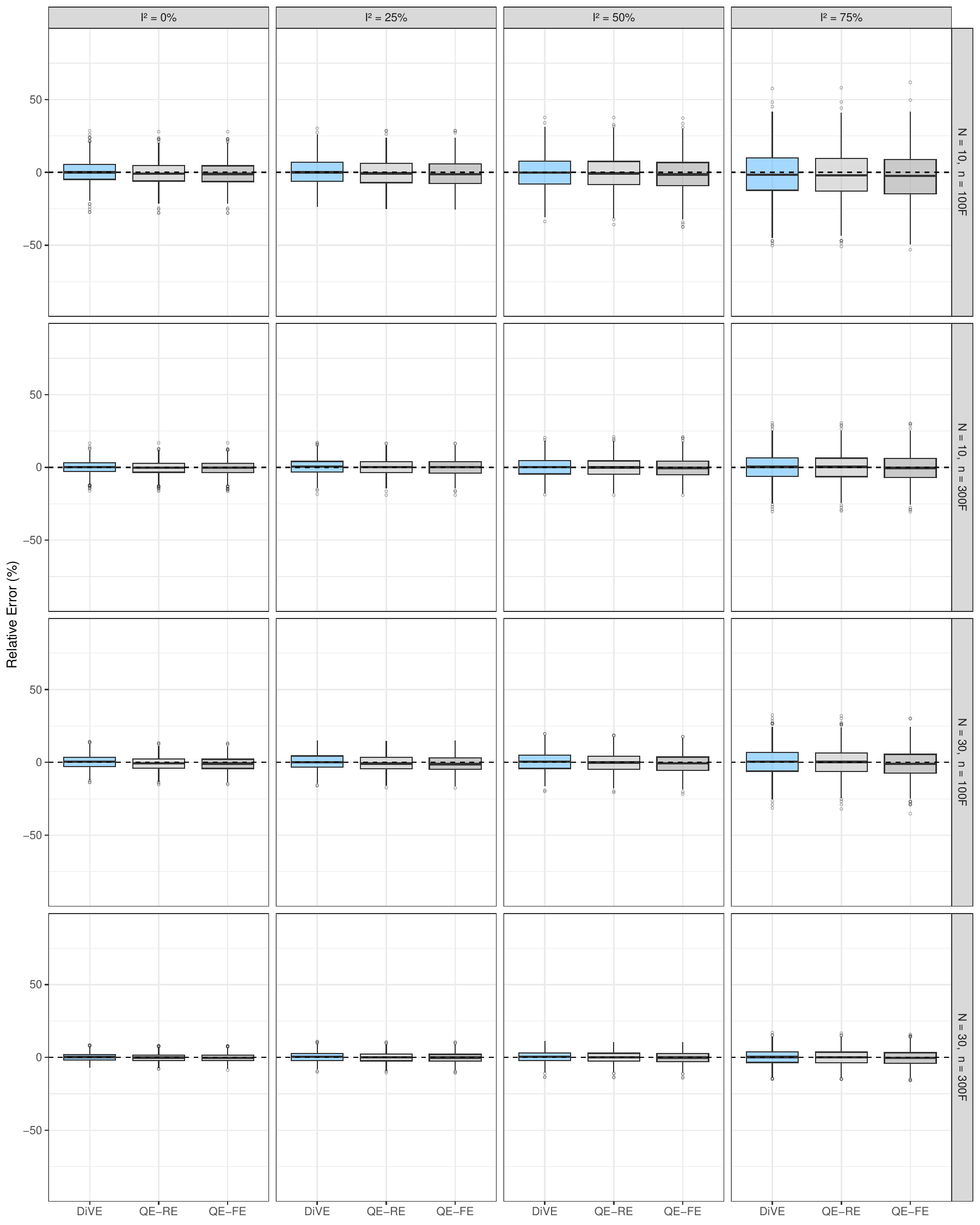}
  \caption{Distribution of relative errors for point estimate (skew-normal scenario with fixed sample sizes).\\[0.5em]
    {\footnotesize
      \textit{Note.} Panels show relative errors for the point estimator; methods compared are DiVE, the QE method under a RE model (QE--RE), and the QE method under an FE model (QE--FE). The four columns of panels correspond to $I^2\in\{0,25,50,75\}\%$. The central line in each box denotes the median error; boxes span the interquartile range; the dashed reference line at zero indicates no error.
    }
  }
  \label{fig:res_box.p_skew_F}
\end{figure}

\begin{figure}[t]
  \centering
  \includegraphics[width=\linewidth]{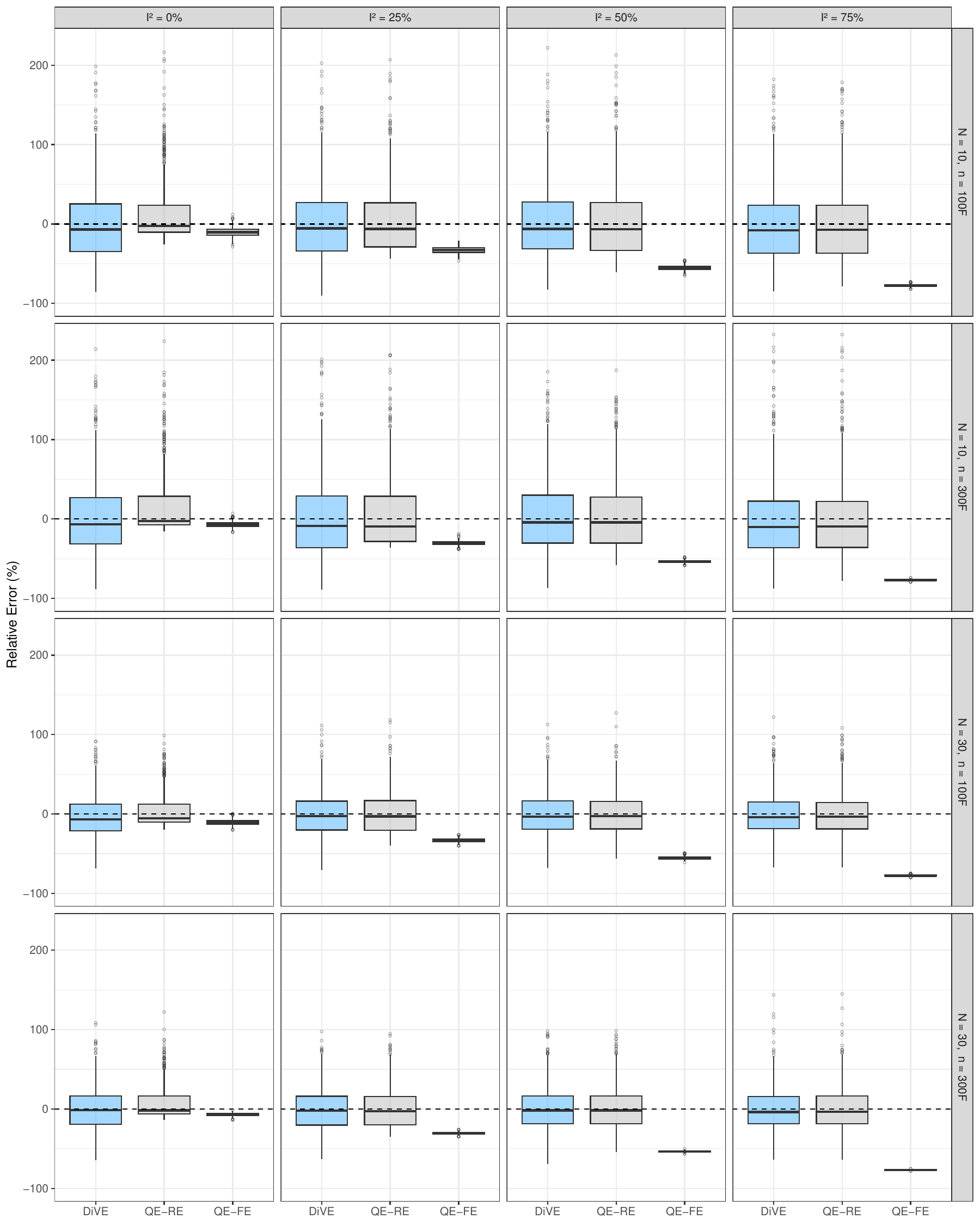}
  \caption{Distribution of relative errors for variance estimate (skew-normal scenario with fixed sample sizes).\\[0.5em]
    {\footnotesize
      \textit{Note.} Panels show relative errors for the variance estimator; methods compared are DiVE, the QE method under a RE model (QE--RE), and the QE method under an FE model (QE--FE). The four columns of panels correspond to $I^2\in\{0,25,50,75\}\%$. The central line in each box denotes the median error; boxes span the interquartile range; the dashed reference line at zero indicates no error.
    }
  }
  \label{fig:res_box.v_skew_F}
\end{figure}

\FloatBarrier
\begin{table}[t]
  \centering
  \caption{Performance of point and variance estimators under a skew-normal scenario ($N=10$, fixed sample sizes $n=100$).}
  \label{tab:res_F_skewN10n100}
  \renewcommand{\arraystretch}{1.15}
  \begin{threeparttable}
    \begin{tabular}{@{}cc*{8}{S}@{}}
      \toprule
      &
      & \multicolumn{2}{c}{\textbf{Point}}
      & \multicolumn{2}{c}{\textbf{Variance}}
      & \multicolumn{2}{c}{\textbf{$z$-based}}
      & \multicolumn{2}{c}{\textbf{$t$-based}} \\
      \cmidrule(lr){3-4}\cmidrule(lr){5-6}\cmidrule(lr){7-8}\cmidrule(lr){9-10}
      \textbf{$I^2$}
      & \textbf{Method}
      & \textbf{\%Bias} & \textbf{\%MSE}
      & \textbf{\%Bias} & \textbf{\%MSE}
      & \textbf{CP}    & \textbf{AW}
      & \textbf{CP}    & \textbf{AW} \\
      \midrule
      $0\%$   & DiVE   &  0.404 & 0.683 &  -0.735 & 20.981 & 0.918 & 1.057 & 0.943 & 1.220 \\
              & QE--RE  & -0.456 & 0.691 &  12.217 & 13.722 & 0.951 & 1.144 & 0.976 & 1.320 \\
              & QE--FE  & -0.688 & 0.702 & -10.256 &  1.391 & 0.936 & 1.032 & 0.965 & 1.192 \\
      \midrule
      $25\%$  & DiVE   &  0.301 & 0.907 &  -0.291 & 21.056 & 0.917 & 1.223 & 0.952 & 1.412 \\
              & QE--RE  & -0.420 & 0.917 &   4.428 & 16.909 & 0.940 & 1.265 & 0.973 & 1.460 \\
              & QE--FE  & -0.849 & 0.938 & -32.761 & 10.915 & 0.875 & 1.032 & 0.932 & 1.191 \\
      \midrule
      $50\%$  & DiVE   &  0.006 & 1.287 &   1.333 & 21.842 & 0.929 & 1.512 & 0.961 & 1.745 \\
              & QE--RE  & -0.500 & 1.300 &   2.126 & 21.613 & 0.938 & 1.521 & 0.967 & 1.756 \\
              & QE--FE  & -1.167 & 1.354 & -55.130 & 30.471 & 0.807 & 1.032 & 0.871 & 1.192 \\
      \midrule
      $75\%$  & DiVE   & -1.010 & 2.750 &  -1.911 & 20.969 & 0.912 & 2.101 & 0.943 & 2.425 \\
              & QE--RE  & -1.302 & 2.773 &  -1.302 & 21.571 & 0.908 & 2.107 & 0.946 & 2.432 \\
              & QE--FE  & -2.273 & 2.930 & -77.621 & 60.271 & 0.599 & 1.031 & 0.684 & 1.190 \\
      \bottomrule
    \end{tabular}
    \begin{tablenotes}[flushleft]\footnotesize
        \item \textit{Note.} Results are based on 1,000 simulation replicates per heterogeneity level.
        \%Bias, relative bias; \%MSE, relative mean-squared error; CP, empirical coverage probability of 95\% confidence intervals; AW, average width.
        $z$-based intervals use the standard normal critical value, while $t$-based intervals use quantiles from a $t$-distribution with $N\!-\!1$ degrees of freedom.
        The true pooled difference is the benchmark for point-estimate metrics; variance metrics are evaluated against method-specific analytic targets defined in Appendix~B.
    \end{tablenotes}
  \end{threeparttable}
\end{table}

\begin{table}[t]
  \centering
  \caption{Performance of point and variance estimators under a skew-normal scenario ($N=10$, fixed sample sizes $n=300$).}
  \label{tab:res_F_skewN10n300}
  \renewcommand{\arraystretch}{1.15}
  \begin{threeparttable}
    \begin{tabular}{@{}cc*{8}{S}@{}}
      \toprule
      &
      & \multicolumn{2}{c}{\textbf{Point}}
      & \multicolumn{2}{c}{\textbf{Variance}}
      & \multicolumn{2}{c}{\textbf{$z$-based}}
      & \multicolumn{2}{c}{\textbf{$t$-based}} \\
      \cmidrule(lr){3-4}\cmidrule(lr){5-6}\cmidrule(lr){7-8}\cmidrule(lr){9-10}
      \textbf{$I^2$}
      & \textbf{Method}
      & \textbf{\%Bias} & \textbf{\%MSE}
      & \textbf{\%Bias} & \textbf{\%MSE}
      & \textbf{CP}    & \textbf{AW}
      & \textbf{CP}    & \textbf{AW} \\
      \midrule
      $0\%$   & DiVE   &  0.117 & 0.226 &   0.948 & 21.375 & 0.931 & 0.616 & 0.960 & 0.711 \\
              & QE--RE  & -0.242 & 0.228 &  14.678 & 14.053 & 0.960 & 0.668 & 0.974 & 0.771 \\
              & QE--FE  & -0.337 & 0.229 &  -7.135 &  0.624 & 0.942 & 0.607 & 0.962 & 0.700 \\
      \midrule
      $25\%$  & DiVE   &  0.362 & 0.302 &   0.012 & 23.765 & 0.915 & 0.706 & 0.949 & 0.815 \\
              & QE--RE  &  0.064 & 0.301 &   5.697 & 19.148 & 0.939 & 0.734 & 0.966 & 0.847 \\
              & QE--FE  & -0.093 & 0.303 & -30.181 &  9.177 & 0.899 & 0.607 & 0.941 & 0.701 \\
      \midrule
      $50\%$  & DiVE   &  0.143 & 0.465 &   2.761 & 21.813 & 0.915 & 0.880 & 0.956 & 1.015 \\
              & QE--RE  & -0.084 & 0.467 &   3.659 & 20.831 & 0.921 & 0.886 & 0.957 & 1.023 \\
              & QE--FE  & -0.359 & 0.476 & -53.564 & 28.718 & 0.801 & 0.607 & 0.856 & 0.700 \\
      \midrule
      $75\%$  & DiVE   &  0.182 & 0.912 &  -1.887 & 22.543 & 0.916 & 1.212 & 0.949 & 1.399 \\
              & QE--RE  &  0.072 & 0.913 &  -1.719 & 22.415 & 0.917 & 1.214 & 0.949 & 1.401 \\
              & QE--FE  & -0.275 & 0.935 & -76.823 & 59.024 & 0.657 & 0.606 & 0.723 & 0.700 \\
      \bottomrule
    \end{tabular}
    \begin{tablenotes}[flushleft]\footnotesize
        \item \textit{Note.} Results are based on 1,000 simulation replicates per heterogeneity level.
        \%Bias, relative bias; \%MSE, relative mean-squared error; CP, empirical coverage probability of 95\% confidence intervals; AW, average width.
        $z$-based intervals use the standard normal critical value, while $t$-based intervals use quantiles from a $t$-distribution with $N\!-\!1$ degrees of freedom.
        The true pooled difference is the benchmark for point-estimate metrics; variance metrics are evaluated against method-specific analytic targets defined in Appendix~B.
    \end{tablenotes}
  \end{threeparttable}
\end{table}

\begin{table}[t]
  \centering
  \caption{Performance of point and variance estimators under a skew-normal scenario ($N=30$, fixed sample sizes $n=100$).}
  \label{tab:res_F_skewN30n100}
  \renewcommand{\arraystretch}{1.15}
  \begin{threeparttable}
    \begin{tabular}{@{}cc*{8}{S}@{}}
      \toprule
      &
      & \multicolumn{2}{c}{\textbf{Point}}
      & \multicolumn{2}{c}{\textbf{Variance}}
      & \multicolumn{2}{c}{\textbf{$z$-based}}
      & \multicolumn{2}{c}{\textbf{$t$-based}} \\
      \cmidrule(lr){3-4}\cmidrule(lr){5-6}\cmidrule(lr){7-8}\cmidrule(lr){9-10}
      \textbf{$I^2$}
      & \textbf{Method}
      & \textbf{\%Bias} & \textbf{\%MSE}
      & \textbf{\%Bias} & \textbf{\%MSE}
      & \textbf{CP}    & \textbf{AW}
      & \textbf{CP}    & \textbf{AW} \\
      \midrule
      $0\%$   & DiVE   &  0.448 & 0.221 &  -3.093 &  6.628 & 0.935 & 0.614 & 0.951 & 0.641 \\
              & QE--RE  & -0.704 & 0.228 &   3.376 &  3.964 & 0.949 & 0.637 & 0.954 & 0.665 \\
              & QE--FE  & -0.886 & 0.232 & -10.673 &  1.247 & 0.939 & 0.595 & 0.946 & 0.621 \\
      \midrule
      $25\%$  & DiVE   &  0.382 & 0.300 &  -0.820 &  6.831 & 0.941 & 0.718 & 0.949 & 0.749 \\
              & QE--RE  & -0.517 & 0.309 &  -0.174 &  6.330 & 0.941 & 0.721 & 0.951 & 0.752 \\
              & QE--FE  & -0.953 & 0.324 & -33.031 & 10.966 & 0.877 & 0.595 & 0.895 & 0.621 \\
      \midrule
      $50\%$  & DiVE   &  0.503 & 0.436 &  -0.366 &  6.822 & 0.944 & 0.881 & 0.958 & 0.919 \\
              & QE--RE  & -0.122 & 0.437 &   0.047 &  7.027 & 0.949 & 0.883 & 0.956 & 0.921 \\
              & QE--FE  & -0.870 & 0.456 & -55.335 & 30.649 & 0.799 & 0.595 & 0.817 & 0.621 \\
      \midrule
      $75\%$  & DiVE   &  0.243 & 0.888 &  -0.510 &  6.962 & 0.941 & 1.245 & 0.951 & 1.299 \\
              & QE--RE  & -0.059 & 0.883 &  -0.206 &  7.250 & 0.942 & 1.247 & 0.950 & 1.301 \\
              & QE--FE  & -1.044 & 0.903 & -77.654 & 60.308 & 0.657 & 0.595 & 0.681 & 0.621 \\
      \bottomrule
    \end{tabular}
    \begin{tablenotes}[flushleft]\footnotesize
        \item \textit{Note.} Results are based on 1,000 simulation replicates per heterogeneity level.
        \%Bias, relative bias; \%MSE, relative mean-squared error; CP, empirical coverage probability of 95\% confidence intervals; AW, average width.
        $z$-based intervals use the standard normal critical value, while $t$-based intervals use quantiles from a $t$-distribution with $N\!-\!1$ degrees of freedom.
        The true pooled difference is the benchmark for point-estimate metrics; variance metrics are evaluated against method-specific analytic targets defined in Appendix~B.
    \end{tablenotes}
  \end{threeparttable}
\end{table}

\begin{table}[t]
  \centering
  \caption{Performance of point and variance estimators under a skew-normal scenario ($N=30$, fixed sample sizes $n=300$).}
  \label{tab:res_F_skewN30n300}
  \renewcommand{\arraystretch}{1.15}
  \begin{threeparttable}
    \begin{tabular}{@{}cc*{8}{S}@{}}
      \toprule
      &
      & \multicolumn{2}{c}{\textbf{Point}}
      & \multicolumn{2}{c}{\textbf{Variance}}
      & \multicolumn{2}{c}{\textbf{$z$-based}}
      & \multicolumn{2}{c}{\textbf{$t$-based}} \\
      \cmidrule(lr){3-4}\cmidrule(lr){5-6}\cmidrule(lr){7-8}\cmidrule(lr){9-10}
      \textbf{$I^2$}
      & \textbf{Method}
      & \textbf{\%Bias} & \textbf{\%MSE}
      & \textbf{\%Bias} & \textbf{\%MSE}
      & \textbf{CP}    & \textbf{AW}
      & \textbf{CP}    & \textbf{AW} \\
      \midrule
      $0\%$   & DiVE   &  0.232 & 0.074 &   0.050 &  7.112 & 0.947 & 0.360 & 0.955 & 0.376 \\
              & QE--RE  & -0.189 & 0.074 &   7.246 &  4.264 & 0.957 & 0.375 & 0.966 & 0.391 \\
              & QE--FE  & -0.259 & 0.074 &  -6.995 &  0.526 & 0.944 & 0.351 & 0.954 & 0.366 \\
      \midrule
      $25\%$  & DiVE   &  0.278 & 0.115 &  -0.904 &  6.735 & 0.925 & 0.414 & 0.937 & 0.432 \\
              & QE--RE  & -0.062 & 0.114 &   0.276 &  5.897 & 0.925 & 0.417 & 0.933 & 0.435 \\
              & QE--FE  & -0.211 & 0.115 & -30.315 &  9.211 & 0.877 & 0.350 & 0.892 & 0.366 \\
      \midrule
      $50\%$  & DiVE   &  0.315 & 0.155 &   0.274 &  6.873 & 0.939 & 0.510 & 0.951 & 0.533 \\
              & QE--RE  &  0.087 & 0.155 &   0.471 &  6.957 & 0.942 & 0.511 & 0.951 & 0.533 \\
              & QE--FE  & -0.170 & 0.158 & -53.582 & 28.720 & 0.802 & 0.350 & 0.816 & 0.366 \\
      \midrule
      $75\%$  & DiVE   &  0.164 & 0.276 &  -0.680 &  6.900 & 0.952 & 0.718 & 0.959 & 0.749 \\
              & QE--RE  &  0.053 & 0.275 &  -0.503 &  6.977 & 0.950 & 0.719 & 0.959 & 0.750 \\
              & QE--FE  & -0.294 & 0.279 & -76.776 & 58.948 & 0.662 & 0.350 & 0.694 & 0.366 \\
      \bottomrule
    \end{tabular}
    \begin{tablenotes}[flushleft]\footnotesize
        \item \textit{Note.} Results are based on 1,000 simulation replicates per heterogeneity level.
        \%Bias, relative bias; \%MSE, relative mean-squared error; CP, empirical coverage probability of 95\% confidence intervals; AW, average width.
        $z$-based intervals use the standard normal critical value, while $t$-based intervals use quantiles from a $t$-distribution with $N\!-\!1$ degrees of freedom.
        The true pooled difference is the benchmark for point-estimate metrics; variance metrics are evaluated against method-specific analytic targets defined in Appendix~B.
    \end{tablenotes}
  \end{threeparttable}
\end{table}

\FloatBarrier
\begin{figure}[t]
  \centering
  \includegraphics[width=\linewidth]{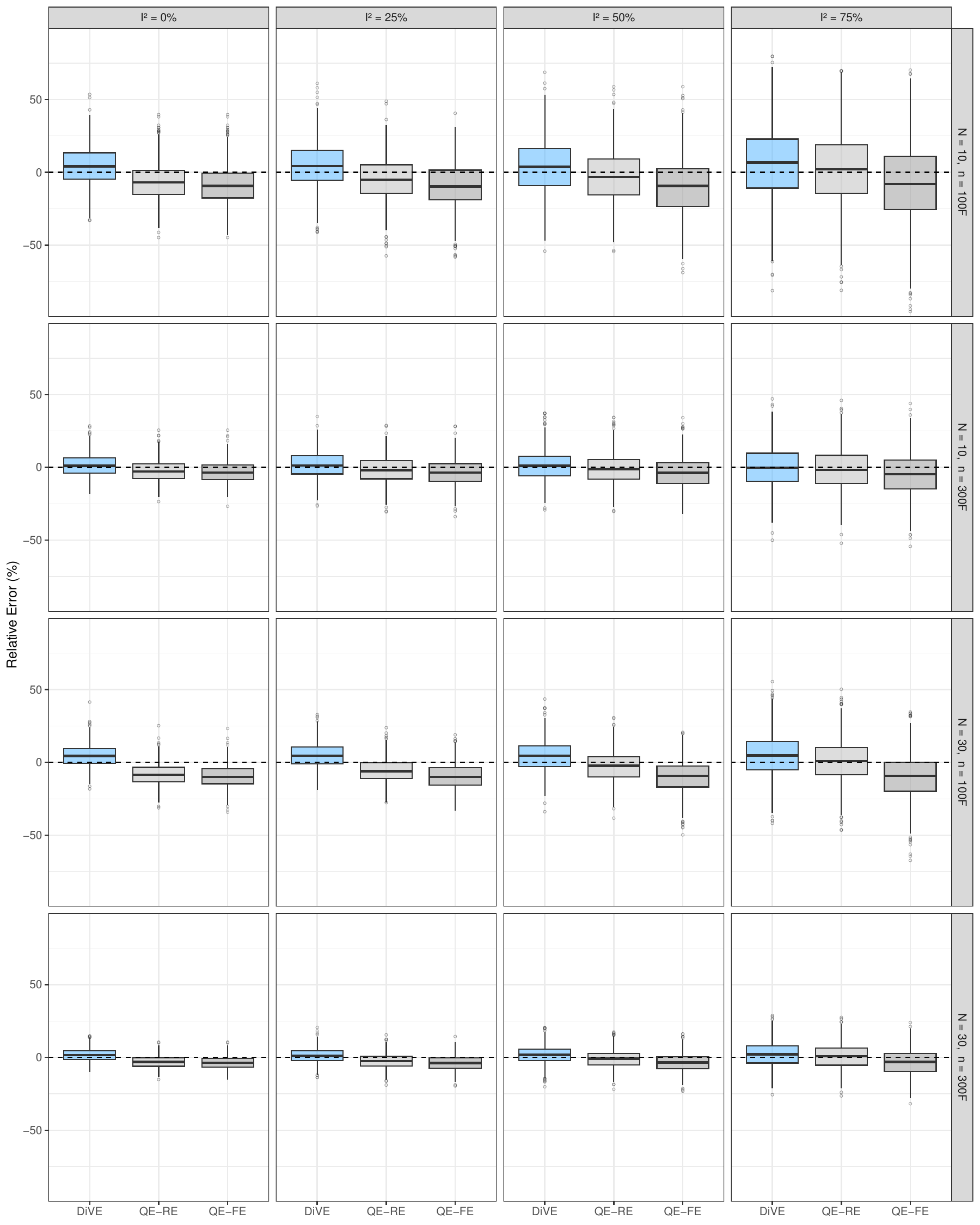}
  \caption{Distribution of relative errors for point estimate (log-normal scenario with fixed sample sizes).\\[0.5em]
    {\footnotesize
      \textit{Note.} Panels show relative errors for the point estimator; methods compared are DiVE, the QE method under a RE model (QE--RE), and the QE method under an FE model (QE--FE). The four columns of panels correspond to $I^2\in\{0,25,50,75\}\%$. The central line in each box denotes the median error; boxes span the interquartile range; the dashed reference line at zero indicates no error.
    }
  }
  \label{fig:res_box.p_log_F}
\end{figure}

\begin{figure}[t]
  \centering
  \includegraphics[width=\linewidth]{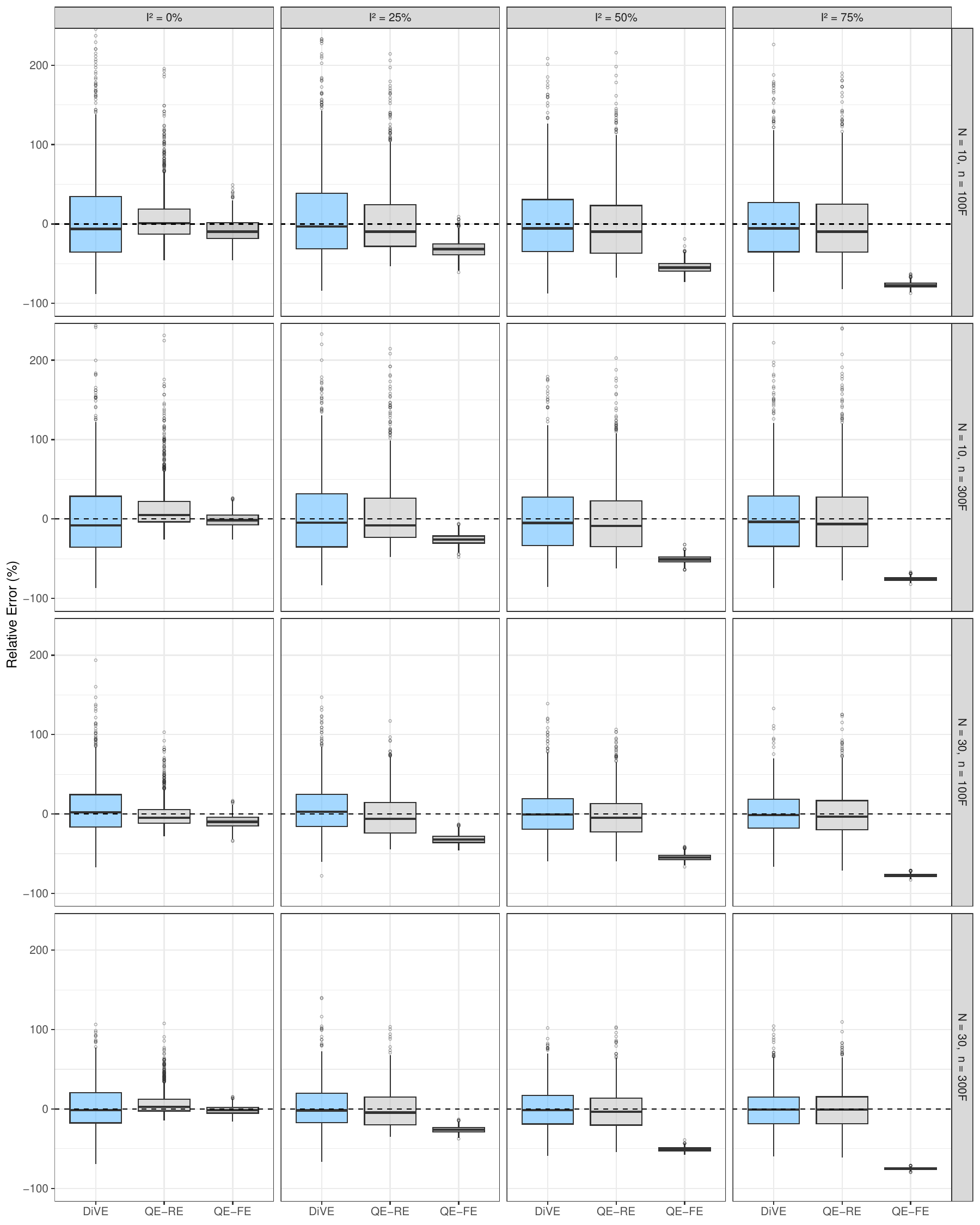}
  \caption{Distribution of relative errors for variance estimate (log-normal scenario with fixed sample sizes).\\[0.5em]
    {\footnotesize
      \textit{Note.} Panels show relative errors for the variance estimator; methods compared are DiVE, the QE method under a RE model (QE--RE), and the QE method under an FE model (QE--FE). The four columns of panels correspond to $I^2\in\{0,25,50,75\}\%$. The central line in each box denotes the median error; boxes span the interquartile range; the dashed reference line at zero indicates no error.
    }
  }
  \label{fig:res_box.v_log_F}
\end{figure}

\FloatBarrier
\begin{table}[t]
  \centering
  \caption{Performance of point and variance estimators under a log-normal scenario ($N=10$, fixed sample sizes $n=100$).}
  \label{tab:res_F_logN10n100}
  \renewcommand{\arraystretch}{1.15}
  \begin{threeparttable}
    \begin{tabular}{@{}cc*{8}{S}@{}}
      \toprule
      &
      & \multicolumn{2}{c}{\textbf{Point}}
      & \multicolumn{2}{c}{\textbf{Variance}}
      & \multicolumn{2}{c}{\textbf{$z$-based}}
      & \multicolumn{2}{c}{\textbf{$t$-based}} \\
      \cmidrule(lr){3-4}\cmidrule(lr){5-6}\cmidrule(lr){7-8}\cmidrule(lr){9-10}
      \textbf{$I^2$}
      & \textbf{Method}
      & \textbf{\%Bias} & \textbf{\%MSE}
      & \textbf{\%Bias} & \textbf{\%MSE}
      & \textbf{CP}    & \textbf{AW}
      & \textbf{CP}    & \textbf{AW} \\
      \midrule
      $0\%$   & DiVE   &  4.508 & 1.992 &   6.535 & 38.170 & 0.909 &  6.307 & 0.946 &  7.279 \\
              & QE--RE  & -6.531 & 2.091 &   9.253 & 11.898 & 0.904 &  6.568 & 0.946 &  7.581 \\
              & QE--FE  & -8.672 & 2.522 &  -8.115 &  2.833 & 0.841 &  6.063 & 0.900 &  6.998 \\
      \midrule
      $25\%$  & DiVE   &  4.747 & 2.532 &   9.455 & 38.802 & 0.914 &  7.395 & 0.950 &  8.535 \\
              & QE--RE  & -4.742 & 2.368 &   4.522 & 21.183 & 0.926 &  7.344 & 0.957 &  8.476 \\
              & QE--FE  & -9.101 & 3.125 & -31.126 & 10.849 & 0.819 &  6.062 & 0.872 &  6.997 \\
      \midrule
      $50\%$  & DiVE   &  3.854 & 3.369 &   2.962 & 26.959 & 0.924 &  8.832 & 0.948 & 10.193 \\
              & QE--RE  & -3.068 & 3.191 &  -0.452 & 22.681 & 0.919 &  8.724 & 0.956 & 10.069 \\
              & QE--FE  & -9.668 & 4.522 & -54.239 & 29.941 & 0.721 &  6.052 & 0.792 &  6.985 \\
      \midrule
      $75\%$  & DiVE   &  5.956 & 7.147 &   1.792 & 23.837 & 0.911 & 12.447 & 0.943 & 14.366 \\
              & QE--RE  &  1.901 & 6.697 &  -0.865 & 24.288 & 0.916 & 12.270 & 0.955 & 14.162 \\
              & QE--FE  & -7.935 & 8.312 & -76.939 & 59.330 & 0.594 &  6.076 & 0.669 &  7.012 \\
      \bottomrule
    \end{tabular}
    \begin{tablenotes}[flushleft]\footnotesize
        \item \textit{Note.} Results are based on 1,000 simulation replicates per heterogeneity level.
        \%Bias, relative bias; \%MSE, relative mean-squared error; CP, empirical coverage probability of 95\% confidence intervals; AW, average width.
        $z$-based intervals use the standard normal critical value, while $t$-based intervals use quantiles from a $t$-distribution with $N\!-\!1$ degrees of freedom.
        The true pooled difference is the benchmark for point-estimate metrics; variance metrics are evaluated against method-specific analytic targets defined in Appendix~B.
    \end{tablenotes}
  \end{threeparttable}
\end{table}

\begin{table}[t]
  \centering
  \caption{Performance of point and variance estimators under a log-normal scenario ($N=10$, fixed sample sizes $n=300$).}
  \label{tab:res_F_logN10n300}
  \renewcommand{\arraystretch}{1.15}
  \begin{threeparttable}
    \begin{tabular}{@{}cc*{8}{S}@{}}
      \toprule
      &
      & \multicolumn{2}{c}{\textbf{Point}}
      & \multicolumn{2}{c}{\textbf{Variance}}
      & \multicolumn{2}{c}{\textbf{$z$-based}}
      & \multicolumn{2}{c}{\textbf{$t$-based}} \\
      \cmidrule(lr){3-4}\cmidrule(lr){5-6}\cmidrule(lr){7-8}\cmidrule(lr){9-10}
      \textbf{$I^2$}
      & \textbf{Method}
      & \textbf{\%Bias} & \textbf{\%MSE}
      & \textbf{\%Bias} & \textbf{\%MSE}
      & \textbf{CP}    & \textbf{AW}
      & \textbf{CP}    & \textbf{AW} \\
      \midrule
      $0\%$   & DiVE   &  1.429 & 0.599 &   1.022 & 27.156 & 0.902 & 3.566 & 0.939 & 4.116 \\
              & QE--RE  & -2.507 & 0.611 &  16.083 & 13.395 & 0.944 & 3.915 & 0.964 & 4.519 \\
              & QE--FE  & -3.320 & 0.674 &  -1.095 &  0.869 & 0.911 & 3.640 & 0.945 & 4.201 \\
      \midrule
      $25\%$  & DiVE   &  1.580 & 0.820 &   3.833 & 27.988 & 0.911 & 4.178 & 0.945 & 4.822 \\
              & QE--RE  & -1.790 & 0.794 &   7.068 & 18.139 & 0.937 & 4.307 & 0.961 & 4.971 \\
              & QE--FE  & -3.331 & 0.890 & -25.751 &  7.072 & 0.859 & 3.642 & 0.908 & 4.203 \\
      \midrule
      $50\%$  & DiVE   &  1.325 & 1.090 &   1.676 & 22.967 & 0.921 & 5.084 & 0.947 & 5.868 \\
              & QE--RE  & -1.122 & 1.066 &   0.731 & 20.941 & 0.918 & 5.080 & 0.952 & 5.863 \\
              & QE--FE  & -3.514 & 1.234 & -50.648 & 25.853 & 0.787 & 3.636 & 0.850 & 4.197 \\
      \midrule
      $75\%$  & DiVE   & -0.004 & 2.179 &   2.418 & 23.807 & 0.921 & 7.207 & 0.947 & 8.319 \\
              & QE--RE  & -1.340 & 2.181 &   1.471 & 24.470 & 0.913 & 7.171 & 0.945 & 8.277 \\
              & QE--FE  & -4.922 & 2.477 & -75.273 & 56.712 & 0.622 & 3.640 & 0.703 & 4.201 \\
      \bottomrule
    \end{tabular}
    \begin{tablenotes}[flushleft]\footnotesize
        \item \textit{Note.} Results are based on 1,000 simulation replicates per heterogeneity level.
        \%Bias, relative bias; \%MSE, relative mean-squared error; CP, empirical coverage probability of 95\% confidence intervals; AW, average width.
        $z$-based intervals use the standard normal critical value, while $t$-based intervals use quantiles from a $t$-distribution with $N\!-\!1$ degrees of freedom.
        The true pooled difference is the benchmark for point-estimate metrics; variance metrics are evaluated against method-specific analytic targets defined in Appendix~B.
      \end{tablenotes}
  \end{threeparttable}
\end{table}

\begin{table}[t]
  \centering
  \caption{Performance of point and variance estimators under a log-normal scenario ($N=30$, fixed sample sizes $n=100$).}
  \label{tab:res_F_logN30n100}
  \renewcommand{\arraystretch}{1.15}
  \begin{threeparttable}
    \begin{tabular}{@{}cc*{8}{S}@{}}
      \toprule
      &
      & \multicolumn{2}{c}{\textbf{Point}}
      & \multicolumn{2}{c}{\textbf{Variance}}
      & \multicolumn{2}{c}{\textbf{$z$-based}}
      & \multicolumn{2}{c}{\textbf{$t$-based}} \\
      \cmidrule(lr){3-4}\cmidrule(lr){5-6}\cmidrule(lr){7-8}\cmidrule(lr){9-10}
      \textbf{$I^2$}
      & \textbf{Method}
      & \textbf{\%Bias} & \textbf{\%MSE}
      & \textbf{\%Bias} & \textbf{\%MSE}
      & \textbf{CP}    & \textbf{AW}
      & \textbf{CP}    & \textbf{AW} \\
      \midrule
      $0\%$   & DiVE   &  4.611 & 0.791 &   6.423 & 11.888 & 0.913 & 3.735 & 0.930 & 3.897 \\
              & QE--RE  & -6.531 & 1.251 &  -0.102 &  3.196 & 0.764 & 3.649 & 0.795 & 3.808 \\
              & QE--FE  & -9.670 & 1.522 &  -9.599 &  1.604 & 0.692 & 3.480 & 0.727 & 3.631 \\
      \midrule
      $25\%$  & DiVE   &  4.913 & 0.953 &   6.736 & 10.810 & 0.919 & 4.322 & 0.929 & 4.510 \\
              & QE--RE  & -5.724 & 0.980 &  -1.919 &  7.080 & 0.882 & 4.154 & 0.897 & 4.334 \\
              & QE--FE  & -9.734 & 1.676 & -31.937 & 10.551 & 0.658 & 3.487 & 0.680 & 3.638 \\
      \midrule
      $50\%$  & DiVE   &  4.496 & 1.332 &   2.616 &  8.901 & 0.921 & 5.196 & 0.928 & 5.422 \\
              & QE--RE  & -2.899 & 1.153 &  -2.532 &  8.017 & 0.923 & 5.063 & 0.931 & 5.283 \\
              & QE--FE  & -9.755 & 2.200 & -54.592 & 29.977 & 0.636 & 3.488 & 0.647 & 3.639 \\
      \midrule
      $75\%$  & DiVE   &  4.800 & 2.415 &   2.078 &  7.719 & 0.931 & 7.336 & 0.938 & 7.655 \\
              & QE--RE  &  0.757 & 2.125 &   0.132 &  8.167 & 0.939 & 7.260 & 0.949 & 7.576 \\
              & QE--FE  & -9.983 & 3.493 & -77.419 & 59.977 & 0.550 & 3.479 & 0.566 & 3.630 \\
      \bottomrule
    \end{tabular}
    \begin{tablenotes}[flushleft]\footnotesize
        \item \textit{Note.} Results are based on 1,000 simulation replicates per heterogeneity level.
        \%Bias, relative bias; \%MSE, relative mean-squared error; CP, empirical coverage probability of 95\% confidence intervals; AW, average width.
        $z$-based intervals use the standard normal critical value, while $t$-based intervals use quantiles from a $t$-distribution with $N\!-\!1$ degrees of freedom.
        The true pooled difference is the benchmark for point-estimate metrics; variance metrics are evaluated against method-specific analytic targets defined in Appendix~B.
    \end{tablenotes}
  \end{threeparttable}
\end{table}

\begin{table}[t]
  \centering
  \caption{Performance of point and variance estimators under a log-normal scenario ($N=30$, fixed sample sizes $n=300$).}
  \label{tab:res_F_logN30n300}
  \renewcommand{\arraystretch}{1.15}
  \begin{threeparttable}
    \begin{tabular}{@{}cc*{8}{S}@{}}
      \toprule
      &
      & \multicolumn{2}{c}{\textbf{Point}}
      & \multicolumn{2}{c}{\textbf{Variance}}
      & \multicolumn{2}{c}{\textbf{$z$-based}}
      & \multicolumn{2}{c}{\textbf{$t$-based}} \\
      \cmidrule(lr){3-4}\cmidrule(lr){5-6}\cmidrule(lr){7-8}\cmidrule(lr){9-10}
      \textbf{$I^2$}
      & \textbf{Method}
      & \textbf{\%Bias} & \textbf{\%MSE}
      & \textbf{\%Bias} & \textbf{\%MSE}
      & \textbf{CP}    & \textbf{AW}
      & \textbf{CP}    & \textbf{AW} \\
      \midrule
      $0\%$   & DiVE   &  1.560 & 0.202 &   2.793 &  8.283 & 0.939 & 2.124 & 0.947 & 2.216 \\
              & QE--RE  & -3.150 & 0.275 &   8.048 &  3.442 & 0.886 & 2.193 & 0.902 & 2.288 \\
              & QE--FE  & -3.636 & 0.315 &  -1.498 &  0.293 & 0.837 & 2.099 & 0.854 & 2.190 \\
      \midrule
      $25\%$  & DiVE   &  1.297 & 0.268 &   2.104 &  8.155 & 0.927 & 2.445 & 0.938 & 2.551 \\
              & QE--RE  & -2.536 & 0.307 &  -0.049 &  5.876 & 0.911 & 2.425 & 0.921 & 2.531 \\
              & QE--FE  & -3.863 & 0.408 & -26.306 &  7.074 & 0.790 & 2.096 & 0.810 & 2.187 \\
      \midrule
      $50\%$  & DiVE   &  1.609 & 0.391 &   0.444 &  7.061 & 0.927 & 2.972 & 0.939 & 3.101 \\
              & QE--RE  & -1.065 & 0.366 &   1.373 &  6.889 & 0.935 & 2.945 & 0.946 & 3.073 \\
              & QE--FE  & -3.522 & 0.501 & -50.690 & 25.766 & 0.748 & 2.100 & 0.774 & 2.191 \\
      \midrule
      $75\%$  & DiVE   &  2.009 & 0.758 &   0.646 &  6.981 & 0.935 & 4.208 & 0.946 & 4.391 \\
              & QE--RE  &  0.642 & 0.722 &   0.174 &  7.042 & 0.941 & 4.197 & 0.954 & 4.380 \\
              & QE--FE  & -3.170 & 0.872 & -75.373 & 56.828 & 0.599 & 2.099 & 0.619 & 2.190 \\
      \bottomrule
    \end{tabular}
    \begin{tablenotes}[flushleft]\footnotesize
        \item \textit{Note.} Results are based on 1,000 simulation replicates per heterogeneity level.
        \%Bias, relative bias; \%MSE, relative mean-squared error; CP, empirical coverage probability of 95\% confidence intervals; AW, average width.
        $z$-based intervals use the standard normal critical value, while $t$-based intervals use quantiles from a $t$-distribution with $N\!-\!1$ degrees of freedom.
        The true pooled difference is the benchmark for point-estimate metrics; variance metrics are evaluated against method-specific analytic targets defined in Appendix~B.
    \end{tablenotes}
  \end{threeparttable}
\end{table}


\begin{figure}[t]
  \centering
  \includegraphics[width=\linewidth]{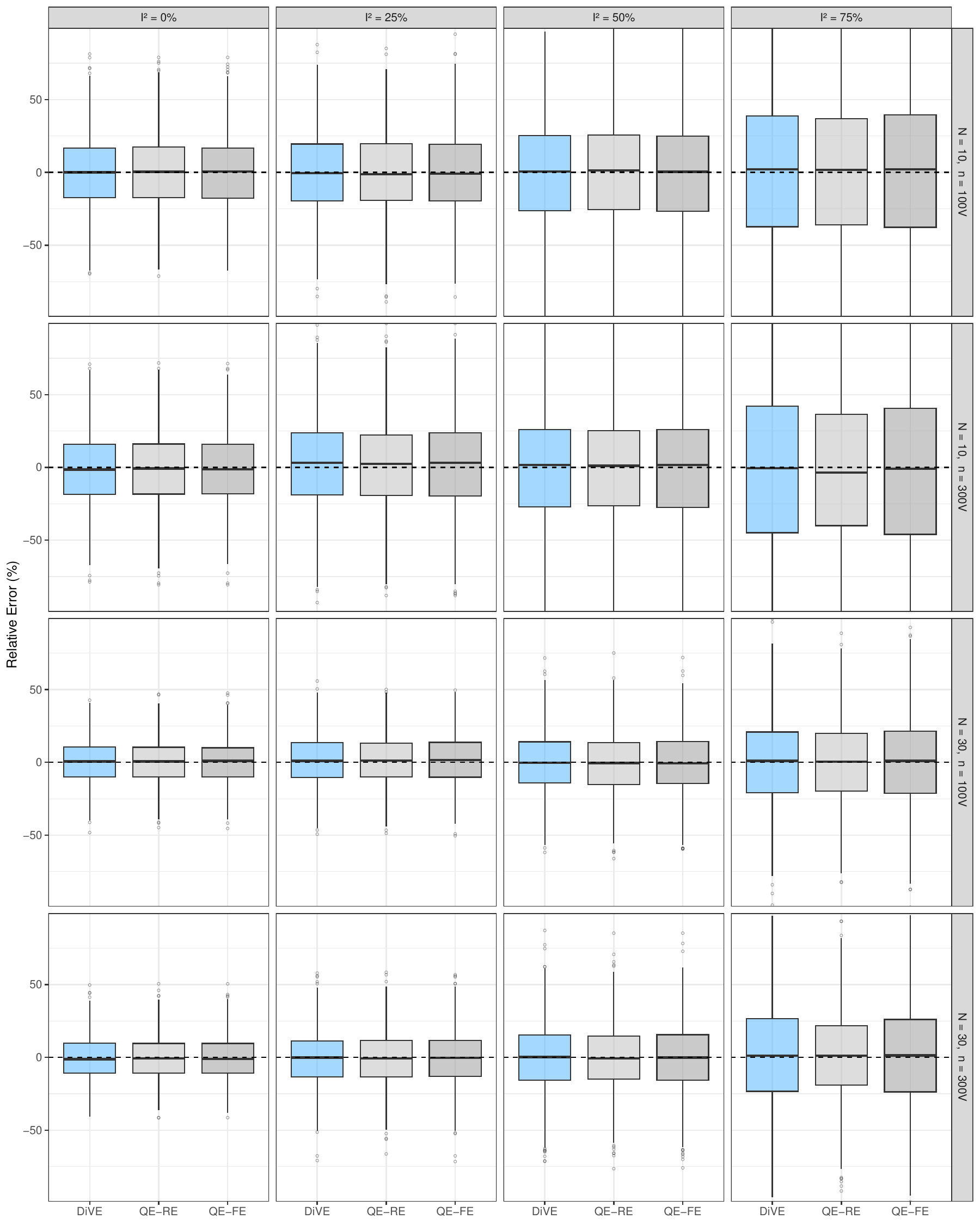}
  \caption{Distribution of relative errors for point estimate (normal scenario with varying sample sizes).\\[0.5em]
    {\footnotesize
      \textit{Note.} Panels show relative errors for the point estimator; methods compared are DiVE, the QE method under a RE model (QE--RE), and the QE method under an FE model (QE--FE). The four columns of panels correspond to $I^2\in\{0,25,50,75\}\%$. The central line in each box denotes the median error; boxes span the interquartile range; the dashed reference line at zero indicates no error.
    }
  }
  \label{fig:res_box.p_norm_V}
\end{figure}

\begin{figure}[t]
  \centering
  \includegraphics[width=\linewidth]{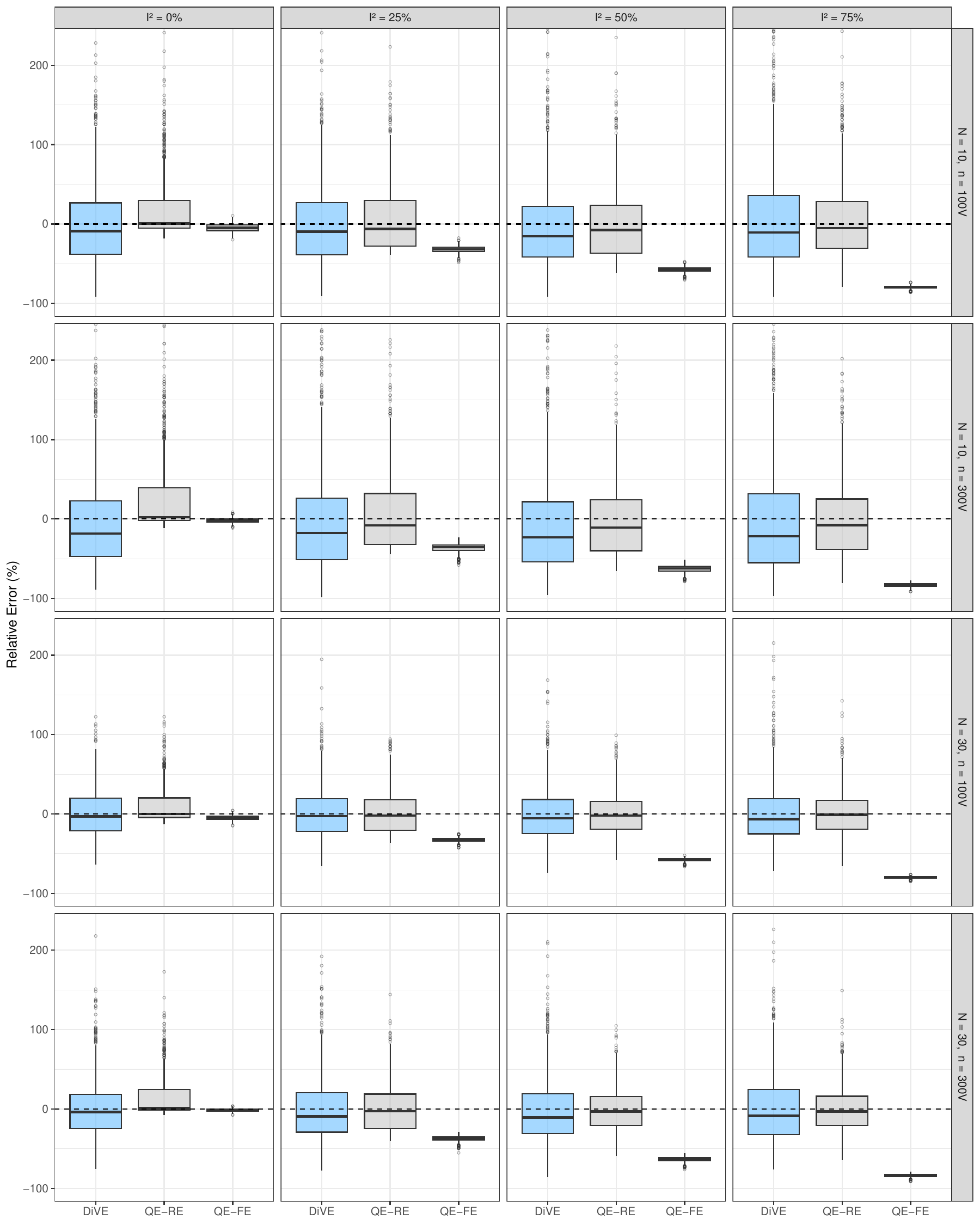}
  \caption{Distribution of relative errors for variance estimate (normal scenario with varying sample sizes).\\[0.5em]
    {\footnotesize
      \textit{Note.} Panels show relative errors for the variance estimator; methods compared are DiVE, the QE method under a RE model (QE--RE), and the QE method under an FE model (QE--FE). The four columns of panels correspond to $I^2\in\{0,25,50,75\}\%$. The central line in each box denotes the median error; boxes span the interquartile range; the dashed reference line at zero indicates no error.
    }
  }
  \label{fig:res_box.v_norm_V}
\end{figure}

\FloatBarrier
\begin{table}[t]
  \centering
  \caption{Performance of point and variance estimators under a normal scenario ($N=10$, varying sample sizes with average $n=100$).}
  \label{tab:res_V_normN10n100}
  \renewcommand{\arraystretch}{1.15}
  \begin{threeparttable}
    \begin{tabular}{@{}cc*{8}{S}@{}}
      \toprule
      &
      & \multicolumn{2}{c}{\textbf{Point}}
      & \multicolumn{2}{c}{\textbf{Variance}}
      & \multicolumn{2}{c}{\textbf{$z$-based}}
      & \multicolumn{2}{c}{\textbf{$t$-based}} \\
      \cmidrule(lr){3-4}\cmidrule(lr){5-6}\cmidrule(lr){7-8}\cmidrule(lr){9-10}
      \textbf{$I^2$}
      & \textbf{Method}
      & \textbf{\%Bias} & \textbf{\%MSE}
      & \textbf{\%Bias} & \textbf{\%MSE}
      & \textbf{CP}    & \textbf{AW}
      & \textbf{CP}    & \textbf{AW} \\
      \midrule
      $0\%$   & DiVE   &   0.204 &  6.270 &   0.160 & 28.158 & 0.921 & 0.213 & 0.945 & 0.245 \\
              & QE--RE  &   0.082 &  6.481 &  18.041 & 17.806 & 0.956 & 0.236 & 0.982 & 0.273 \\
              & QE--FE  &   0.142 &  6.345 &  -4.741 &  0.477 & 0.939 & 0.214 & 0.973 & 0.247 \\
      \midrule
      $25\%$  & DiVE   &  -0.366 &  8.536 &   0.797 & 31.442 & 0.914 & 0.252 & 0.945 & 0.291 \\
              & QE--RE  &  -0.173 &  8.686 &   6.874 & 18.276 & 0.945 & 0.263 & 0.971 & 0.304 \\
              & QE--FE  &  -0.227 &  8.780 & -31.837 & 10.262 & 0.892 & 0.215 & 0.941 & 0.248 \\
      \midrule
      $50\%$  & DiVE   &   0.204 & 13.830 &   3.465 & 39.067 & 0.911 & 0.312 & 0.940 & 0.360 \\
              & QE--RE  &   0.570 & 13.223 &   2.404 & 23.003 & 0.922 & 0.311 & 0.955 & 0.359 \\
              & QE--FE  &   0.025 & 13.935 & -54.772 & 30.061 & 0.800 & 0.215 & 0.862 & 0.248 \\
      \midrule
      $75\%$  & DiVE   &   0.468 & 32.569 &   5.072 & 53.723 & 0.903 & 0.467 & 0.938 & 0.539 \\
              & QE--RE  &   0.270 & 28.951 &   1.743 & 24.765 & 0.916 & 0.444 & 0.952 & 0.513 \\
              & QE--FE  &   0.338 & 32.200 & -79.946 & 63.924 & 0.614 & 0.215 & 0.681 & 0.248 \\
      \bottomrule
    \end{tabular}
    \begin{tablenotes}[flushleft]\footnotesize
        \item \textit{Note.} Results are based on 1,000 simulation replicates per heterogeneity level.
        \%Bias, relative bias; \%MSE, relative mean-squared error; CP, empirical coverage probability of 95\% confidence intervals; AW, average width.
        $z$-based intervals use the standard normal critical value, while $t$-based intervals use quantiles from a $t$-distribution with $N\!-\!1$ degrees of freedom.
        The true pooled difference is the benchmark for point-estimate metrics; variance metrics are evaluated against method-specific analytic targets defined in Appendix~B.
    \end{tablenotes}
  \end{threeparttable}
\end{table}

\begin{table}[t]
  \centering
  \caption{Performance of point and variance estimators under a normal scenario ($N=10$, varying sample sizes with average $n=300$).}
  \label{tab:res_V_normN10n300}
  \renewcommand{\arraystretch}{1.15}
  \begin{threeparttable}
    \begin{tabular}{@{}cc*{8}{S}@{}}
      \toprule
      &
      & \multicolumn{2}{c}{\textbf{Point}}
      & \multicolumn{2}{c}{\textbf{Variance}}
      & \multicolumn{2}{c}{\textbf{$z$-based}}
      & \multicolumn{2}{c}{\textbf{$t$-based}} \\
      \cmidrule(lr){3-4}\cmidrule(lr){5-6}\cmidrule(lr){7-8}\cmidrule(lr){9-10}
      \textbf{$I^2$}
      & \textbf{Method}
      & \textbf{\%Bias} & \textbf{\%MSE}
      & \textbf{\%Bias} & \textbf{\%MSE}
      & \textbf{CP}    & \textbf{AW}
      & \textbf{CP}    & \textbf{AW} \\
      \midrule
      $0\%$   & DiVE   &  -1.193 &  6.423 &  -3.163 & 43.462 & 0.887 & 0.119 & 0.928 & 0.137 \\
              & QE--RE  &  -1.202 &  6.516 &  24.758 & 24.748 & 0.962 & 0.140 & 0.983 & 0.162 \\
              & QE--FE  &  -1.267 &  6.423 &  -1.568 &  0.115 & 0.950 & 0.126 & 0.976 & 0.145 \\
      \midrule
      $25\%$  & DiVE   &   1.965 & 10.231 &   4.327 & 68.519 & 0.893 & 0.149 & 0.923 & 0.172 \\
              & QE--RE  &   1.472 &  9.817 &   8.008 & 24.602 & 0.941 & 0.156 & 0.969 & 0.181 \\
              & QE--FE  &   1.917 & 10.304 & -34.521 & 11.958 & 0.875 & 0.126 & 0.926 & 0.145 \\
      \midrule
      $50\%$  & DiVE   &   0.118 & 16.363 &  -0.186 & 100.153 & 0.887 & 0.190 & 0.915 & 0.220 \\
              & QE--RE  &  -0.465 & 14.560 &  -3.094 &  23.356 & 0.918 & 0.187 & 0.951 & 0.215 \\
              & QE--FE  &   0.161 & 16.366 & -62.332 &  38.866 & 0.792 & 0.126 & 0.841 & 0.145 \\
      \midrule
      $75\%$  & DiVE   &  -1.681 & 39.117 &  -4.565 &  71.802 & 0.881 & 0.291 & 0.907 & 0.336 \\
              & QE--RE  &  -1.567 & 31.469 &  -2.703 &  23.095 & 0.895 & 0.265 & 0.928 & 0.306 \\
              & QE--FE  &  -1.807 & 38.832 & -84.750 &  71.828 & 0.564 & 0.126 & 0.628 & 0.145 \\
      \bottomrule
    \end{tabular}
    \begin{tablenotes}[flushleft]\footnotesize
        \item \textit{Note.} Results are based on 1,000 simulation replicates per heterogeneity level.
        \%Bias, relative bias; \%MSE, relative mean-squared error; CP, empirical coverage probability of 95\% confidence intervals; AW, average width.
        $z$-based intervals use the standard normal critical value, while $t$-based intervals use quantiles from a $t$-distribution with $N\!-\!1$ degrees of freedom.
        The true pooled difference is the benchmark for point-estimate metrics; variance metrics are evaluated against method-specific analytic targets defined in Appendix~B.
    \end{tablenotes}
  \end{threeparttable}
\end{table}

\begin{table}[t]
  \centering
  \caption{Performance of point and variance estimators under a normal scenario ($N=30$, varying sample sizes with average $n=100$).}
  \label{tab:res_V_normN30n100}
  \renewcommand{\arraystretch}{1.15}
  \begin{threeparttable}
    \begin{tabular}{@{}cc*{8}{S}@{}}
      \toprule
      &
      & \multicolumn{2}{c}{\textbf{Point}}
      & \multicolumn{2}{c}{\textbf{Variance}}
      & \multicolumn{2}{c}{\textbf{$z$-based}}
      & \multicolumn{2}{c}{\textbf{$t$-based}} \\
      \cmidrule(lr){3-4}\cmidrule(lr){5-6}\cmidrule(lr){7-8}\cmidrule(lr){9-10}
      \textbf{$I^2$}
      & \textbf{Method}
      & \textbf{\%Bias} & \textbf{\%MSE}
      & \textbf{\%Bias} & \textbf{\%MSE}
      & \textbf{CP}    & \textbf{AW}
      & \textbf{CP}    & \textbf{AW} \\
      \midrule
      $0\%$   & DiVE   &   0.568 & 2.064 &   0.580 &  9.226 & 0.946 & 0.126 & 0.954 & 0.131 \\
              & QE--RE  &   0.504 & 2.112 &  10.493 &  5.860 & 0.955 & 0.133 & 0.966 & 0.139 \\
              & QE--FE  &   0.507 & 2.089 &  -4.784 &  0.314 & 0.948 & 0.124 & 0.957 & 0.129 \\
      \midrule
      $25\%$  & DiVE   &   1.429 & 2.976 &   4.250 & 11.195 & 0.946 & 0.150 & 0.956 & 0.157 \\
              & QE--RE  &   1.442 & 2.910 &   2.868 &  7.310 & 0.945 & 0.149 & 0.956 & 0.156 \\
              & QE--FE  &   1.488 & 3.000 & -31.031 &  9.674 & 0.895 & 0.124 & 0.909 & 0.129 \\
      \midrule
      $50\%$  & DiVE   &   0.140 & 4.844 &  -2.369 & 11.551 & 0.930 & 0.188 & 0.938 & 0.196 \\
              & QE--RE  &  -0.034 & 4.626 &  -1.126 &  7.117 & 0.931 & 0.183 & 0.942 & 0.191 \\
              & QE--FE  &   0.181 & 4.913 & -58.582 & 34.334 & 0.786 & 0.124 & 0.802 & 0.129 \\
      \midrule
      $75\%$  & DiVE   &   1.102 & 10.232 &  -0.900 & 15.026 & 0.935 & 0.273 & 0.945 & 0.285 \\
              & QE--RE  &   1.153 &  8.695 &   0.863 &  7.845 & 0.941 & 0.258 & 0.953 & 0.269 \\
              & QE--FE  &   1.233 & 10.299 & -80.235 & 64.380 & 0.623 & 0.124 & 0.638 & 0.129 \\
      \bottomrule
    \end{tabular}
    \begin{tablenotes}[flushleft]\footnotesize
        \item \textit{Note.} Results are based on 1,000 simulation replicates per heterogeneity level.
        \%Bias, relative bias; \%MSE, relative mean-squared error; CP, empirical coverage probability of 95\% confidence intervals; AW, average width.
        $z$-based intervals use the standard normal critical value, while $t$-based intervals use quantiles from a $t$-distribution with $N\!-\!1$ degrees of freedom.
        The true pooled difference is the benchmark for point-estimate metrics; variance metrics are evaluated against method-specific analytic targets defined in Appendix~B.
    \end{tablenotes}
  \end{threeparttable}
\end{table}

\begin{table}[t]
  \centering
  \caption{Performance of point and variance estimators under a normal scenario ($N=30$, varying sample sizes with average $n=300$).}
  \label{tab:res_V_normN30n300}
  \renewcommand{\arraystretch}{1.15}
  \begin{threeparttable}
    \begin{tabular}{@{}cc*{8}{S}@{}}
      \toprule
      &
      & \multicolumn{2}{c}{\textbf{Point}}
      & \multicolumn{2}{c}{\textbf{Variance}}
      & \multicolumn{2}{c}{\textbf{$z$-based}}
      & \multicolumn{2}{c}{\textbf{$t$-based}} \\
      \cmidrule(lr){3-4}\cmidrule(lr){5-6}\cmidrule(lr){7-8}\cmidrule(lr){9-10}
      \textbf{$I^2$}
      & \textbf{Method}
      & \textbf{\%Bias} & \textbf{\%MSE}
      & \textbf{\%Bias} & \textbf{\%MSE}
      & \textbf{CP}    & \textbf{AW}
      & \textbf{CP}    & \textbf{AW} \\
      \midrule
      $0\%$   & DiVE   &  -0.365 & 2.187 &   1.076 & 12.951 & 0.934 & 0.073 & 0.946 & 0.076 \\
              & QE--RE  &  -0.319 & 2.241 &  14.599 &  8.154 & 0.958 & 0.078 & 0.969 & 0.081 \\
              & QE--FE  &  -0.347 & 2.196 &  -1.675 &  0.058 & 0.947 & 0.073 & 0.961 & 0.076 \\
      \midrule
      $25\%$  & DiVE   &  -0.645 & 3.520 &  -1.515 & 17.498 & 0.929 & 0.090 & 0.939 & 0.094 \\
              & QE--RE  &  -0.791 & 3.407 &  -0.558 &  8.300 & 0.941 & 0.089 & 0.947 & 0.093 \\
              & QE--FE  &  -0.712 & 3.544 & -37.913 & 14.385 & 0.875 & 0.073 & 0.895 & 0.076 \\
      \midrule
      $50\%$  & DiVE   &  -0.119 & 5.656 &   1.732 & 21.221 & 0.928 & 0.116 & 0.939 & 0.121 \\
              & QE--RE  &  -0.216 & 5.009 &   0.539 &  8.187 & 0.944 & 0.110 & 0.950 & 0.115 \\
              & QE--FE  &  -0.071 & 5.656 & -62.044 & 38.499 & 0.779 & 0.073 & 0.797 & 0.076 \\
      \midrule
      $75\%$  & DiVE   &   1.465 & 13.635 &   7.844 & 35.190 & 0.930 & 0.176 & 0.938 & 0.184 \\
              & QE--RE  &   0.760 &  9.485 &   1.862 &  8.278 & 0.938 & 0.154 & 0.952 & 0.161 \\
              & QE--FE  &   1.298 & 13.601 & -82.680 & 68.361 & 0.564 & 0.073 & 0.584 & 0.076 \\
      \bottomrule
    \end{tabular}
    \begin{tablenotes}[flushleft]\footnotesize
        \item \textit{Note.} Results are based on 1,000 simulation replicates per heterogeneity level.
        \%Bias, relative bias; \%MSE, relative mean-squared error; CP, empirical coverage probability of 95\% confidence intervals; AW, average width.
        $z$-based intervals use the standard normal critical value, while $t$-based intervals use quantiles from a $t$-distribution with $N\!-\!1$ degrees of freedom.
        The true pooled difference is the benchmark for point-estimate metrics; variance metrics are evaluated against method-specific analytic targets defined in Appendix~B.
    \end{tablenotes}
  \end{threeparttable}
\end{table}

\begin{figure}[t]
  \centering
  \includegraphics[width=\linewidth]{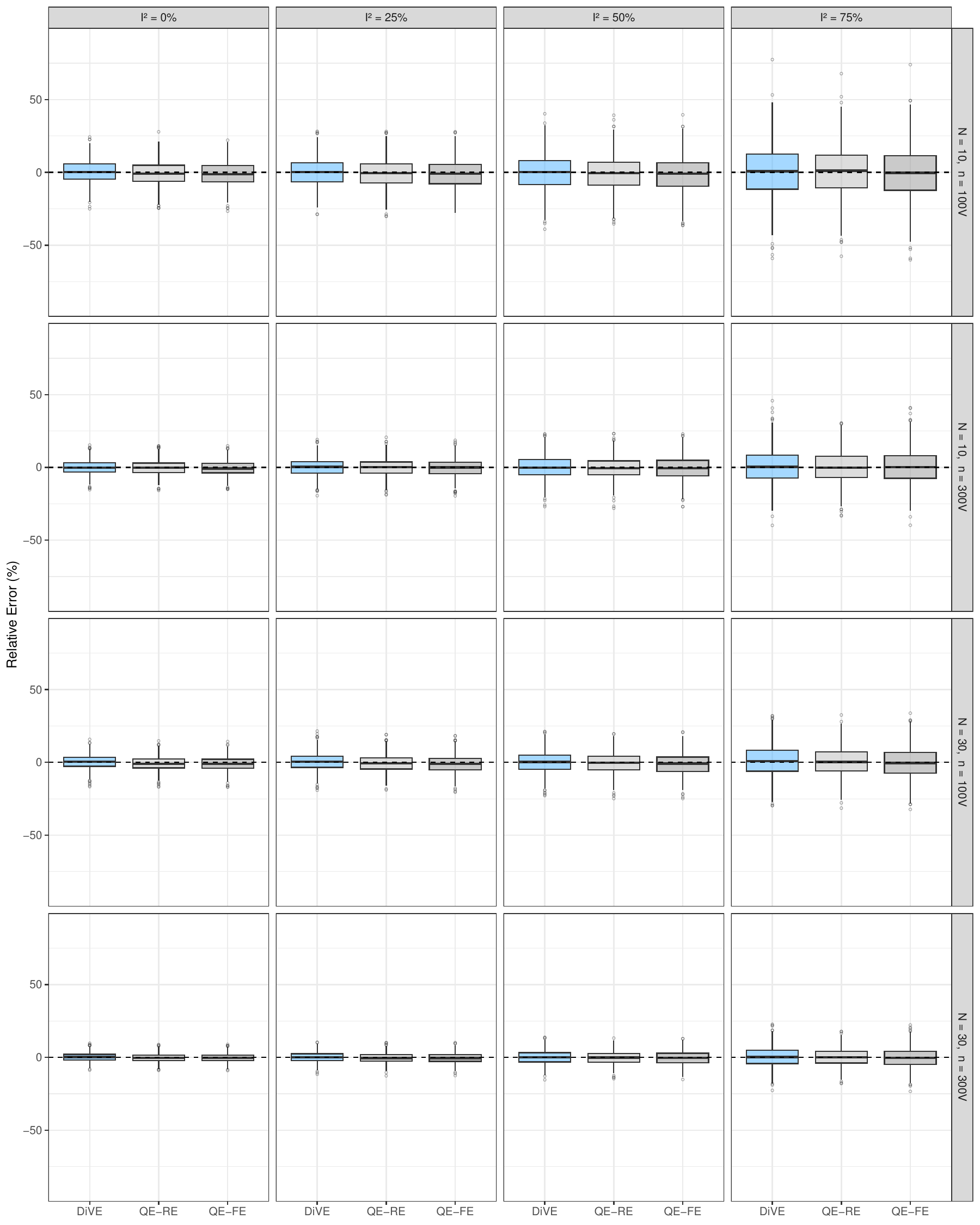}
  \caption{Distribution of relative errors for point estimate (skew-normal scenario with varying sample sizes).\\[0.5em]
    {\footnotesize
      \textit{Note.} Panels show relative errors for the point estimator; methods compared are DiVE, the QE method under a RE model (QE--RE), and the QE method under an FE model (QE--FE). The four columns of panels correspond to $I^2\in\{0,25,50,75\}\%$. The central line in each box denotes the median error; boxes span the interquartile range; the dashed reference line at zero indicates no error.
    }
  }
  \label{fig:res_box.p_skew_V}
\end{figure}

\begin{figure}[t]
  \centering
  \includegraphics[width=\linewidth]{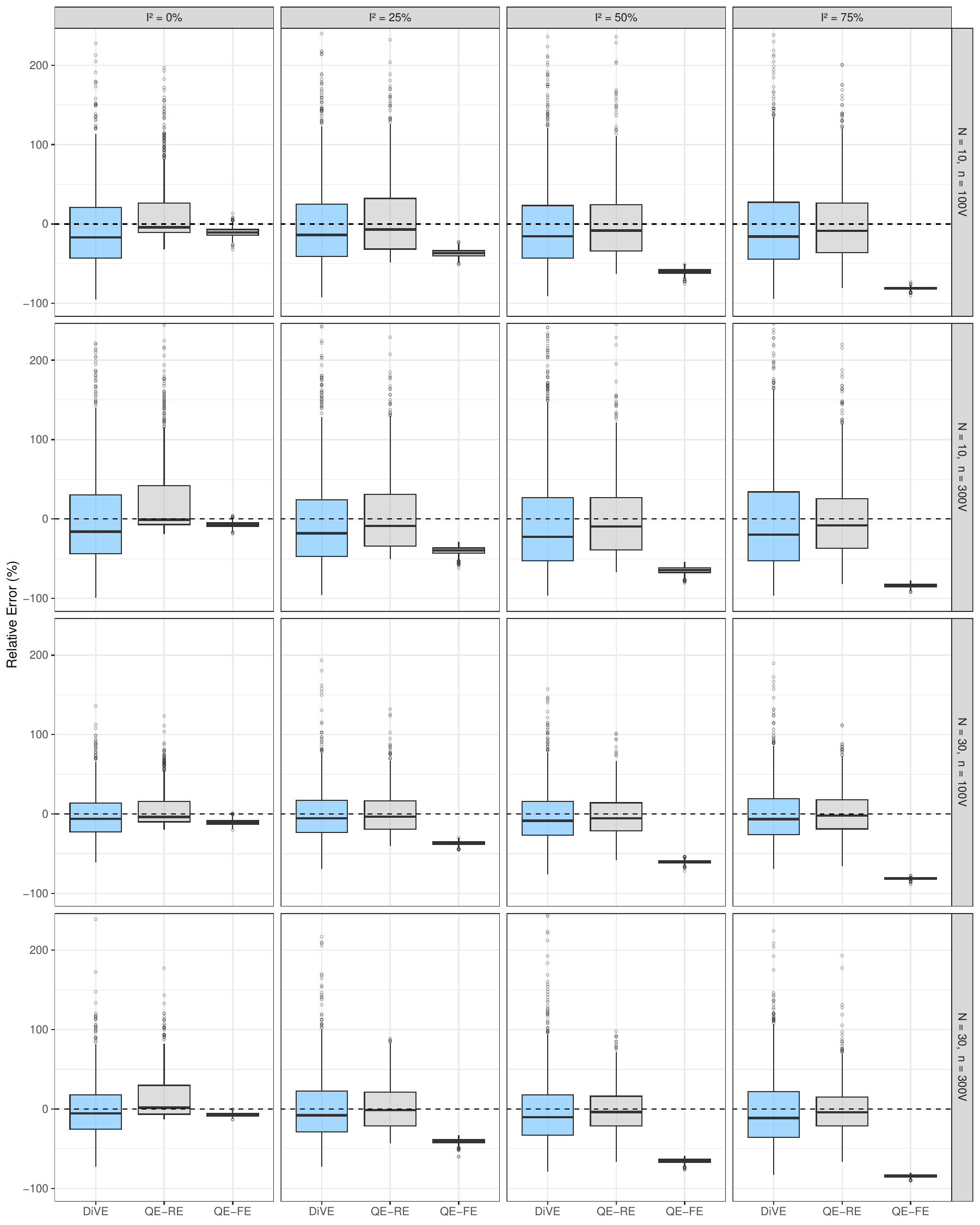}
  \caption{Distribution of relative errors for variance estimate (skew-normal scenario with varying sample sizes).\\[0.5em]
    {\footnotesize
      \textit{Note.} Panels show relative errors for the variance estimator; methods compared are DiVE, the QE method under a RE model (QE--RE), and the QE method under an FE model (QE--FE). The four columns of panels correspond to $I^2\in\{0,25,50,75\}\%$. The central line in each box denotes the median error; boxes span the interquartile range; the dashed reference line at zero indicates no error.
    }
  }
  \label{fig:res_box.v_skew_V}
\end{figure}

\FloatBarrier
\begin{table}[t]
  \centering
  \caption{Performance of point and variance estimators under a skew-normal scenario ($N=10$, varying sample sizes with average $n=100$).}
  \label{tab:res_V_skewN10n100}
  \renewcommand{\arraystretch}{1.15}
  \begin{threeparttable}
    \begin{tabular}{@{}cc*{8}{S}@{}}
      \toprule
      &
      & \multicolumn{2}{c}{\textbf{Point}}
      & \multicolumn{2}{c}{\textbf{Variance}}
      & \multicolumn{2}{c}{\textbf{$z$-based}}
      & \multicolumn{2}{c}{\textbf{$t$-based}} \\
      \cmidrule(lr){3-4}\cmidrule(lr){5-6}\cmidrule(lr){7-8}\cmidrule(lr){9-10}
      \textbf{$I^2$}
      & \textbf{Method}
      & \textbf{\%Bias} & \textbf{\%MSE}
      & \textbf{\%Bias} & \textbf{\%MSE}
      & \textbf{CP}    & \textbf{AW}
      & \textbf{CP}    & \textbf{AW} \\
      \midrule
      $0\%$   & DiVE   &   0.275 & 0.641 &  -5.240 & 30.605 & 0.903 & 1.023 & 0.940 & 1.180 \\
              & QE--RE  &  -0.715 & 0.675 &  13.439 & 16.098 & 0.953 & 1.148 & 0.981 & 1.325 \\
              & QE--FE  &  -0.920 & 0.667 & -10.431 &  1.401 & 0.939 & 1.031 & 0.970 & 1.190 \\
      \midrule
      $25\%$  & DiVE   &   0.084 & 0.969 &   0.237 & 38.364 & 0.910 & 1.252 & 0.950 & 1.445 \\
              & QE--RE  &  -0.715 & 0.982 &   5.856 & 21.878 & 0.934 & 1.300 & 0.959 & 1.500 \\
              & QE--FE  &  -1.190 & 1.007 & -37.464 & 14.195 & 0.862 & 1.030 & 0.915 & 1.188 \\
      \midrule
      $50\%$  & DiVE   &  -0.287 & 1.399 &  -0.979 & 37.851 & 0.911 & 1.548 & 0.949 & 1.786 \\
              & QE--RE  &  -0.809 & 1.380 &   0.044 & 21.054 & 0.921 & 1.544 & 0.954 & 1.782 \\
              & QE--FE  &  -1.466 & 1.433 & -59.554 & 35.530 & 0.793 & 1.031 & 0.853 & 1.190 \\
      \midrule
      $75\%$  & DiVE   &   0.778 & 3.173 &   9.658 & 54.053 & 0.906 & 2.259 & 0.938 & 2.607 \\
              & QE--RE  &   0.427 & 2.834 &   1.867 & 24.712 & 0.911 & 2.163 & 0.943 & 2.497 \\
              & QE--FE  &  -0.423 & 3.216 & -79.190 & 62.729 & 0.605 & 1.031 & 0.662 & 1.189 \\
      \bottomrule
    \end{tabular}
    \begin{tablenotes}[flushleft]\footnotesize
        \item \textit{Note.} Results are based on 1,000 simulation replicates per heterogeneity level.
        \%Bias, relative bias; \%MSE, relative mean-squared error; CP, empirical coverage probability of 95\% confidence intervals; AW, average width.
        $z$-based intervals use the standard normal critical value, while $t$-based intervals use quantiles from a $t$-distribution with $N\!-\!1$ degrees of freedom.
        The true pooled difference is the benchmark for point-estimate metrics; variance metrics are evaluated against method-specific analytic targets defined in Appendix~B.
    \end{tablenotes}
  \end{threeparttable}
\end{table}

\begin{table}[t]
  \centering
  \caption{Performance of point and variance estimators under a skew-normal scenario ($N=10$, varying sample sizes with average $n=300$).}
  \label{tab:res_V_skewN10n300}
  \renewcommand{\arraystretch}{1.15}
  \begin{threeparttable}
    \begin{tabular}{@{}cc*{8}{S}@{}}
      \toprule
      &
      & \multicolumn{2}{c}{\textbf{Point}}
      & \multicolumn{2}{c}{\textbf{Variance}}
      & \multicolumn{2}{c}{\textbf{$z$-based}}
      & \multicolumn{2}{c}{\textbf{$t$-based}} \\
      \cmidrule(lr){3-4}\cmidrule(lr){5-6}\cmidrule(lr){7-8}\cmidrule(lr){9-10}
      \textbf{$I^2$}
      & \textbf{Method}
      & \textbf{\%Bias} & \textbf{\%MSE}
      & \textbf{\%Bias} & \textbf{\%MSE}
      & \textbf{CP}    & \textbf{AW}
      & \textbf{CP}    & \textbf{AW} \\
      \midrule
      $0\%$   & DiVE   &   0.036 & 0.225 &   0.162 & 42.964 & 0.889 & 0.600 & 0.928 & 0.692 \\
              & QE--RE  &  -0.273 & 0.240 &  23.605 & 26.037 & 0.962 & 0.690 & 0.981 & 0.797 \\
              & QE--FE  &  -0.423 & 0.230 &  -7.114 &  0.627 & 0.946 & 0.607 & 0.973 & 0.700 \\
      \midrule
      $25\%$  & DiVE   &   0.040 & 0.339 &   9.520 & 72.946 & 0.892 & 0.739 & 0.930 & 0.853 \\
              & QE--RE  &  -0.181 & 0.331 &  12.797 & 28.871 & 0.938 & 0.772 & 0.964 & 0.892 \\
              & QE--FE  &  -0.386 & 0.342 & -33.879 & 11.535 & 0.877 & 0.607 & 0.922 & 0.701 \\
      \midrule
      $50\%$  & DiVE   &   0.022 & 0.640 &  -2.271 & 62.000 & 0.872 & 0.951 & 0.911 & 1.097 \\
              & QE--RE  &  -0.279 & 0.552 &  -5.298 & 22.482 & 0.909 & 0.929 & 0.939 & 1.073 \\
              & QE--FE  &  -0.428 & 0.641 & -65.181 & 42.503 & 0.733 & 0.607 & 0.793 & 0.700 \\
      \midrule
      $75\%$  & DiVE   &   0.593 & 1.374 &  -5.986 & 70.751 & 0.888 & 1.437 & 0.915 & 1.659 \\
              & QE--RE  &   0.257 & 1.105 &  -5.669 & 22.849 & 0.912 & 1.312 & 0.941 & 1.515 \\
              & QE--FE  &   0.120 & 1.348 & -85.510 & 73.122 & 0.555 & 0.607 & 0.640 & 0.701 \\
      \bottomrule
    \end{tabular}
    \begin{tablenotes}[flushleft]\footnotesize
        \item \textit{Note.} Results are based on 1,000 simulation replicates per heterogeneity level.
        \%Bias, relative bias; \%MSE, relative mean-squared error; CP, empirical coverage probability of 95\% confidence intervals; AW, average width.
        $z$-based intervals use the standard normal critical value, while $t$-based intervals use quantiles from a $t$-distribution with $N\!-\!1$ degrees of freedom.
        The true pooled difference is the benchmark for point-estimate metrics; variance metrics are evaluated against method-specific analytic targets defined in Appendix~B.
    \end{tablenotes}
  \end{threeparttable}
\end{table}

\begin{table}[t]
  \centering
  \caption{Performance of point and variance estimators under a skew-normal scenario ($N=30$, varying sample sizes with average $n=100$).}
  \label{tab:res_V_skewN30n100}
  \renewcommand{\arraystretch}{1.15}
  \begin{threeparttable}
    \begin{tabular}{@{}cc*{8}{S}@{}}
      \toprule
      &
      & \multicolumn{2}{c}{\textbf{Point}}
      & \multicolumn{2}{c}{\textbf{Variance}}
      & \multicolumn{2}{c}{\textbf{$z$-based}}
      & \multicolumn{2}{c}{\textbf{$t$-based}} \\
      \cmidrule(lr){3-4}\cmidrule(lr){5-6}\cmidrule(lr){7-8}\cmidrule(lr){9-10}
      \textbf{$I^2$}
      & \textbf{Method}
      & \textbf{\%Bias} & \textbf{\%MSE}
      & \textbf{\%Bias} & \textbf{\%MSE}
      & \textbf{CP}    & \textbf{AW}
      & \textbf{CP}    & \textbf{AW} \\
      \midrule
      $0\%$   & DiVE   &   0.421 & 0.227 &  -1.984 &  8.795 & 0.942 & 0.616 & 0.951 & 0.643 \\
              & QE--RE  &  -0.750 & 0.236 &   5.872 &  5.336 & 0.947 & 0.645 & 0.955 & 0.673 \\
              & QE--FE  &  -0.907 & 0.237 & -10.682 &  1.254 & 0.928 & 0.595 & 0.937 & 0.621 \\
      \midrule
      $25\%$  & DiVE   &   0.390 & 0.341 &   0.789 & 11.660 & 0.927 & 0.738 & 0.938 & 0.771 \\
              & QE--RE  &  -0.588 & 0.339 &   0.917 &  7.430 & 0.933 & 0.737 & 0.943 & 0.769 \\
              & QE--FE  &  -0.962 & 0.358 & -36.369 & 13.280 & 0.868 & 0.594 & 0.881 & 0.620 \\
      \midrule
      $50\%$  & DiVE   &  -0.023 & 0.492 &   0.296 & 13.345 & 0.931 & 0.921 & 0.940 & 0.961 \\
              & QE--RE  &  -0.616 & 0.475 &  -1.818 &  7.080 & 0.939 & 0.894 & 0.954 & 0.933 \\
              & QE--FE  &  -1.234 & 0.514 & -59.393 & 35.298 & 0.789 & 0.595 & 0.804 & 0.621 \\
      \midrule
      $75\%$  & DiVE   &   1.205 & 1.121 &   9.178 & 18.897 & 0.928 & 1.352 & 0.938 & 1.410 \\
              & QE--RE  &   0.737 & 0.980 &   2.338 &  8.018 & 0.927 & 1.278 & 0.943 & 1.334 \\
              & QE--FE  &  -0.132 & 1.119 & -79.565 & 63.312 & 0.601 & 0.595 & 0.619 & 0.621 \\

      \bottomrule
    \end{tabular}
    \begin{tablenotes}[flushleft]\footnotesize
        \item \textit{Note.} Results are based on 1,000 simulation replicates per heterogeneity level.
        \%Bias, relative bias; \%MSE, relative mean-squared error; CP, empirical coverage probability of 95\% confidence intervals; AW, average width.
        $z$-based intervals use the standard normal critical value, while $t$-based intervals use quantiles from a $t$-distribution with $N\!-\!1$ degrees of freedom.
        The true pooled difference is the benchmark for point-estimate metrics; variance metrics are evaluated against method-specific analytic targets defined in Appendix~B.
    \end{tablenotes}
  \end{threeparttable}
\end{table}

\begin{table}[t]
  \centering
  \caption{Performance of point and variance estimators under a skew-normal scenario ($N=30$, varying sample sizes with average $n=300$).}
  \label{tab:res_V_skewN30n300}
  \renewcommand{\arraystretch}{1.15}
  \begin{threeparttable}
    \begin{tabular}{@{}cc*{8}{S}@{}}
      \toprule
      &
      & \multicolumn{2}{c}{\textbf{Point}}
      & \multicolumn{2}{c}{\textbf{Variance}}
      & \multicolumn{2}{c}{\textbf{$z$-based}}
      & \multicolumn{2}{c}{\textbf{$t$-based}} \\
      \cmidrule(lr){3-4}\cmidrule(lr){5-6}\cmidrule(lr){7-8}\cmidrule(lr){9-10}
      \textbf{$I^2$}
      & \textbf{Method}
      & \textbf{\%Bias} & \textbf{\%MSE}
      & \textbf{\%Bias} & \textbf{\%MSE}
      & \textbf{CP}    & \textbf{AW}
      & \textbf{CP}    & \textbf{AW} \\
      \midrule
      $0\%$   & DiVE   &   0.185 & 0.080 &  -0.763 & 12.650 & 0.929 & 0.357 & 0.936 & 0.372 \\
              & QE--RE  &  -0.289 & 0.084 &  14.564 & 10.064 & 0.949 & 0.386 & 0.958 & 0.403 \\
              & QE--FE  &  -0.309 & 0.082 &  -7.180 &  0.558 & 0.932 & 0.350 & 0.944 & 0.365 \\
      \midrule
      $25\%$  & DiVE   &   0.001 & 0.114 &  -3.395 & 17.169 & 0.943 & 0.448 & 0.952 & 0.467 \\
              & QE--RE  &  -0.366 & 0.114 &  -2.350 &  7.889 & 0.950 & 0.444 & 0.956 & 0.464 \\
              & QE--FE  &  -0.484 & 0.117 & -43.332 & 18.793 & 0.875 & 0.350 & 0.892 & 0.365 \\
      \midrule
      $50\%$  & DiVE   &   0.012 & 0.218 &  -3.076 & 24.498 & 0.919 & 0.577 & 0.927 & 0.602 \\
              & QE--RE  &  -0.202 & 0.190 &  -3.693 &  7.779 & 0.928 & 0.542 & 0.938 & 0.565 \\
              & QE--FE  &  -0.435 & 0.218 & -66.086 & 43.679 & 0.730 & 0.350 & 0.743 & 0.366 \\
      \midrule
      $75\%$  & DiVE   &   0.367 & 0.457 &  -4.082 & 30.571 & 0.928 & 0.865 & 0.934 & 0.903 \\
              & QE--RE  &   0.137 & 0.350 &  -4.949 &  8.769 & 0.927 & 0.762 & 0.937 & 0.795 \\
              & QE--FE  &  -0.126 & 0.452 & -85.250 & 72.676 & 0.575 & 0.350 & 0.588 & 0.366 \\
      \bottomrule
    \end{tabular}
    \begin{tablenotes}[flushleft]\footnotesize
        \item \textit{Note.} Results are based on 1,000 simulation replicates per heterogeneity level.
        \%Bias, relative bias; \%MSE, relative mean-squared error; CP, empirical coverage probability of 95\% confidence intervals; AW, average width.
        $z$-based intervals use the standard normal critical value, while $t$-based intervals use quantiles from a $t$-distribution with $N\!-\!1$ degrees of freedom.
        The true pooled difference is the benchmark for point-estimate metrics; variance metrics are evaluated against method-specific analytic targets defined in Appendix~B.
    \end{tablenotes}
  \end{threeparttable}
\end{table}

\begin{figure}[t]
  \centering
  \includegraphics[width=\linewidth]{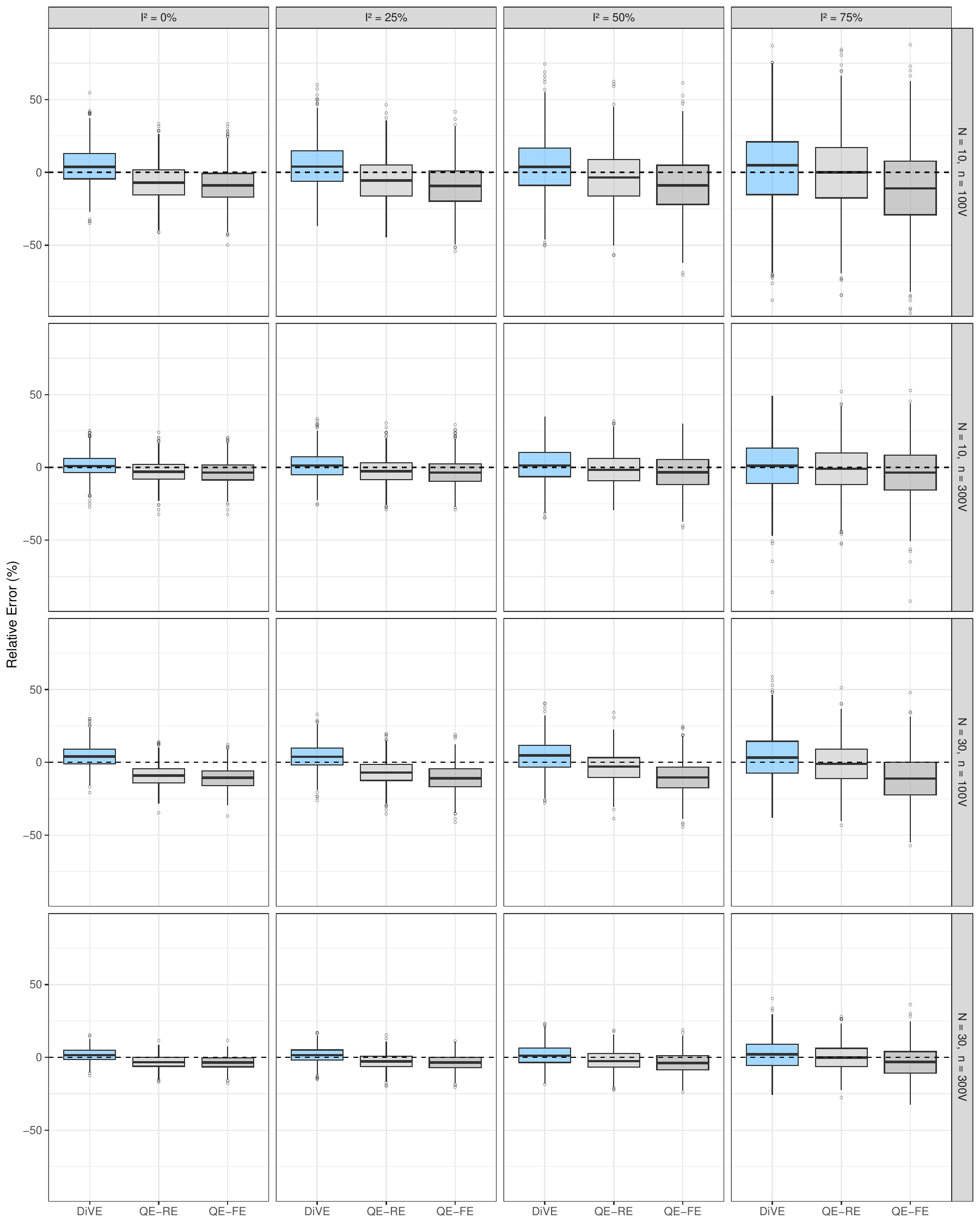}
  \caption{Distribution of relative errors for point estimate (log-normal scenario with varying sample sizes).\\[0.5em]
    {\footnotesize
      \textit{Note.} Panels show relative errors for the point estimator; methods compared are DiVE, the QE method under a RE model (QE--RE), and the QE method under an FE model (QE--FE). The four columns of panels correspond to $I^2\in\{0,25,50,75\}\%$. The central line in each box denotes the median error; boxes span the interquartile range; the dashed reference line at zero indicates no error.
    }
  }
  \label{fig:res_box.p_log_V}
\end{figure}

\begin{figure}[t]
  \centering
  \includegraphics[width=\linewidth]{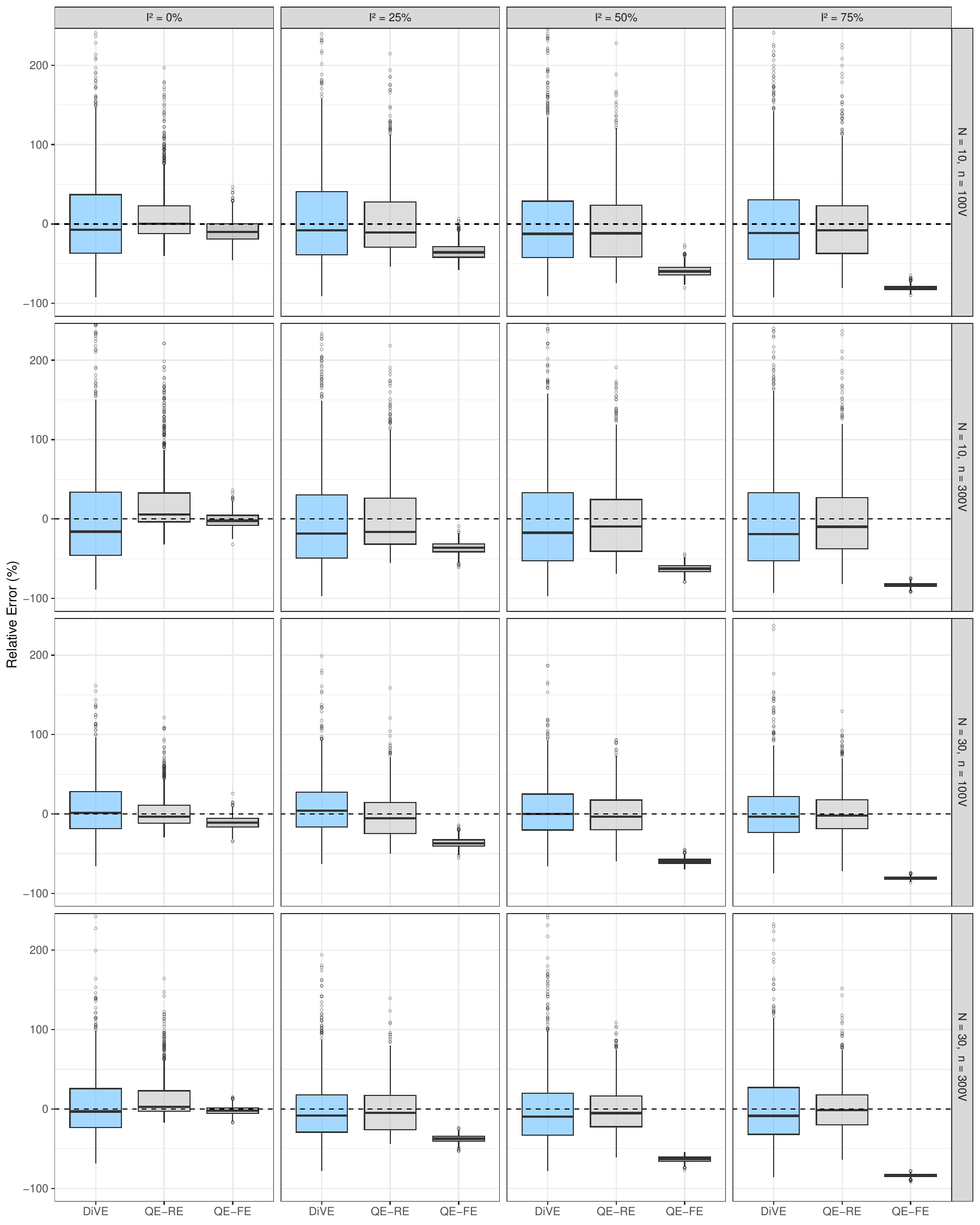}
  \caption{Distribution of relative errors for variance estimate (log-normal scenario with varying sample sizes).\\[0.5em]
    {\footnotesize
      \textit{Note.} Panels show relative errors for the variance estimator; methods compared are DiVE, the QE method under a RE model (QE--RE), and the QE method under an FE model (QE--FE). The four columns of panels correspond to $I^2\in\{0,25,50,75\}\%$. The central line in each box denotes the median error; boxes span the interquartile range; the dashed reference line at zero indicates no error.
    }
  }
  \label{fig:res_box.v_log_V}
\end{figure}

\FloatBarrier
\begin{table}[t]
  \centering
  \caption{Performance of point and variance estimators under a log-normal scenario ($N=10$, varying sample sizes with average $n=100$).}
  \label{tab:res_V_logN10n100}
  \renewcommand{\arraystretch}{1.15}
  \begin{threeparttable}
    \begin{tabular}{@{}cc*{8}{S}@{}}
      \toprule
      &
      & \multicolumn{2}{c}{\textbf{Point}}
      & \multicolumn{2}{c}{\textbf{Variance}}
      & \multicolumn{2}{c}{\textbf{$z$-based}}
      & \multicolumn{2}{c}{\textbf{$t$-based}} \\
      \cmidrule(lr){3-4}\cmidrule(lr){5-6}\cmidrule(lr){7-8}\cmidrule(lr){9-10}
      \textbf{$I^2$}
      & \textbf{Method}
      & \textbf{\%Bias} & \textbf{\%MSE}
      & \textbf{\%Bias} & \textbf{\%MSE}
      & \textbf{CP}    & \textbf{AW}
      & \textbf{CP}    & \textbf{AW} \\
      \midrule
      $0\%$   & DiVE   &   4.613 & 1.948 &   9.129 & 48.174 & 0.919 &  6.350 & 0.952 &  7.329 \\
              & QE--RE  &  -6.601 & 2.012 &  12.372 & 16.522 & 0.910 &  6.645 & 0.950 &  7.670 \\
              & QE--FE  &  -8.666 & 2.362 &  -8.424 &  2.783 & 0.863 &  6.054 & 0.911 &  6.987 \\
      \midrule
      $25\%$  & DiVE   &   4.606 & 2.607 &   8.465 & 44.926 & 0.910 &  7.511 & 0.949 &  8.669 \\
              & QE--RE  &  -5.176 & 2.535 &   5.055 & 21.957 & 0.921 &  7.505 & 0.952 &  8.662 \\
              & QE--FE  &  -9.150 & 3.193 & -34.566 & 12.977 & 0.804 &  6.071 & 0.860 &  7.007 \\
      \midrule
      $50\%$  & DiVE   &   4.223 & 4.029 &  10.373 & 55.164 & 0.904 &  9.240 & 0.935 & 10.665 \\
              & QE--RE  &  -3.278 & 3.692 &  -0.584 & 22.967 & 0.901 &  8.840 & 0.940 & 10.204 \\
              & QE--FE  &  -8.830 & 4.845 & -56.699 & 32.652 & 0.720 &  6.049 & 0.778 &  6.981 \\
      \midrule
      $75\%$  & DiVE   &   3.238 & 7.599 &   2.013 & 48.380 & 0.905 & 13.296 & 0.946 & 15.346 \\
              & QE--RE  &  -0.514 & 6.696 &  -0.188 & 27.653 & 0.916 & 12.570 & 0.941 & 14.508 \\
              & QE--FE  & -10.661 & 9.185 & -80.656 & 65.153 & 0.577 &  6.053 & 0.630 &  6.986 \\
      \bottomrule
    \end{tabular}
    \begin{tablenotes}[flushleft]\footnotesize
        \item \textit{Note.} Results are based on 1,000 simulation replicates per heterogeneity level.
        \%Bias, relative bias; \%MSE, relative mean-squared error; CP, empirical coverage probability of 95\% confidence intervals; AW, average width.
        $z$-based intervals use the standard normal critical value, while $t$-based intervals use quantiles from a $t$-distribution with $N\!-\!1$ degrees of freedom.
        The true pooled difference is the benchmark for point-estimate metrics; variance metrics are evaluated against method-specific analytic targets defined in Appendix~B.
    \end{tablenotes}
  \end{threeparttable}
\end{table}

\begin{table}[t]
  \centering
  \caption{Performance of point and variance estimators under a log-normal scenario ($N=10$, varying sample sizes with average $n=300$).}
  \label{tab:res_V_logN10n300}
  \renewcommand{\arraystretch}{1.15}
  \begin{threeparttable}
    \begin{tabular}{@{}cc*{8}{S}@{}}
      \toprule
      &
      & \multicolumn{2}{c}{\textbf{Point}}
      & \multicolumn{2}{c}{\textbf{Variance}}
      & \multicolumn{2}{c}{\textbf{$z$-based}}
      & \multicolumn{2}{c}{\textbf{$t$-based}} \\
      \cmidrule(lr){3-4}\cmidrule(lr){5-6}\cmidrule(lr){7-8}\cmidrule(lr){9-10}
      \textbf{$I^2$}
      & \textbf{Method}
      & \textbf{\%Bias} & \textbf{\%MSE}
      & \textbf{\%Bias} & \textbf{\%MSE}
      & \textbf{CP}    & \textbf{AW}
      & \textbf{CP}    & \textbf{AW} \\
      \midrule
      $0\%$   & DiVE   &   1.381 & 0.618 &   4.171 & 58.601 & 0.886 & 3.538 & 0.919 & 4.084 \\
              & QE--RE  &  -2.879 & 0.692 &  22.670 & 22.707 & 0.928 & 4.010 & 0.962 & 4.628 \\
              & QE--FE  &  -3.296 & 0.705 &  -1.569 &  0.920 & 0.902 & 3.631 & 0.946 & 4.190 \\
      \midrule
      $25\%$  & DiVE   &   1.156 & 0.848 &   8.083 & 75.916 & 0.896 & 4.345 & 0.932 & 5.015 \\
              & QE--RE  &  -2.707 & 0.865 &   7.148 & 27.437 & 0.916 & 4.442 & 0.949 & 5.127 \\
              & QE--FE  &  -3.575 & 0.953 & -33.131 & 11.331 & 0.847 & 3.630 & 0.887 & 4.190 \\
      \midrule
      $50\%$  & DiVE   &   1.632 & 1.492 &   0.312 & 79.706 & 0.893 & 5.681 & 0.928 & 6.557 \\
              & QE--RE  &  -1.415 & 1.275 &  -3.251 & 23.547 & 0.922 & 5.420 & 0.954 & 6.255 \\
              & QE--FE  &  -3.175 & 1.606 & -64.396 & 41.570 & 0.733 & 3.641 & 0.808 & 4.202 \\
      \midrule
      $75\%$  & DiVE   &   0.986 & 3.428 & -19.242 & 59.210 & 0.898 & 8.475 & 0.931 & 9.782 \\
              & QE--RE  &  -1.405 & 2.551 &  -7.186 & 24.258 & 0.912 & 7.647 & 0.943 & 8.826 \\
              & QE--FE  &  -3.917 & 3.530 & -87.345 & 76.304 & 0.554 & 3.623 & 0.626 & 4.181 \\
      \bottomrule
    \end{tabular}
    \begin{tablenotes}[flushleft]\footnotesize
        \item \textit{Note.} Results are based on 1,000 simulation replicates per heterogeneity level.
        \%Bias, relative bias; \%MSE, relative mean-squared error; CP, empirical coverage probability of 95\% confidence intervals; AW, average width.
        $z$-based intervals use the standard normal critical value, while $t$-based intervals use quantiles from a $t$-distribution with $N\!-\!1$ degrees of freedom.
        The true pooled difference is the benchmark for point-estimate metrics; variance metrics are evaluated against method-specific analytic targets defined in Appendix~B.
    \end{tablenotes}
  \end{threeparttable}
\end{table}


\begin{table}[t]
  \centering
  \caption{Performance of point and variance estimators under a log-normal scenario ($N=30$, varying sample sizes with average $n=300$).}
  \label{tab:res_V_logN30n300}
  \renewcommand{\arraystretch}{1.15}
  \begin{threeparttable}
    \begin{tabular}{@{}cc*{8}{S}@{}}
      \toprule
      &
      & \multicolumn{2}{c}{\textbf{Point}}
      & \multicolumn{2}{c}{\textbf{Variance}}
      & \multicolumn{2}{c}{\textbf{$z$-based}}
      & \multicolumn{2}{c}{\textbf{$t$-based}} \\
      \cmidrule(lr){3-4}\cmidrule(lr){5-6}\cmidrule(lr){7-8}\cmidrule(lr){9-10}
      \textbf{$I^2$}
      & \textbf{Method}
      & \textbf{\%Bias} & \textbf{\%MSE}
      & \textbf{\%Bias} & \textbf{\%MSE}
      & \textbf{CP}    & \textbf{AW}
      & \textbf{CP}    & \textbf{AW} \\
      \midrule
      $0\%$   & DiVE   &   1.654 & 0.216 &   4.385 & 16.178 & 0.926 & 2.125 & 0.937 & 2.217 \\
              & QE--RE  &  -3.229 & 0.288 &  13.559 &  8.618 & 0.890 & 2.241 & 0.907 & 2.339 \\
              & QE--FE  &  -3.539 & 0.309 &  -2.085 &  0.308 & 0.846 & 2.092 & 0.869 & 2.183 \\
      \midrule
      $25\%$  & DiVE   &   1.535 & 0.303 &  -1.143 & 18.184 & 0.923 & 2.594 & 0.932 & 2.707 \\
              & QE--RE  &  -2.805 & 0.340 &  -2.332 &  8.436 & 0.908 & 2.546 & 0.921 & 2.657 \\
              & QE--FE  &  -3.599 & 0.412 & -38.394 & 14.842 & 0.802 & 2.089 & 0.816 & 2.180 \\
      \midrule
      $50\%$  & DiVE   &   1.486 & 0.529 &   7.386 & 30.261 & 0.918 & 3.389 & 0.929 & 3.536 \\
              & QE--RE  &  -2.150 & 0.484 &   0.628 &  9.436 & 0.913 & 3.161 & 0.919 & 3.299 \\
              & QE--FE  &  -3.677 & 0.639 & -61.102 & 37.379 & 0.685 & 2.092 & 0.705 & 2.183 \\
      \midrule
      $75\%$  & DiVE   &   1.870 & 1.089 &   3.953 & 28.189 & 0.933 & 5.115 & 0.940 & 5.337 \\
              & QE--RE  &  -0.111 & 0.820 &   1.898 &  8.945 & 0.939 & 4.480 & 0.946 & 4.675 \\
              & QE--FE  &  -3.368 & 1.179 & -83.468 & 69.677 & 0.549 & 2.095 & 0.564 & 2.186 \\
      \bottomrule
    \end{tabular}
    \begin{tablenotes}[flushleft]\footnotesize
        \item \textit{Note.} Results are based on 1,000 simulation replicates per heterogeneity level.
        \%Bias, relative bias; \%MSE, relative mean-squared error; CP, empirical coverage probability of 95\% confidence intervals; AW, average width.
        $z$-based intervals use the standard normal critical value, while $t$-based intervals use quantiles from a $t$-distribution with $N\!-\!1$ degrees of freedom.
        The true pooled difference is the benchmark for point-estimate metrics; variance metrics are evaluated against method-specific analytic targets defined in Appendix~B.
    \end{tablenotes}
  \end{threeparttable}
\end{table}


\end{document}